\documentclass[12pt,oneside,reqno]{amsart}
\usepackage[foot]{amsaddr}
\usepackage[latin9]{inputenc}
\usepackage{color}
\definecolor{note_fontcolor}{rgb}{1, 0, 0}
\usepackage{textcomp}
\usepackage{mathrsfs}
\usepackage{bm}
\usepackage{pdfpages}
\usepackage{amsbsy}
\usepackage{amstext}
\usepackage{amsthm}
\usepackage{amssymb}
\usepackage{graphicx}
\usepackage{setspace}
\usepackage[authoryear]{natbib}
\onehalfspacing
\usepackage[unicode=true,
 bookmarks=false,
 breaklinks=false,pdfborder={0 0 1},backref=false,colorlinks=false]
 {hyperref}
\usepackage{caption} 
\captionsetup[table]{skip=10pt}

\makeatletter

%%%%%%%%%%%%%%%%%%%%%%%%%%%%%% LyX specific LaTeX commands.
%% Because html converters don't know tabularnewline

%% The greyedout annotation environment

%%%%%%%%%%%%%%%%%%%%%%%%%%%%%% Textclass specific LaTeX commands.
\numberwithin{equation}{section}
\numberwithin{figure}{section}
\theoremstyle{definition}
 \newtheorem{example}{\protect\examplename}
\theoremstyle{definition}
\newtheorem{prop}{\protect\propositionname}
\theoremstyle{definition}
\newtheorem{fact}{\protect\factname}
\theoremstyle{definition}
\newtheorem{defn}{\protect\definitionname}
\theoremstyle{definition} \newtheorem{thm}{\protect\theoremname}
\theoremstyle{definition}
\newtheorem{cor}{\protect\corollaryname}
\theoremstyle{definition}
\newtheorem{lem}{\protect\lemmaname}
\theoremstyle{definition}
\newtheorem{rem}{\protect\remarkname}

%\definecolor{mygray}{RGB}{200, 200, 200}

%%%%%%%%%%%%%%%%%%%%%%%%%%%%%% User specified LaTeX commands.
\usepackage{amsfonts}
\usepackage{bm}
\usepackage{setspace}

\usepackage{graphics}
\usepackage{epsfig}\usepackage{verbatim}\usepackage{bm}\usepackage{latexsym}\usepackage{url}\usepackage{rotating}

\usepackage{mathrsfs}
\usepackage{multirow}
\usepackage{array}
\usepackage{url}

\usepackage{graphicx}  
\makeatletter 
\def\paragraph{\@startsection{paragraph}{4}% 
 \z@\z@{-\fontdimen2\font}% 
{\normalfont\bfseries}} 
\makeatother

\newcommand{\SInf}{\operatorname{SI}}
\newcommand{\TI}{\operatorname{TI}}
\newcommand{\Covar}{\operatorname{Cov}}
\newcommand{\Vtext}{\operatorname{V}}

\usepackage[usenames,dvipsnames,svgnames,table]{xcolor}
\usepackage[margin=.97in]{geometry}
\hypersetup{colorlinks=true, urlcolor=MidnightBlue, citecolor=MidnightBlue, linkcolor=MidnightBlue, linkbordercolor = white}

\usepackage{tikz}
\usetikzlibrary{arrows.meta}

%%% HELPER CODE FOR DEALING WITH EXTERNAL REFERENCES
\usepackage{xr}
\makeatletter
\newcommand*{\addFileDependency}[1]{
  \typeout{(#1)}
  \@addtofilelist{#1}
  \IfFileExists{#1}{}{\typeout{No file #1.}}
}
\makeatother

\newcommand*{\myexternaldocument}[2]{
    \externaldocument[#2]{#1}
    \addFileDependency{#1.tex}
    \addFileDependency{#1.aux}
}
%%% END HELPER CODE

\myexternaldocument{appendix}{OA-}
% just to see what's happening
\listfiles

\makeatother

\providecommand{\corollaryname}{Corollary}
\providecommand{\definitionname}{Definition}
\providecommand{\examplename}{Example}
\providecommand{\factname}{Fact}
\providecommand{\lemmaname}{Lemma}
\providecommand{\propositionname}{Proposition}
\providecommand{\theoremname}{Theorem}
\providecommand{\remarkname}{Remark}

% table decimal align
\usepackage{siunitx,booktabs}

\title{Learning from Neighbors about a Changing State}
\thanks{
We gratefully acknowledge funding from Pershing Square Fund for Research on the Foundations of Human Behavior at Harvard, the Rockefeller Foundation (Dasaratha), and the National Science Foundation under grants SES-1847860 and SES-1629446 (Golub), as well as the hospitality of the economics department at the University of Pennsylvania during a key phase of this work. Hershdeep Chopra, Yu-Chi Hsieh, Yixi
	Jiang, Joey Feffer, and Rithvik Rao provided excellent research assistance.
	For valuable comments we are grateful to (in random order) Erik Madsen, Alireza Tahbaz-Salehi,	Drew Fudenberg, Matt Elliott, Nageeb Ali, Tomasz Strzalecki, Alex
	Frankel, Yair Livne, Jeroen Swinkels, Margaret Meyer, Philipp Strack, Erik Eyster,
	Michael Powell, Michael Ostrovsky, Leeat Yariv, Andrea Galeotti, Eric
	Maskin, Elliot Lipnowski, Jeff Ely, Eddie Dekel, Annie Liang, Iosif
	Pinelis, Ozan Candogan, Ariel Rubinstein, Paul Goldsmith-Pinkham, Bob Wilson, Omer Tamuz,
	Jeff Zwiebel, Matthew O. Jackson, Matthew Rabin, and Andrzej Skrzypacz. We also thank Evan Sadler for important
	conversations early in the project, David Hirshleifer for detailed
	comments on a draft, and Rohan Dutta and Kevin He for helpful discussions. The paper was greatly improved by advice from four anonymous referees and the editor, Adam Szeidl. All errors are our own.}
\author{Krishna Dasaratha}
\address[KD]{Boston University. Email: \texttt{\protect\href{mailto:krishnadasaratha\%40gmail.com}{krishnadasaratha@gmail.com}}}

\author{Benjamin Golub}
\address[BG]{Northwestern University. Email: \texttt{\protect\href{mailto:benjamin.golub\%40northwestern.edu}{benjamin.golub@northwestern.edu}}}

\author{Nir Hak}
\address[NH]{Uber Technologies, Inc. Email: \texttt{\protect\href{mailto:nirhak\%40gmail.com}{nirhak@gmail.com}}}

\date{\today}

\begin{document}

\begin{abstract}
Agents learn about a changing state using private signals and their neighbors' past estimates of the state. We present a model in which Bayesian agents in equilibrium use neighbors' estimates simply by taking weighted sums with time-invariant weights. The dynamics thus parallel those of the tractable DeGroot model of learning in networks, but arise as an equilibrium outcome rather than a behavioral assumption. We examine whether information aggregation is nearly optimal as neighborhoods grow large. A key condition for this is \emph{signal diversity}: each individual's neighbors have private signals that not only contain independent information, but also have sufficiently different distributions.  Without signal diversity---e.g., if private signals are i.i.d.---learning is suboptimal in all networks and highly inefficient in some.  Turning to social influence, we find it is much more sensitive to one's signal quality than to one's number of neighbors, in contrast to standard models with exogenous updating rules.

\end{abstract}
\maketitle

\thispagestyle{empty}
\setcounter{page}{0} \clearpage

\section{Introduction}

\setcounter{page}{1}

People learn from others about conditions relevant to decisions they
have to make. In many
cases, the conditions---e.g., the state of a labor market facing workers, or the business environment relevant to an organization---are changing. Thus, welfare depends not on learning a static ``state of the world,'' but rather on staying up to date with a changing state. The phenomenon of adaptation and responsiveness to new information is central in many economic applications, including in economic development, the study of organizations, and financial decision-making. When is a group of agents successful, collectively, in adapting efficiently to a changing environment? The answers lie partly in the structure of the social networks that shape
agents' social learning opportunities. Our model is designed to analyze
how a group's adaptability is shaped by the properties of such networks,
the inflows of information into society, and the interplay of the
two.

We consider overlapping generations of agents  who are interested in
tracking an unobserved \emph{state} that evolves over time.\footnote{Cf. \citet{banerjee2004word}
and \citet{wolitzky2018learning}, with overlapping generations and a fixed state.} The state
is a Gaussian AR(1) process: somewhat persistent, but with constant innovations to learn about. Each
agent, before making a decision, engages in social learning: she observes
the actions of some members of prior generations, which reveal their
estimates of the state. The social learning opportunities are embedded in a network, in that
one's network position determines the \emph{neighborhood} of peers
whom one observes. Neighborhoods reflect geographic, cultural, organizational,
or other kinds of proximity.\footnote{\citet{sethi2019culture} argue that, even without explicit
communication costs or constraints, familiarity and shared context determine the network in which people can effectively communicate.} In addition to social information, agents
also receive private signals about the current state, with Gaussian distributions
that may also vary with network position; in particular, some agents
may receive more precise information about the state than others.

We  give some examples. When a university student begins searching
for jobs, she becomes interested in the state of the relevant
labor market (e.g., typical wages for someone like her), which naturally
vary over time. She uses her own private research (a private signal) but also learns
from the choices of others (e.g., recent graduates) who have recently
faced a similar problem. Whom she can learn from will depend on her academic specialization, dormitory, extracurricular activities, and so forth: she will predominantly observe
predecessors who are ``nearby'' in these ways.  Similarly, when a new cohort of professionals enters a firm such as
a management consultancy or law practice, they learn about the business
environment from their seniors. Who works with whom, and therefore
who learns from whom, is shaped by the structure of the organization.  Beyond heterogeneity in network position, agents differ in the precision of the private signals they can access from outside the network: for example, people with quantitative training may be better placed to learn from external statistical reports.

We now detail our three main contributions. The \emph{first} is to develop a tractable model of learning about a changing state in which Bayesian updating has a simple form: each agent forms an estimate by taking a weighted average of neighbors' estimates and her private signal. Because the environment is stationary, the weights are stationary as well.  Equilibrium behavior thus yields a dynamic paralleling the \citet{degroot1974reaching} network learning model, which is famous for its tractability, but which has been criticized for its lack of canonical foundations \citep{molavi2018theory,golub-sadler}.  The weights in agents' learning rules are endogenously
determined because, when each agent extracts information from neighbors'
estimates, the information content of those estimates depends on the
neighbors' learning rules. We characterize these weights and the distributions of behavior in a stationary equilibrium.\footnote{In Bayesian models of learning about a fixed state there is a time-dependence whereby rational updating rules depend on the time elapsed since the learning process started. Studies of such models often focus on an eventual rate of learning about a fixed state. See, for instance, \citet*{molavi2018theory} and \citet*{harel2021rational}. This time-dependence is absent in our stationary environment; equilibrium outcomes can be summarized by steady-state updating weights and error rates.} These characterizations permit the analysis of comparative statics of behavior and welfare. The model is well-behaved computationally: equilibria can be calculated quickly in networks of thousands of nodes,
which makes the framework useful for structural exercises.  Finally, the basic framework and stationary equilibrium definition extend readily to accommodate various kinds of behavioral updating rules, e.g., ones that neglect correlations in neighbors' actions.

Our \emph{second} contribution is to analyze when equilibrium learning facilitates good information aggregation. Here we have positive and negative results. The main positive finding is that Bayesian agents in equilibrium can achieve {good aggregation} under a \emph{signal diversity} condition. To formalize ``good aggregation,'' we first note that in our model social information is valuable to agents insofar as it allows them to estimate the  state before the current period; our measure of aggregation quality is the accuracy of these estimates. We say {good aggregation} occurs if each agent has an estimate nearly as precise as if she simply knew the previous state. We say \emph{signal diversity} holds if each individual has at least two different private signal precisions represented in large numbers in her neighborhood. The positive result says that signal diversity is sufficient for good aggregation, robustly across a large class of random networks (ones arising from stochastic block models satisfying certain technical conditions).  Signal diversity is valuable because it leads to
diversity of neighbors' strategies: it makes them use recent and older information differently from one another. An agent observing such neighbors, in turn, can use the diversity for statistical identification, constructing a precise  estimate of the most recent state. We illustrate this key idea in an example at the end of the introduction.

We complement the positive finding with two negative results showing that both the ``signal diversity'' and the ``Bayesian'' conditions are important. First, signal {homogeneity} turns out to be a fundamental obstruction to good aggregation. In an environment where everyone's private signals are conditionally independent and identically distributed, equilibrium aggregation is bound to be inefficient. We begin by establishing this
point in highly symmetric networks, where the failure of aggregation is shown to have severe welfare consequences, making each agent worse off by an unbounded amount relative to a world with signal diversity. A more general finding is that  in large networks with signal homogeneity, it is impossible in equilibrium to achieve accuracies
of aggregation of the same order as in our positive
result under signal diversity. One might have thought that diversity 
of neighbors' network positions (and thus learning opportunities) can substitute for diversity of their private signal distributions; our negative result shows that network diversity is a poor substitute for signal diversity.

We next show that the ``Bayesian'' condition is also important for the good aggregation result. To do this, we contrast the learning of  Bayesians best-responding to others' learning rules with that of a naive population too unsophisticated to account for correlations in neighbors' learning errors, as in some canonical behavioral learning models \citep{eyster2010naive}.\footnote{See also \citet{bala1998learning}, a seminal model of boundedly rational
learning rules in networks.} We identify a class of such models in which  information aggregation is essentially guaranteed to fall
short of good aggregation benchmarks for all agents. The deficiencies
of naive learning rules are different from and more severe than those
in similar problems with an unchanging state, where naive heuristics can aggregate information very well.\footnote{In analogous fixed-state environments
where individuals have sufficiently many observations, if everyone
uses certain simple and stationary DeGroot-style heuristics (requiring no sophistication
about correlations between neighbors' behavior), they can learn the
state quite precisely \citep*{Golub2010Naive,Jadbabaie2013NonBayesian}.
A changing state makes such imitative heuristics quite inefficient.}

Our \emph{third} contribution is to study social influence, an outcome of central importance in network theory. We define a notion of steady-state social influence\textemdash how
an idiosyncratic change in an individual's information affects others' average
behavior. This is analogous to the definition of influence in the standard
DeGroot model (where updating weights are exogenous). 
The endogenous determination of
weights makes a big difference for how the environment affects social
influence.
Relative to the DeGroot model benchmark studied in \cite*{DeMarzo2003Persuasion}, an agent's social influence is much
more sensitive to the quality of her private information. On the other hand, just as in the standard benchmark, an agent's influence is approximately proportional to her degree. 

A closing discussion makes two main points. First, some of our theoretical aggregation results use large random graphs. We perform a numerical
exercise to show that the main message about information aggregation\textemdash diversity of signal types helps learning\textemdash remains valid when we calculate equilibria on graphs reflecting real social networks.  Second,  our analysis generalizes readily to richer models of multidimensional states and signals. As one application of such a generalization, we  consider a manager 
who wishes to facilitate better learning in an organization, and ask what distributions of
expertise such a designer would prefer. Our results provide a
distinctive rationale for \emph{informational specialization} as a design feature that facilitates good information aggregation. 

\medskip{}

\paragraph*{An example}

We now present a simple example that illustrates our dynamic model,
highlights obstacles to learning that distinctively arise in a dynamic
environment, and gives a sense of some of the main forces that play
a role in our results on the quality of learning.

Consider a particular environment, with a single perfectly informed
source $S$; many media outlets $M_{1},\ldots,M_{n}$ with access
to the source as well as some independent private information; and
the general public. The public consists of many individuals who learn
only from the media outlets. We are interested in how well each member
of the public could learn by following many media outlets. More precisely,
we consider the example shown in Figure \ref{Fig:introexample} and
think of $P$ as a generic member of the large public.

\begin{figure}[th]
\centering{}\includegraphics[width=3in]{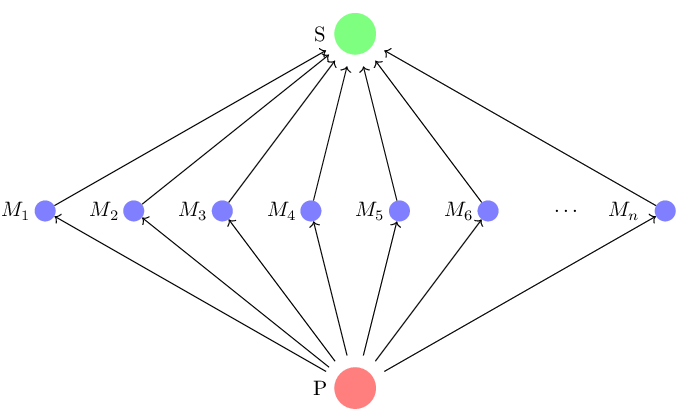}
\begin{centering}
\caption{\label{Fig:introexample}\footnotesize The network used in the ``value of diversity'' example}
\par\end{centering}
\end{figure}

The state $\theta_{t}$ follows a Gaussian random walk: $\theta_{t}=\theta_{t-1}+\nu_{t}$,
where the innovations $\nu_{t}$ are standard normal. Each period,
the source learns the state $\theta_{t}$ and takes an action (which can be thought of simply as making an announcement) that reveals it. The media outlets observe the source's action
from the previous period, which is $\theta_{t-1}$. At each time period,
they also receive noisy private signals, $s_{M_{i},t}=\theta_{t}+\eta_{M_{i},t}$
with normally distributed, independent, mean-zero errors $\eta_{M_{i},t}$.
They then announce their posterior means of $\theta_{t}$, which we
denote by $a_{M_{i},t}$. The member of the public, in a given period
$t$, makes an estimate based on the observations $a_{M_{1},t-1},\ldots,a_{M_{n},t-1}$
of media outlets' actions in the previous period. All agents are short-lived:
they see actions in their neighborhoods one period ago, and then they
take an action that reveals their posterior belief of the state.

If we had an \emph{unchanging} state  but the same signals and observation
structure, learning would trivially be perfect: media outlets would
learn the state from the source and report it to the public. In the dynamic environment, given that $P$ has no signal,
she can at best hope to learn $\theta_{t-1}$
(and use that to estimate $\theta_{t}$). Can this benchmark
be achieved, and if so, when?

A typical estimate of a media outlet at time $t$ is a linear combination
of $s_{M_{i},t}$ and $\theta_{t-1}$ (the latter being the
information that the media outlets learned from the source $S$). In particular,
the estimate of media outlet $M_i$ at time $t$ can be expressed as 
\[
a_{M_{i},t}=w_{i}s_{M_{i},t}+(1-w_{i})\theta_{t-1},
\]
where the weight $w_{i}$ on the media outlet's signal is increasing
in the precision of that signal. We give the public no private signal,
for simplicity only.

Suppose first that the media outlets have identically distributed
private signals. Because the member of the public observes many symmetric
media outlets, it turns out that her best estimate of the state, $a_{P,t}$,
is simply the average of the estimates of the media outlets. Since
each of these outlets uses the same weight $w_{i}=w$ on its private
signal, we may write

\[
a_{P,t}=w\sum_{i=1}^{n}\frac{s_{M_{i},t-1}}{n}+(1-w)\theta_{t-2}\approx w\theta_{t-1}+(1-w)\theta_{t-2}.
\]
That is, $P$'s estimate is an average of media private signals from
last period ($t-1$), combined with what the media learned from the source,
which tracks the state in the period before that ($t-2$). In the approximate
equality, we have used the fact that an average of many private signals
is approximately equal to the state, by our assumption of independent
errors. No matter how many media outlets there are, and even though
each has independent information about $\theta_{t-1}$, the public's
beliefs are confounded by older information.

What if, instead, half of the media outlets (say $M_{1},\ldots,M_{n/2}$)
have more precise private signals than the other half, perhaps because
these outlets have invested more in expertise on this topic? The
media outlets with more precise signals, called group $A$, will then place weight $w_{A}$
on their private signals, while the media outlets with less precise
signals (group $B$) use a smaller weight, $w_{B}$. We will now argue that a member
of the public can extract more information from the media in this
setting. In particular, she can first compute the averages of the
two groups' actions
\setlength{\fboxsep}{0pt}
\setlength{\fboxrule}{0pt}
\begin{align*}
\text{\fbox{%
    \parbox{1.2in}{\footnotesize \raggedright type $A$ average action at time $t-1$}}} &=w_{A}\; \;\, \sum_{i=1}^{n/2}\;\;\,\frac{s_{M_{i},t-1}}{n/2}+(1-w_{A})\theta_{t-2} \; \approx \; w_{A}\theta_{t-1}+(1-w_{A})\theta_{t-2} \\
\text{\fbox{%
    \parbox{1.2in}{\footnotesize \raggedright type $B$ average action at time $t-1$}}} &=w_{B}\sum_{i=n/2+1}^{n}\frac{s_{M_{i},t-1}}{n/2}+(1-w_{B})\theta_{t-2} \; \approx \; w_{B}\theta_{t-1}+(1-w_{B})\theta_{t-2}.
\end{align*}
Then, since $w_{A}>w_{B}$, the public knows two distinct linear combinations
of $\theta_{t-1}$ and $\theta_{t-2}$. The state $\theta_{t-1}$
is identified from these. So the member of the public can form a very
precise estimate of $\theta_{t-1}$\textemdash which, recall, is as
well as she can hope to do. The key force is that the two groups of
media outlets give different mixes of the old information and the
more recent state, and by understanding this, the public can infer
both.  Indeed, to recover $\theta_{t-1}$, the public puts a negative weight on the actions of media outlets of type $B$, which allows it to subtract off old information and focus on the recent state, $\theta_{t-1}$.  One can show that if, in contrast, agents are naive, e.g., if they think that all
of the estimates of the media are uncorrelated (or only mildly correlated) conditional
on the state, they will put positive weights on their observations
and will again be bounded from learning the state.

This illustration relied on a particular network with several special
features: a very ``central'' source, one-directional links, and no communication among
the media outlets or the public. We will show that the same considerations
determine learning quality in a large class of random networks in
which agents have many neighbors, with complex connections among them.
Quite generally, if each neighborhood contains a diversity of signal types, agents can concentrate on new developments in the state while
filtering out old, less relevant information and thus estimate the
changing state as accurately as physical constraints allow.

\subsection*{Outline}

Section \ref{sec:Model} sets up the basic model and discusses its
interpretation. Section \ref{sec:Equilibrium} defines equilibrium, shows its existence, and characterizes it. Section \ref{sec:LearningOutcomes}
reports our main theoretical results on the quality of
information aggregation. In Section \ref{sec:Importance-Anti-Im},
we discuss learning outcomes under a variety of non-Bayesian models.
Section \ref{sec:socialinfluence} defines and analyzes social influence. Section \ref{sec:Related-literature}
relates our model and results to the social learning literature. In
Section \ref{sec:Discussion-and-extensions}, we describe our numerical exercise with network data from Indian villages and discuss a simple extension to multi-dimensional states to interpret our results on signal diversity.

\section{Model\label{sec:Model}}

\smallskip{}

\paragraph{State of the world}

There is a discrete set of instants, $\mathcal{T}=\mathbb{Z}=\left\{ \ldots,-2,-1,0,1,2,\ldots\right\} .$ At each time $t\in\mathcal{T}$, there is a \emph{state}, a random
variable $\theta_{t}$ taking values in $\mathbb{R}$. This state
evolves as an AR(1) stochastic process. That is, 
\begin{equation}
\theta_{t+1}=\rho\theta_{t}+\nu_{t+1},\label{eq:ar1}
\end{equation}
where $\rho$ is a constant with $0<|\rho|\leq1$ and $\nu_{t+1}\sim\mathcal{N}(0,\sigma_{\nu}^{2})$
are independent \emph{innovations}. 
When $|\rho|<1$ we have the explicit formula
\[
\theta_{t}=\sum_{\ell=0}^{\infty}\rho^{\ell}\nu_{t-\ell},
\]
and thus  the state at any time $t$ has the stationary distribution $\theta_{t}\sim\mathcal{N}\left(0,\frac{\sigma_{\nu}^{2}}{1-\rho^{2}}\right).$
We maintain the normalization throughout  that innovations have variance $1$, i.e., $\sigma_{\nu}=1$.

We will occasionally examine an alternative specification (making our departure from the main model explicit) where there is a starting time, so that $\mathcal{T}=\mathbb{Z}_{\geq0}=\left\{ 0,1,2,\ldots\right\} $, and the state process is defined as in (\ref{eq:ar1}) starting at time $0$ with some specified distribution for $\theta_0$.

\smallskip{}

\paragraph{Information and observations}

There is a set of \emph{nodes} is $N=\left\{ 1,2,\ldots,n\right\} $. Each
node $i$ can be thought of as a location, and is associated with a set $N_{i}\subseteq N$ of nodes that
$i$ can observe, called its \emph{neighborhood}.\footnote{For all results, a node $i$'s neighborhood can, but need not, include
$i$ itself.}

Each node is populated by a sequence of \emph{agents} in overlapping
generations. For each time $t$, there is a node-$i$ agent, labeled
$(i,t)$, who takes that node's time-$t$ action $a_{i,t}$. This agent is born
at time $t-m$ and has $m$ periods to
observe the actions taken in her neighborhood before she acts. Thus, when taking
her action, the agent $(i,t)$ knows $a_{j,t-\ell}$ for all nodes
$j\in N_{i}$ and all lags $\ell\in\left\{ 1,2,\ldots,m\right\} $. We
call $m$ the \emph{memory}; it reflects how many periods of 
actions in her neighborhood  an agent passively observes before acting. (See Figure  \ref{fig:timeline} for an illustration.) One interpretation
is that a node corresponds to a role in an organization. A worker
in that role has some time to observe colleagues in related roles
before choosing a once-and-for-all action herself. Much of our analysis
is done for an arbitrary finite $m$; we view the restriction to finite
memory as useful for avoiding technical complications, but because
$m$ can be arbitrarily large, this restriction has little substantive
content.\footnote{It is worth noting that even when the memory $m$ is small, observed actions can indirectly incorporate signals from much further in the past.}

\begin{figure} 
\begin{centering}
\resizebox{6in}{!}{%
\begin{tikzpicture}
\draw[-stealth,very thick, color={rgb:red,70;green,50;blue,197}] (0, 0) -- (\textwidth, 0);
\foreach \k in {0, 1, 2}
{
\draw[color={rgb:red,70;green,50;blue,197}	,ultra thick] (\textwidth/6+\k*\textwidth/3, -\textwidth/100) -- (\textwidth/6+\k*\textwidth/3, \textwidth/100);
\node at (\textwidth/6+\k*\textwidth/3, \textwidth/20) {\small $t=\the\numexpr\k-1\relax$};
\ifnum \k=0
\node at (\textwidth/6+\k*\textwidth/3-\textwidth/10, \textwidth/20){$\ldots$};
\draw[color={rgb:red,70;green,50;blue,197},very thick] (\textwidth/6+\k*\textwidth/3-\textwidth/8, -\textwidth/18) -- (\textwidth/6+\k*\textwidth/3, -\textwidth/18);
\node[left] at (\textwidth/6+\k*\textwidth/3-\textwidth/8, -\textwidth/18) {$\cdots$};
\fill[color=lightgray	,draw={rgb:red,70;green,50;blue,197}, very thick] (\textwidth/6+\k*\textwidth/3, -\textwidth/18) circle (3.5pt) node[color=black, below, align=center, text width=4cm]{\fontsize{7pt}{7.5pt}\selectfont$(i,-1)$ observes $t=-1$ signal\linebreak and acts\par};
\fi
\ifnum \k=1
\draw[color={rgb:red,70;green,50;blue,197}, very thick] ({\textwidth/6+(\k-1)*\textwidth/3}, -\textwidth/8) -- (\textwidth/6+\k*\textwidth/3, -\textwidth/8);
\fill[color=lightgray	,draw={rgb:red,70;green,50;blue,197}, very thick] ({\textwidth/6+(\k-1)*\textwidth/3}, -\textwidth/8) circle (3.5pt) node[color=black, below, align=center, text width=4cm]{\fontsize{7pt}{7.5pt}\selectfont$(i,0)$ born and observes\linebreak $t=-1$ actions\par};
\fill[color=lightgray	,draw={rgb:red,70;green,50;blue,197}, very thick] ({\textwidth/6+\k*\textwidth/3}, -\textwidth/8) circle (3.5pt) node[color=black, below, align=center, text width=4cm]{\fontsize{7pt}{7.5pt}\selectfont$(i,0)$ observes $t=0$ signal \linebreak and acts\par};
\fi
\ifnum \k=2
\node at (\textwidth/6+\k*\textwidth/3+\textwidth/10, 0.83){$\cdots$};
\draw[color={rgb:red,70;green,50;blue,197},very thick] ({\textwidth/6+(\k-1)*\textwidth/3}, -\textwidth/5) -- (\textwidth/6+\k*\textwidth/3, -\textwidth/5);
\fill[color=lightgray,draw={rgb:red,70;green,50;blue,197}, very thick] ({\textwidth/6+(\k-1)*\textwidth/3}, -\textwidth/5) circle (3.5pt) node[color=black, below, align=center, text width=4cm]{\fontsize{7pt}{7.5pt}\selectfont$(i,1)$ born and observes\linebreak $t=0$ actions\par};
\fill[color=lightgray,draw={rgb:red,70;green,50;blue,197}, very thick] ({\textwidth/6+\k*\textwidth/3}, -\textwidth/5) circle (3.5pt) node[color=black, below, align=center, text width=4cm]{\fontsize{7pt}{7.5pt}\selectfont$(i,1)$ observes $t=1$ signal\linebreak and acts\par};
\fi
}
\end{tikzpicture}
}
\par\end{centering}
\begin{centering}
\caption{\footnotesize An illustration, with memory $m=1$, of the overlapping generations at a node $i$. At time $t-1$, agent $(i,t)$ is born and observes contemporaneous actions in her neighborhood.  At time $t$, she observes her private signal $s_{i,t}$ and takes her action $a_{i,t}$. \label{fig:timeline}}
\par\end{centering}
\end{figure}

In addition to social information from her neighborhood, each agent also observes a private signal, 
\[
s_{i,t}=\theta_{t}+\eta_{i,t},
\]
where the error term $\eta_{i,t}\sim\mathcal{N}(0,\sigma_{i}^{2})$ has a variance
$\sigma_{i}^{2}>0$ that depends on the node but not on the time period.
All the errors $\eta_{i,t}$ and state innovations $\nu_{t}$ are independent of one another.
An agent's information is a vector consisting of her private signal
and all of her social observations. An important special case will
be $m=1$, where agents observe only one period of others' actions
before acting themselves, so that the agent's information is\emph{
$(s_{i,t},(a_{j,t-1})_{j\in N_{i}})$.}

The \emph{network} $G=(N,E)$ is
the set of nodes $N$ together with the (fixed) set of \emph{links
$E$}, defined as the subset of pairs $(i,j)\in N\times N$ such that
$j\in N_{i}$. This network (also called a \emph{graph}), which determines the observation structure, is common knowledge, as is the informational
environment (i.e., the joint distribution of all exogenous random variables). 

An \emph{environment} is specified by $(G,\bm{\sigma})$, where $\bm{\sigma}=(\sigma_{i})_{i\in N}$
is the profile of signal variances.

\smallskip{}

\paragraph{Preferences and best responses}

When an agent $(i,t)$ makes her
once-and-for-all choice $a_{i,t}\in\mathbb{R}$, her utility is given by 
\begin{equation} \label{eq:utility}
u_{i,t}(a_{i,t})=-\mathbb{E}[(a_{i,t}-\theta_{t})^{2}].
\end{equation}
By a standard fact about squared-error loss functions, given the distribution
of $(\bm{a}_{N_{i},t-\ell})_{\ell=1}^{m}$, the optimal choice of agent $(i,t)$ is to set her action equal to her expectation of the state:
\begin{equation}
a_{i,t}=\mathbb{E}[\theta_{t}\mid \underbrace{(\bm{a}_{N_{i},t-\ell})_{\ell=1}^{m},s_{i,t}}_{\text{\tiny $i$'s information}}].\label{eq:BR}
\end{equation}
Here the notation $\bm{a}_{N_{i},t'}$ refers to the vector $(a_{j,t'})_{j\in N_{i}}$ of time-$t'$ actions in the agent's neighborhood.
An action can be interpreted as an agent's estimate of the state,
and we will sometimes use this terminology.

The conditional expectation (\ref{eq:BR}) depends, of course, on
the prior of agent $(i,t)$ about $\theta_{t}$, which, under correctly
specified beliefs, has distribution $\theta_{t}\sim\mathcal{N}\left(0,\frac{\sigma_{\nu}^{2}}{1-\rho^{2}}\right)$.
We allow the prior to be any normal distribution or a diffuse improper prior.\footnote{We take priors about $\theta_t$, like the information structure and network, to be
common knowledge.} It saves on notation to analyze the case where all agents have improper
priors. Because actions under a normal prior are related to actions
under the improper prior by a simple linear bijection\textemdash and
thus have the same information content for other agents\textemdash all
results immediately extend to the general case.

The doubly-infinite time axis introduces some subtleties into the definition of strategy profiles; complete details are formalized in online Appendix \ref{OA-sec:Details-of-definitions}.

\section{Updating and equilibrium\label{sec:Equilibrium}}

In this section we study agents' learning behavior and present a notion
of stationary equilibrium. We begin with the canonical case of Bayesian
agents with correct models of others' behavior; we study other behavioral
assumptions in Section \ref{sec:Importance-Anti-Im} below.

\subsection{Best-response behavior} \label{subsec:Best-Response-Weights}

The first step is to analyze optimal updating behavior in response to 
others' strategies. A strategy of an agent is \emph{linear} if the action taken is a linear
function of the variables she observes. We will analyze
agents' best responses to linear strategies, showing that they are
linear and computing them explicitly.\footnote{This analysis applies both to the main $\mathcal{T}=\mathbb{Z}$ model and the alternative with $\mathcal{T}=\mathbb{Z}_{\geq0}$. For a discussion of why it is natural to consider linear opponent strategies, see Section \ref{sec:Stationary_eqm_linear_strategies} below.}

Fix an agent $(i,t)$ and some
linear strategy profile played before time $t$.  By (\ref{eq:BR}), this agent's best-response action $a_{i,t}$ is her conditional expectation of $\theta_t$ given her information, $\mathbb{E}[\theta_{t}\mid (\bm{a}_{N_{i},t-\ell})_{\ell=1}^{m},s_{i,t}]$. Each action before time $t$ can be written as a (possibly infinite) sum of past signals $s_{j,t'}$. It follows that all random variables appearing in the conditional expectation are jointly Gaussian.  That implies that $a_{i,t}=\mathbb{E}[\theta_{t}\mid (\bm{a}_{N_{i},t-\ell})_{\ell=1}^{m},s_{i,t}]$
is an affine function of $s_{i,t}$ and $( \bm{a}_{N_{i},t-\ell})_{\ell=1}^{m}$ (see \citealt*[Section 4.3]{eaton1983multivariate}).
We now analyze this conditional expectation in detail.

\subsubsection{Actions as estimates of states: A key covariance matrix} 
Agents learn partly from past agents' actions, so the joint distribution of actions as estimates of the state at an arbitrary time $t$ will be important to track.  An arbitrary agent's scaled past action $\rho^\ell a_{i,t-\ell}$ gives an estimate of the state in the sense that $\mathbb{E}[\theta_{t} \mid \rho^\ell a_{i,t-\ell}]=\rho^\ell a_{i,t-\ell}$. Turning to second moments, define the covariance matrix of the errors in these estimates over the most recent $m$ periods: $$\bm{V}_{t} = \Covar\left(\left(\rho^{\ell}a_{i,t-\ell}-\theta_{t} \right)_{\substack{i\in N \\ 0\leq\ell\leq m-1}} \right).$$ In the case $m=1$, we have $\bm{V}_{t}=\Covar\left(\left( a_{i,t}-\theta_{t}\right)_{i\in N}\right)$. 
We will often refer to $\bm{V}_{t}$ simply as the covariance matrix of the model, as it will play a central role in our subsequent analysis.

\subsubsection{Best-response weights} \label{sec:BRweights}

The information of agent $(i,t)$ at time $t$ may be represented as a random vector $$\bm{z}_{i,t}=\left(\left(\rho^{\ell}a_{j,t-\ell}\right)_{\substack{j\in N_i \\ 1\leq\ell\leq m}}, s_{i,t} \right).$$ We will calculate the conditional expectation $\mathbb{E}[\theta_{t} \mid \bm{z}_{i,t}]$ in terms of a covariance matrix constructed from $(i,t)$'s observations,
\[
\bm{C}_{i,t-1}  = \Covar(\bm{z}_{i,t}-{\theta}_{t}\bm{1}) =\left(\begin{array}{cc}
\rho^2 \bm{V}_{N_{i},t-1} + \bm{1} \bm{1}^{\top} &  \bm{0}\\
 \bm{0}  & \sigma_{i}^{2}
\end{array}\right)\!,
\] where $\bm{V}_{N_{i},t-1}$ is the submatrix of $\bm{V}_{t-1}$ corresponding to indices in  $i$'s neighborhood.\footnote{We rewrite $\rho^{\ell}a_{j,t-\ell}-\theta_t=\rho(\rho^{\ell-1}a_{j,t-\ell}-\theta_{t-1}) - \nu_t$, where $\operatorname{Var}[\nu_t]=1$. The covariances of the term in parentheses are entries of $\bm{V}_{t-1}$. For the block structure, note the private signal errors $\eta_{i,t}$ are independent of events before $t$.} Now we have the following  formula for best-response actions, which, for simplicity, we give in the case where the agent has an improper prior\footnote{Our analysis extends immediately to any proper normal prior
for $\theta_{t}$: To get an agent's estimate of $\theta_{t}$, the
formula in (\ref{eq:action_formula}) would simply be averaged with
a constant term accounting for the prior, and everyone could invert
this deterministic operation to recover the same information from
others' actions.} about $\theta_t$.
\begin{equation}
a_{i,t}=\underbrace{\frac{\bm{1}^{\top}\bm{C}_{i,t-1}^{-1}}{\bm{1}^{\top}\bm{C}_{i,t-1}^{-1}\bm{1}}}_{\text{agent $(i,t)$'s weights}}\cdot\underbrace{\begin{pmatrix}\rho\bm{a}_{N_{i},t-1}\\
{\tiny \vdots}\\
\rho^{m}\bm{a}_{N_{i},t-m}\\
s_{i,t}
\end{pmatrix}}_{\text{agent $(i,t)$'s observations}}.\label{eq:action_formula}
\end{equation}
\noindent Expression
(\ref{eq:action_formula}) is a linear combination of the agent's
signal and the observed actions; the weights in this linear combination
depend on the matrix $\bm{V}_{t-1}$, but \emph{not} on realizations of any
random variables. Section \ref{OA-sec:details_updating} of the online Appendix gives the details of the standard calculations underlying the formula.

We denote by $(\boldsymbol{W}_{t},\boldsymbol{w}_{t}^{s})$ a \emph{weight
profile} in period $t$, with $\boldsymbol{w}_{t}^{s}\in\mathbb{R}^{n}$
being the weights agents place on their private signals and $\boldsymbol{W}_{t}$ being the weights they place on their other information.

\subsubsection{The evolution of covariance matrices under best-response behavior\label{subsec:evolution}}

Assuming agents best-respond according to the optimal weights just
described in (\ref{eq:action_formula}), we can compute the resulting
next-period covariance matrix $\bm{V}_{t}$ from the previous covariance
matrix. Letting $\mathcal{V}$ be the space of covariance matrices, this defines a map $\Phi:\mathcal{V}\to\mathcal{V}$, given
by 
\begin{equation}
\Phi:\bm{V}_{t-1}\mapsto\bm{V}_{t}.\label{eq:phi-def}
\end{equation}
This map gives the basic dynamics of the model: how an arbitrary variance-covariance
matrix $\bm{V}_{t-1}$ maps to a new one when all agents best-respond
to $\bm{V}_{t-1}$. The variance-covariance matrix $\bm{V}_{t-1}$
(along with parameters of the model) determines (i) the weights agents
place on their observations in (\ref{eq:action_formula}), and  (ii) the distributions of the random
variables that are being combined in this expression. This yields the deterministic updating
dynamic $\Phi$. A consequence is that the weights agents place on
observations are (commonly) known, and do not depend on any random
realizations.
\begin{example}
\label{exa:explicitPhi}We compute the map $\Phi$ explicitly in the
case $m=1$. We refer to the weight agent $(i,t)$ optimally places on $\rho a_{j,t-1}$
as $W_{ij}$ and the weight on $s_{i,t}$,
her private signal, as $w_{i}^{s}$. Note we have, from (\ref{eq:action_formula})
above, explicit expressions for these weights. Then 
\begin{equation}
\left[\Phi(\bm{V})\right]{}_{ii}=(w_{i}^{s})^{2}\sigma_{i}^{2}+\sum_{k,k'} W_{ik}W_{ik'}(\rho^{2}V_{kk'}+1)\quad\text{and}\quad\left[\Phi(\bm{V})\right]{}_{ij}=\sum_{k,k'} W_{ik}W_{jk'}(\rho^{2}V_{kk'}+1).\label{eq:variances-eqm-1}
\end{equation}
\end{example}

\subsection{Stationary equilibrium in linear strategies} \label{sec:Stationary_eqm_linear_strategies}

We will now turn our attention to \emph{stationary equilibria in linear
strategies}\textemdash ones in which all agents' strategies are linear
with time-invariant coefficients\textemdash though, of course, we
will allow all agents to consider deviating to arbitrary
strategies, including non-linear ones. Once we establish the existence
of such equilibria, we will use the word \emph{equilibrium} to refer
to one of these unless otherwise noted.

A reason for focusing on equilibria in linear strategies comes from
noting that, in the variant of the model with a starting time (i.e.,
the case $\mathcal{T}=\mathbb{Z}_{\geq0}$) agents begin by using
only private signals, and they do this linearly. After that, inductively
applying the reasoning of Section \ref{subsec:Best-Response-Weights},
best-responses are linear at all future times. Taking time to extend
infinitely backward is an idealization that allows us to focus on
exactly stationary behavior.

We now show the existence of stationary equilibria in linear strategies.
\begin{prop}
A stationary equilibrium in linear strategies exists, and is associated
with a covariance matrix $\widehat{\bm{V}}$ such that $\Phi(\widehat{\bm{V}})=\widehat{\bm{V}}$.
\label{prop:Existence}
\end{prop}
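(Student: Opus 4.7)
The plan is to produce a fixed point of $\Phi$ via Brouwer's theorem on a suitable compact, convex set of covariance matrices, and then read off a stationary linear equilibrium profile from the best-response formula (\ref{eq:action_formula}) evaluated at that fixed point.

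First, I would choose the domain. Let $\mathcal{V}_*$ be the set of symmetric positive semidefinite $nm\times nm$ matrices $\bm{V}$ whose $(i,\ell)$ diagonal entry is at most $M_{i,\ell}:=\rho^{2\ell}\sigma_i^2+\sum_{k=0}^{\ell-1}\rho^{2k}$. This is convex and closed, and Cauchy--Schwarz bounds the off-diagonals through the diagonals, so $\mathcal{V}_*$ is compact. Next I would verify $\Phi(\mathcal{V}_*)\subseteq\mathcal{V}_*$: for the new $\ell=0$ diagonals, agent $(i,t)$ can always ignore her neighbors and play $a_{i,t}=s_{i,t}$, so Bayesian posterior minimization gives $\mathrm{Var}(a_{i,t}-\theta_t)\le\sigma_i^2=M_{i,0}$; for $\ell\ge 1$, the identity
\[
\rho^{\ell}a_{i,t-\ell}-\theta_t=\rho\bigl(\rho^{\ell-1}a_{i,t-\ell}-\theta_{t-1}\bigr)-\nu_t
\]
together with independence of $\nu_t$ from all past data yields the recursion $(V_t)_{(i,\ell)(i,\ell)}=\rho^{2}(V_{t-1})_{(i,\ell-1)(i,\ell-1)}+1\le\rho^{2}M_{i,\ell-1}+1=M_{i,\ell}$, and an analogous recursion handles the off-diagonal entries at indices with $\ell,\ell'\ge 1$.

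Second, I would verify continuity of $\Phi$ on $\mathcal{V}_*$. On the relatively open subset where every neighborhood submatrix $\bm{V}_{N_i,t-1}$ is invertible, continuity is immediate from (\ref{eq:action_formula}), which involves only matrix inversion and bilinear pairings in the entries of $\bm{V}_{t-1}$. To extend continuity to rank-deficient boundary points, I would replace matrix inversion by the variational characterization
\[
\mathrm{Var}(a_{i,t}-\theta_t)=\min\bigl\{\boldsymbol{w}^{T}\bm{C}_{i,t-1}\boldsymbol{w}:\boldsymbol{w}^{T}\bm{1}=1\bigr\},
\]
and by analogous bilinear-form expressions for the induced off-diagonal entries of $\Phi(\bm{V})$. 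The positive $\sigma_i^2$ block in $\bm{C}_{i,t-1}$ confines the minimizing weights to a bounded set (any weight vector achieving value at most $\sigma_i^2$ has bounded norm), so Berge's maximum theorem delivers continuity of the optimal value, and hence of $\Phi$, on all of $\mathcal{V}_*$.

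Finally, Brouwer's fixed point theorem applied to the continuous self-map $\Phi:\mathcal{V}_*\to\mathcal{V}_*$ produces a matrix $\widehat{\bm{V}}$ with $\Phi(\widehat{\bm{V}})=\widehat{\bm{V}}$. Plugging $\widehat{\bm{V}}$ into (\ref{eq:action_formula}) yields a time-invariant linear weight profile $(\widehat{\bm{W}},\widehat{\bm{w}}^{s})$, and the fixed-point identity says precisely that the covariance structure generated when all agents use these weights reproduces $\widehat{\bm{V}}$; thus the induced stationary linear strategy profile is best-response to itself, i.e., an equilibrium. The hard part is the rank-deficient boundary of $\mathcal{V}_*$---where (\ref{eq:action_formula}) literally breaks down because $\bm{C}_{i,t-1}$ is singular---and the variational reformulation above is the technical device that keeps $\Phi$ continuous there, so that Brouwer applies on a genuinely compact domain rather than on the open set of nondegenerate covariance matrices.
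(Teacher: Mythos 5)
Your architecture is the same as the paper's: bound action-error variances using the fact that a Bayesian agent can always fall back on her private signal, propagate the bounds to lagged entries via the recursion $\rho^{\ell}a_{i,t-\ell}-\theta_t=\rho\bigl(\rho^{\ell-1}a_{i,t-\ell}-\theta_{t-1}\bigr)-\nu_t$, control off-diagonals by Cauchy--Schwarz, apply Brouwer on the resulting compact convex set, and read the equilibrium off the fixed point. That part is correct and matches the paper (your $\rho^{2\ell}$ bounds are in fact the right form of the recursion).

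The one place you go beyond the paper---continuity of $\Phi$ at rank-deficient covariance matrices---is where your argument breaks. The claim that ``any weight vector achieving value at most $\sigma_i^2$ has bounded norm'' is false precisely in the degenerate case you are trying to handle: the constraint $\boldsymbol{w}^{T}\bm{1}=1$ pins down only $w^s$ and the \emph{sum} of the action weights, so if $\bm{V}_{N_i,t-1}$ has a null vector $v$ with $\sum_j v_j=0$, then adding any multiple of $(v,0)$ to a feasible $\boldsymbol{w}$ changes neither the constraint nor the value $\boldsymbol{w}^{T}\bm{C}_{i,t-1}\boldsymbol{w}$, and the minimizer set is an unbounded affine subspace. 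Berge's theorem therefore does not apply as stated, and you have established continuity of neither the optimal value nor---what matters more---the off-diagonal entries of $\Phi(\bm{V})$, which depend on the induced action as a random variable and not just on its variance. A clean repair is to excise the degenerate boundary from the domain: any best response places weight at least $\sigma_i^{-2}/(\sigma_i^{-2}+1)>0$ on the conditionally independent private signal (no social estimate of $\theta_t$ can have error variance below $\operatorname{Var}(\nu_t)=1$), so every matrix in the image of $\Phi$ satisfies $\bm{V}\succeq cI$ for a uniform $c>0$ depending on $n$, $m$, and the $\sigma_i$; intersecting your $\mathcal{V}_*$ with this cone gives a compact convex $\Phi$-invariant set on which all neighborhood submatrices are uniformly invertible and (\ref{eq:action_formula}) is manifestly continuous. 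Separately, your closing sentence is too quick: the fixed-point identity only says that best responses to $\widehat{\bm{V}}$ reproduce $\widehat{\bm{V}}$ one period forward. Because time is doubly infinite, you still need the paper's final observation that playing the fixed weights $(\widehat{\bm{W}},\widehat{\bm{w}}^{s})$ in every period determines a \emph{unique} action process (actions resolve into convergent linear combinations of past private signals), whose covariance matrix must therefore coincide with $\widehat{\bm{V}}$.
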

The proof appears in Appendix \ref{sec:Existence-of-equilibrium:appendix}, and we sketch the key ideas below.

At such an equilibrium, the covariance matrix $\bm{V}_{t}$ and all
agent strategies are time-invariant. Actions are linear combinations
of observations with stationary weights, which we denote by $\widehat{W}_{ij}$
and $\widehat{w}_{i}^{s}$. The form of these rules has some resemblance
to static equilibrium notions studied in the rational expectations
literature (e.g., \citealp*{vives1993fast,babus2018trading,lambert2018strategic,sleq}). It also has a similar form to the \cite{degroot1974reaching} and \cite{friedkin1997social} updating rules, typically imposed as behavioral heuristics. In our dynamic environment, such a solution emerges
as a stationary equilibrium.

\subsubsection{Proof sketch for the existence result}

The goal is to apply the Brouwer
fixed-point theorem to show there is a covariance matrix $\widehat{\bm{V}}$
that remains unchanged under updating. To find a convex, compact set
to which we can apply the fixed-point theorem, we use the fact that
when all agents best-respond to any beliefs about prior actions, all action variances lie in a compact set of positive numbers. This is because
all agents' actions must be at least as precise in estimating $\theta_{t}$
as their private signals, and cannot be more precise than estimates
given perfect knowledge of $\theta_{t-1}$ combined with the private
signal. This establishes bounds on action variances. The Cauchy-Schwartz inequality then bounds covariances
in terms of corresponding variances. All matrices respecting these bounds constitute a compact, convex set containing the image of $\Phi$. This and the
continuity of $\Phi$ allow us to apply the Brouwer fixed-point theorem.

\subsubsection{Other remarks}
In the case of $m=1$, we can use the formula of Example \ref{exa:explicitPhi},
equation (\ref{eq:variances-eqm-1}), to write the fixed-point condition
$\Phi(\widehat{\bm{V}})=\widehat{\bm{V}}$ explicitly. More generally,
for any $m$, equation (\ref{eq:action_formula}) gives a formula in terms of $\widehat{\bm{V}}$
for the weights $\widehat{W}_{ij}$ and $\widehat{w}_{i}^{s}$ in
the best response to $\widehat{\bm{V}}$, and this can be used to describe the
equilibrium $\widehat{V}_{ij}$ as solving a system of polynomial
equations. These equations typically have large degree and cannot be solved
analytically except in very simple cases, but they can readily be
used to solve for equilibria numerically. A related feature of the model is that
standard methods can easily be applied to estimate it and test
hypotheses within it (see Appendix \ref{OA-sec:Identification-and-Testable} for details).

The main insight is that we can analyze equilibria through action
covariances. This idea applies equally to many extensions and variations of our basic model, as illustrated by two examples: (1) We assume that agents observe neighbors
perfectly, but one could define other observation structures. For
instance, agents could observe actions with noise, or they could observe
some set of linear combinations of neighbors' actions with noise. Similarly, agents could be observing predecessors' actions for heterogeneous durations before acting (i.e., node-specific $m$).
(2) We assume agents are Bayesian and best-respond to the true
distribution of actions, but the same proof would also show that equilibria
exist under other behavioral rules (see Section \ref{subsec:Naive-agents}).\footnote{What is important in the proof is that actions depend continuously
on the covariance structure of an agent's observations; the action
variances are uniformly bounded under the rule agents play; and there
is a vanishing dependence of behavior on the very distant past.}

Proposition \ref{prop:Existence} shows that there exists a stationary
linear equilibrium. We show later, as part of Proposition \ref{prop:NondiverseSignals},
that there is a \emph{unique} stationary linear equilibrium in networks having
a particular structure. In general, uniqueness of the equilibrium
is an open question that we leave for future work.\footnote{We have checked numerically that $\Phi$ is not, in general, a contraction
in various norms (entrywise sup, Euclidean operator norm,
etc.). In computing equilibria numerically for many examples, we have
not been able to find a case of equilibrium multiplicity. Indeed,
in all of our numerical examples, repeatedly applying $\Phi$ to an
initial covariance matrix converges to the same fixed point for any starting
conditions.} In Section \ref{subsec:time-starts-subsec} and Appendix \ref{OA-sec:TimeZero},
we discuss the $\mathcal{T}=\mathbb{Z}_{\geq 0}$ variant of the model, which has a unique equilibrium,
and relate it to our main model.

How much information does each agent need to play her equilibrium
strategy? In a stationary equilibrium, she only needs to know the
steady-state variance-covariance matrix $\widehat{\bm{V}}_{N_{i}}$
in her neighborhood. Then her problem of inferring $\theta_{t-1}$
becomes essentially a linear regression problem. If historical empirical
data on neighbors' error variances and covariances are available, then $\widehat{\bm{V}}_{N_{i}}$
can be estimated from such data.

\section{How good is information aggregation in equilibrium?\label{sec:LearningOutcomes}}

In this section we analyze the quality of information aggregation
in stationary equilibrium. 

First, recall that in any agent's time-$t$ decision problem, $\theta_{t-1}$ is
a sufficient statistic for social information, because the difference $\theta_{t}-\rho \theta_{t-1}$ is independent of all actions taken at or before time $t-1$. Let the \emph{social signal} of agent $(i,t)$ be defined as her estimate of $\theta_{t-1}$ based on social information: $$r_{i,t}=\mathbb{E}[\theta_{t-1}\mid(\bm{a}_{N_{i},t-\ell})_{\ell=1}^{m}].$$

We will be interested in the error in this estimate:
\begin{defn} For a given strategy profile, define the \emph{aggregation error}
$\kappa_{i,t}^{2}=\text{Var}(r_{i,t}-\theta_{t-1})$ to be the expected squared error in
the social signal as a prediction of $\theta_{t-1}$. \end{defn} 
\noindent The aggregation error measures 
how well an agent can extract information from social observations. Note that agent $i$'s aggregation
error is a monotone transformation of her expected utility.\footnote{In fact, for any decision dependent on $\theta_{t}$, an agent is
better off with a lower value of $\kappa_{i,t}^{2}$. This is a consequence
of the fact that unidimensional Gaussian signals can be Blackwell
ordered by their precision.} 

How efficient is aggregation? The environment features informational externalities: players do not
internalize the impact of their learning rules on others' learning.
Consequently, there is no reason to expect outcomes to be  efficient in any exact sense. And we have seen that the details
of equilibrium in a particular network can be complicated. However,
it turns out that much more can be said about the behavior of aggregation
errors as neighborhood sizes 
grow. In this section, we study the asymptotic efficiency of information
aggregation. We give conditions under which aggregation error decays
as quickly as physically possible, and different conditions under
which it remains far from efficient levels even when agents have arbitrarily
many observations.  We discuss the case $m=1$ for simplicity but the reasoning extends
easily to other values of $m$.

\subsection*{A benchmark lower bound on aggregation error}

A first observation is a lower bound on the aggregation
error (in terms of an asymptotic rate as a function of a node's degree) under
\emph{any} behavior of agents. This establishes a benchmark relative
to which we can assess equilibrium outcomes.

Let $d_i$ denote the out-degree of a node $i$. 
\begin{fact}
\label{fact:bound}Fix $\rho\in(-1,1)$ as well as upper and lower
bounds for private signal variances, so that $\sigma_{i}^{2}\in[\underline{\sigma}^{2},\overline{\sigma}^{2}]$
for all $i$. On any network and for all strategy profiles,
we have $\kappa_{i,t}^{2}\geq c/d_{i}$ for all $i$ and $t$, where
$c$ is a constant that depends only on $\rho,\underline{\sigma}^{2},$
and $\overline{\sigma}^{2}$.
\end{fact}
The lower bound is reminiscent of the central limit theorem: if an
agent had $d_{i}$ conditionally independent noisy signals about $\theta_{t-1}$
(e.g., by observing neighbors' private signals directly), then the
variance of her estimate would be of order $1/d_{i}$. Fact \ref{fact:bound}
notes that it is not possible for aggregation
errors to decay (as a function of degree) any faster than that.

For an intuition, imagine that an agent sees neighbors' private signals
(not just actions) one period after they are received, and all other
private signals two periods after they are received; this clearly gives
an upper bound on the quality of the agent's possible aggregation given physical communication
constraints. The information that is two periods old cannot be very
informative about $\theta_{t-1}$ because of the movement in the state
from period $t-2$ to $t-1$; a large constant number $z$ of signals
about $\theta_{t-1}$ would be better. Thus, a lower bound on aggregation
error is given by the error that could be achieved with $d_{i}+z$
independent signals about $\theta_{t-1}$ of the best possible precision
($\underline{\sigma}^{-2}$). The bound follows from these observations.

\subsection*{Outline of results: When is aggregation comparable to the benchmark?}

Fact \ref{fact:bound} places a lower bound on aggregation error given
the physical constraints. Even efficient learning could not do better
than this bound. We examine when equilibrium learning can achieve
aggregation of similar quality. More precisely, we ask when there is
a stationary equilibrium where the aggregation error at
node $i$ satisfies $\widehat{\kappa}_{i}^{2}\leq C/d_{i}$ for all
$i$, for some constant $C$. 

In Section \ref{subsec:Diverse-Signals} we establish a good-aggregation
result: outcomes comparable to the benchmark are achieved in equilibrium in a class
of networks. The key condition enabling the asymptotically efficient
equilibrium outcome is called
\emph{signal diversity}: each individual has access to enough neighbors
with multiple different kinds of private signals. The fact that neighbors
use private information differently turns out to give the agents enough
power to identify $\theta_{t-1}$ with equilibrium aggregation error
that decays at a rate matching the lower bound of Fact 1 up to a multiplicative constant. 

In Section \ref{subsec:Non-diverse}, we turn to negative results.
Without signal diversity, equilibrium aggregation can be extremely
bad. Our first negative result shows that when signals are exchangeable,
it may be that the aggregation error $\widehat{\kappa}_{i}^{2}$ does not approach zero
in any equilibrium, no matter how large neighborhoods are, though a social planner could achieve good aggregation by prescribing different updating weights. We prove
this in highly symmetric networks. Once we move away from such networks,
one might ask whether diversity in individuals' network positions
could play a role analogous to signal diversity and enable approximately
efficient learning. Our next negative result shows that this is
impossible. When signals are homogeneous and all agents' degrees in network $G_n$ are bounded by $\overline{d}(n)$ (where $\overline{d}(n)$ is any unbounded sequence)  then in any equilibrium, it cannot be that almost all aggregation errors are less than $C/\overline{d}(n)$  as the network grows, for any number $C > 0$ not depending on $n$.

\subsection{Distributions of networks and signals\label{subsec:random-setting}}

For our good-aggregation result, we study large populations and specify
two aspects of the environment: \emph{network distributions }and \emph{signal
distributions}. In terms of network distributions, we work with a
standard type of random network model\textemdash a stochastic block
model (see, e.g., \citealp*{holland1983stochastic} and \citealp*{abbe2017community}). It makes the
structure of equilibrium tractable while also allowing us to capture
rich heterogeneity in network positions. We also specify \emph{signal
distributions}: how signal precisions are allocated to agents, in
a way that may depend on network position. We now formalize these two
primitives of the model and state the assumptions we work with.

Fix a set of \emph{network types} $k\in\mathcal{K}=\left\{ 1,2,\ldots,K\right\} $.
For each pair of network types,
there is a given probability $p_{kk'}$  that each agent of network type $k$ has a
link to each agent of network type $k'$. An assumption we maintain
on these probabilities is that each network type $k$ observes at
least one network type (possibly $k$ itself) with positive probability.
There is also a vector $(\alpha_{1},\ldots,\alpha_{K})$ of \emph{population
shares} of each network type, which we assume are all positive. Jointly, $(p_{kk'})_{k,k'\in\mathcal{K}}$
and $\bm{\alpha}$ specify the network distribution. These parameters
can encode differences in expected degree and also features such as
homophily (where some groups of types are linked to each other more
densely than to others). 

We next define signal distributions, which describe the allocation
of signal variances to network types. Fix a finite set $\mathbb{S}$
of private signal variances, which we call signal types.\footnote{The assumptions of finitely many signal types and network types are
for technical convenience only, and could be relaxed.} We let $q_{k\tau}$ be the share of agents of network type $k$ with
signal type $\tau$; then $(q_{k\tau})_{k\in\mathcal{K},\tau\in\mathbb{S}}$
defines the signal distribution.

Let the nodes in network $n$ be partitioned into the network types $N_{n}^{1},N_{n}^{2},\ldots,N_{n}^{K}$,
with the cardinality $|N_{n}^{k}|$ equal to $\lfloor\alpha_{k}n\rfloor$
or $\lceil\alpha_{k}n\rceil$ (rounding so that there are $n$ agents
in the network). We (deterministically) set the signal variances $\sigma_{i}^{2}$
equal to elements of $\mathbb{S}$ in accordance with the signal shares
(again rounding as needed). Let $\left(G_{n}\right)_{n=1}^{\infty}$
be a sequence of directed or undirected random networks with these
nodes, so that $i\in N_{n}^{k}$ and $j\in N_{n}^{k'}$ are linked
with probability $p_{kk'}$; these link realizations are all independent.

In our setting, a  \emph{stochastic block model} $D$ is specified by the linking probabilities
$(p_{kk'})_{k,k'\in\mathcal{K}}$, the type shares $\bm{\alpha}$,
and the signal distribution $(q_{k\tau})_{k\in\mathcal{K},\tau\in\mathbb{S}}$.
We let $(G_{n}(D),\bm{\sigma}_{n}(D))$ denote the environment (i.e., the
network and the signal variances) in a random realization. We say that
 a network type $k$ contains a signal type $\tau$ if $q_{k\tau}>0$.
\begin{defn}
A stochastic block model satisfies \emph{signal diversity}
if each network type has a positive probability of linking with at
least one network type containing two distinct signal types.
\end{defn}

\subsection{Good aggregation under diverse signals\label{subsec:Diverse-Signals}}

Our first main result is that signal diversity is sufficient for good
aggregation in the networks described in the previous section. Aggregation
error decays at a rate $C/d_{i}$ for each node $i$  independently of
the structural properties of the network.

We first define a notion of good aggregation for an agent in terms
of a bound on that agent's aggregation error.
\begin{defn}
Given $\varepsilon>0$, we say that agent $i$ achieves the $\varepsilon$-aggregation
benchmark in a given equilibrium if the aggregation error satisfies $\widehat{\kappa}_{i}^{2}\leq\varepsilon$.
\end{defn}
We say an event (indexed by $n$) occurs \emph{asymptotically almost surely} if the probability of the event converges to $1$ as $n\rightarrow \infty$.
\begin{thm}
\label{thm:DiverseSignals}Fix any stochastic block model $D$ satisfying
signal diversity. There exists $C>0$ such that asymptotically almost
surely the environment $(G_{n}(D),\bm{\sigma}_{n}(D))$ has an equilibrium where all agents achieve the $C/n$-aggregation benchmark.
\end{thm}
So for large enough $n$, societies with signal diversity are very likely to aggregate information
very well. The uncertainty in this statement is over the network,
as there is always a small probability of a realized network which
prevents learning (e.g., an agent has no neighbors). We give an outline
of the argument next, and the proof appears in Appendix \ref{sec:Proof-of-Theorem-diverse}.

The constant $C$ in the theorem statement can depend on the stochastic block model $D$. However, given any compact set of stochastic block models $D$,
we can choose a single $C>0$ for which the result holds uniformly
across $D$.\footnote{The reason is that the distribution of aggregation errors is upper hemicontinuous in model parameters, so if the desired bounds hold for each point in a compact set, they can be made uniform.} Thus, the theorem can be applied without detailed information
on how the random graphs are generated, as long as some bounds are
known about which models are possible.

\subsubsection{Discussion of the proof} \label{sec:discussion_diversity_proof}

To give intuition for Theorem \ref{thm:DiverseSignals}, we first
describe why the theorem holds on the complete network\footnote{Note this is a special case of the stochastic block model.}
with two signal types $A$ and $B$ in the $m=1$ case. This echoes
the intuition of the example in the introduction. We then discuss
the challenges involved in generalizing the result to our general
stochastic block model networks, and the techniques we use to overcome
those challenges.

Consider a time-$t$ agent, $(i,t)$. Recall that the social signal 
$r_{i,t}$ is the optimal estimate of $\theta_{t-1}$ based on
the actions $(i,t)$ has observed in her neighborhood. In the complete
network, all agents have the same social signal, which we call $r_{t}$.\footnote{In particular, agent $(i,t)$ sees everyone's past action, including the one taken last period at the same node.}

At any equilibrium, each agent's action is a weighted average of her
private signal and this social signal.
\begin{equation}
a_{i,t}=\widehat{w}_{i}^{s}s_{i,t}+(1-\widehat{w}_{i}^{s})r_{t}.\label{eq:asr}
\end{equation}
The weights on the two random variables on the right-hand side sum to $1$ because  both $s_{i,t}$ and $r_t$ are unbiased estimates of $\theta_t$, and so is the left-hand side $a_{i,t}$.  The weight $\widehat{w}_{i}^{s}$ on the private signal depends on the precision of this signal relative to the social signal. We call the weights used by agents of the two
distinct signal types $\widehat{w}_{A}^{s}$ and $\widehat{w}_{B}^{s}$.
Suppose signal type $A$ is more precise than signal type $B$, so
that $\widehat{w}_{A}^{s}>\widehat{w}_{B}^{s}$.

Now, turning our attention to the next period of updating, observe that each time-$(t+1)$ agent can compute two averages
of the time-$t$ actions\textemdash one for each signal type. Using (\ref{eq:asr}) to rewrite $a_{i,t}$ and then plugging in $s_{i,t}=\theta_{t}+\eta_{i,t}$:
\setlength{\fboxsep}{0pt}
\setlength{\fboxrule}{0pt}
\[
\text{\fbox{%
    \parbox{1in}{\footnotesize \raggedright type $A$ average action at time $t$}}} = \frac{1}{n_{A}}\;\sum_{i:\sigma_{i}^{2}=\sigma_{A}^{2}}a_{i,t}=\widehat{w}_{A}^{s}\theta_{t}+(1-\widehat{w}_{A}^{s})r_{t}+O_p(n^{-1/2}),
\]
\[
\text{\fbox{%
    \parbox{1in}{\footnotesize \raggedright type $B$ average action at time $t$}}} = \frac{1}{n_{B}}\;\sum_{i:\sigma_{i}^{2}=\sigma_{B}^{2}}a_{i,t}=\widehat{w}_{B}^{s}\theta_{t}+(1-\widehat{w}_{B}^{s})r_{t}+O_p(n^{-1/2}).
\]
Here $n_{A}$ and $n_{B}$ denote the numbers of agents of each type (recalling we assumed each type is a positive share of the population size, $n$). 
The $O_p(n^{-1/2})$ error terms\footnote{The notation means the errors are bounded by  $Cn^{-1/2}$ for a $C>0$ with high probability \citep{janson}.} come from the average signal noises
$\eta_{i,t}$ of agents in each group; the  bound holds with high probability by the central
limit theorem. In other words, each time-($t+1$)
agent can obtain precise estimates of two different convex combinations
of $\theta_{t}$ and $r_{t}$. Because the two weights, $\widehat{w}_{A}^{s}$
and $\widehat{w}_{B}^{s}$, are distinct, she can approximately  (up to signal error) solve
for $\theta_{t}$ as a linear combination of the average actions taken by
each type she observes. It follows the agent must have an estimate at least as precise
as what she can obtain by the strategy we have described, and will
thus be very close the benchmark. Since the equilibrium in question was arbitrary, this shows that aggregation approaches the benchmark in any equilibrium. The estimator of $\theta_{t}$ in
this strategy places negative weight on $\frac{1}{n_{B}}\sum_{i:\sigma_{i}^{2}=\sigma_{B}^{2}}a_{i,t}$,
thus \emph{anti-imitating} the agents of signal type B---those with the less precise private signal. The logic of Proposition \ref{prop:asymundir} in Section \ref{sec:anti-imitation-necessary} implies that anti-imitation necessarily occurs  in any equilibrium in which agents aggregate information precisely.

To extend the ideas just presented to the more general setting of Theorem \ref{thm:DiverseSignals}, we need to show that each individual
observes a large number of neighbors of at least two signal types who also have similar
social signals. More precisely, the proof shows that agents with the
same network type have highly correlated social signals. Showing this
is much more subtle than it was in the above illustration. In general,
the social signals in an arbitrary network realization are endogenous
objects that depend to some extent on all the links.

A key insight allowing us to overcome this difficulty is a useful general fact about sufficiently dense stochastic block models: despite
a lot of idiosyncratic randomness in direct connections,
the law of large numbers implies  the number of paths of length two between any agent $i$ of type $k$ and any agent $j$ of type $k'$ going through an agent of type $k''$ is
nearly determined by the types $k$, $k'$, and $k''$, with a small relative error.\footnote{
For simplicity we first present the argument in a random graph family where the number of two-step paths is nearly deterministic. The argument extends to a larger class of models where the same property applies to longer paths, as we discuss in the next subsection.} We can leverage this to deduce some important facts about the updating map $\Phi$ (recall Section \ref{subsec:evolution})
in the realized random network, and specifically
about the evolution of social signals.

In particular, if we look at the
set of covariance matrices where all social signals are close to perfect,
we can show that the composition $\Phi^2 := \Phi \circ \Phi$ maps this set to itself. In other words, if social signals are very precise, then they will remain very precise two periods later. If the two-step path counts were determined by types exactly, it would not be too difficult to show this by elaborating the reasoning in the complete graph example, because neighbors of the same type would be effectively exchangeable. We show that despite the path counts being known only approximately, the desired conclusion holds. This is nontrivial because the weights agents use in
their updating\textemdash and thus the evolution of social signals\textemdash could depend sensitively
on realized network structure; small relative errors could matter.  A key step is to develop
results on matrix perturbations to show that small relative changes
in the network actually do not affect $\Phi^{2}$ too much. A
fixed-point theorem then implies there is a fixed point of $\Phi^{2}$
in the set of outcomes with very precise social signals. With some further analysis we can deduce that this implies the existence of an equilibrium (corresponding to a fixed point of $\Phi$)
with nearly perfect aggregation.

\subsubsection{Sparser random graphs}

In the random graphs we have defined in Section \ref{subsec:random-setting}, the group-level linking probabilities
$(p_{kk'})$ are, for simplicity, held fixed as $n$ grows. This yields
expected degrees that grow linearly in the population size, which
may not be the desired asymptotic model. We can, however, establish versions of our results in a class of models much more flexible with respect to degrees. While it is important to have neighborhoods ``large enough'' (i.e., growing in $n$) to permit
the application of laws of large numbers, their rate of growth can
be considerably slower than linear. For example, our proof can be extended directly to degrees that scale as $n^{\alpha}$ for any $\alpha>0$ to show that asymptotically almost surely, there exists an equilibrium
where the $C/n^{\alpha}$-aggregation benchmark is achieved for all
agents.
Instead of studying $\Phi^{2}$ and two-step paths, one can 
extend the same sort of analysis to the $L$-fold composition $\Phi^{L}$, which reflects $L$-step paths.  In order to do this, one uses the fact that for $L$ larger
than $1/\alpha$, the number of paths of length $L$ between any two nodes is determined by the types involved in the path with a small relative error. Elaborating the proof of the theorem above, we can then characterize the behavior of $\Phi^L$ and finally deduce the claimed aggregation property for $\Phi$.

\subsubsection{The good-aggregation outcome as a unique prediction\label{subsec:time-starts-subsec}}

The theorem above says good aggregation is supported in an equilibrium
but does not state that this is the unique equilibrium outcome. To
deal with this issue, we study the alternative model with $\mathcal{T}=\mathbb{Z}_{\geq0}$
(where agents begin with only their own signals and then best-respond
to the previous distribution of behavior at each time). We show that, as  $n\to\infty$,
its long-run outcomes get arbitrarily close to the good-aggregation
equilibrium of Theorem \ref{thm:DiverseSignals}  under the same conditions.
Thus, even if there were other equilibria of the stationary model,
they could not be approached via the natural iterative procedure coming
from the $\mathcal{T}=\mathbb{Z}_{\geq0}$ model. Formal statements and details are in Appendix \ref{OA-sec:TimeZero}.

\subsection{Aggregation under homogeneous signals\label{subsec:Non-diverse}}

Having established conditions for good aggregation under signal diversity,
we now explore what happens without signal diversity. Our general
message is that aggregation is worse.

To gain an intuition for this, note that it is essential to the argument
described in the previous subsection that different agents have different
signal precisions. Recall the complete network case. From the perspective
of an agent $(i,t+1)$, the fact that type $A$ and type $B$ neighbors
place different weights on the social signal $r_{t}$ keeps their behavior from being collinear, and allows $(i,t)$ to separate $\theta_{t}$ from a
confound. In that example, if type $A$ and $B$ agents had the same signal types, they would use the same weights, and our agent trying to learn from them would face a collinearity problem.

We begin by studying graphs having a symmetric structure and show
that learning outcomes are necessarily bounded very far from good
aggregation. We then turn to arbitrary large graphs and prove a lower bound
on aggregation error that implies the homogeneous-signals regime has,
quite generally, worse outcomes for some agents than those achieved
by everyone in our good-aggregation result.

\subsubsection{Aggregation in networks with symmetric neighbors\label{subsec:Graphs-with-symmetric}}
\begin{defn}
A network $G$ has \emph{symmetric neighbors} if, whenever  $j,j'\in N_{i}$ for some $i$, then $N_{j}=N_{j'}$.
\end{defn}
\noindent In the undirected case, the graphs with symmetric neighbors are the
complete network and complete bipartite networks.\footnote{These are both special cases of our stochastic block model from Section
\ref{subsec:Diverse-Signals}, so Theorem  \ref{thm:DiverseSignals} applies to these network structures when signal diversity is satisfied.} For directed graphs, the condition allows a larger variety of networks.

\begin{prop}
\label{prop:NondiverseSignals}Consider a sequence $\left(G_{n}\right)_{n=1}^{\infty}$ of strongly
connected graphs with symmetric neighbors. Assume that all signal
variances are equal, and that $m=1$. Then there is a unique equilibrium on each $G_n$, and there exists an $\varepsilon>0$ 
such that the $\varepsilon$-aggregation benchmark is not achieved
by any agent $i$ at this equilibrium for any $n$.
\end{prop}
All agents have non-vanishing aggregation errors at the unique equilibrium.
So all agents learn poorly compared to the diverse signals case. The
proof of this proposition, and the proofs of all subsequent results,
appear in Appendix \ref{OA-sec:remaining-proofs}.

This failure of good aggregation is not due simply to a lack of sufficient
information in the environment: On the complete graph with exchangeable
(i.e., non-diverse) signals, a social planner who set
weights for all agents could achieve $\varepsilon$-aggregation for
any $\varepsilon>0$ when $n$ is large. See Appendix \ref{OA-sec:Socially-optimal-learning}
for a formal statement, proof and numerical results.\footnote{We thank Alireza Tahbaz-Salehi for suggesting this analysis.}
In this sense, the social learning externalities are quite severe:
a small change in weights for each individual could yield a
very large benefit in a world of homogeneous signal types.

We now give intuition for Proposition~\ref{prop:NondiverseSignals}.
In a graph with symmetric neighbors and homogeneous signals, in the unique equilibrium,\footnote{The proof of the proposition establishes uniqueness by showing that
$\Phi$ is a contraction in a suitable sense.}
actions of any agent's neighbors are exchangeable. So Bayesian estimates (and thus actions) must weight all neighbors equally, which prevents
the sort of inference of the most recent state illustrated in Section \ref{sec:discussion_diversity_proof}. This is easiest to see on the complete graph, where \emph{all}
observations are exchangeable. So, in any equilibrium, each agent's
action at time $t$ is equal to a weighted average of her own signal
and the average action $\frac{1}{|N_{i}|}\sum_{j\in N_{i}}a_{j,t-1}$: 
\begin{equation}
a_{i,t}=\widehat{w}_{i}^{s}s_{i,t}+(1-\widehat{w}_{i}^{s})\frac{1}{|N_{i}|}\sum_{j\in N_{i}}a_{j,t-1}.\label{eq:asr-1}
\end{equation}
By iteratively using this equation, we can see that actions must place
substantial weight on the average of signals from, e.g., two periods
ago, and indeed further back. Note that all signals $s_{j,t'}$ at past times $t'$ take the form $\theta_{t'}+\eta_{i,t'}$. Thus, although the effect of \emph{signal errors} $\eta_{i,t'}$
vanishes (by averaging) as $n$ grows large, the correlated error from \emph{past
changes in the state} $\nu_{t'}$ never ``washes out'' of estimates, and this
is what prevents vanishing aggregation errors.

The bad-aggregation result as stated applies to exactly homogeneous signal types only. In fact, in finite networks we need sufficiently heterogeneous signals to avoid bad learning outcomes; this is illustrated in Appendix \ref{OA-subsec:Villages}. In Section \ref{sec:welfare} we discuss the welfare implications of this failure of aggregation.

As a consequence of Theorem \ref{thm:DiverseSignals} and Proposition
\ref{prop:NondiverseSignals}, we can give an example where making
one node's private information less precise helps all agents.
\begin{cor}
\label{cor:InfoCompStat}There exists a network $G$, a vector of signal precisions $\bm{\sigma}$, and an agent $i\in G$
such that increasing $\sigma_{i}^{2}$ yields a Pareto improvement
at the unique equilibrium.
\end{cor}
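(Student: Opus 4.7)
The plan is to place the corollary in the contrast between the non-diverse regime of Proposition~\ref{thm:NondiverseSignals} and the diverse regime of Theorem~\ref{thm:DiverseSignals} on a common symmetric-neighbors network, and then pass from a bulk change in signals to an isolated single-agent increase. Concretely, I would take $G$ to be the complete graph on $n$ vertices, and compare two signal configurations on it: $C_0$ in which every agent has private variance $\sigma^2$, and $C_1$ in which a fraction $1/2$ of agents has variance $\sigma^2+\delta$ for small $\delta>0$ while the rest retain $\sigma^2$. Proposition~\ref{thm:NondiverseSignals} applies to $C_0$ and yields some $\varepsilon_0>0$ with $\widehat{V}_{ii}^{C_0}\geq (1+\varepsilon_0)V_{ii}^{\text{benchmark}}$ uniformly in $n$; configuration $C_1$ is a stochastic block model (with a single network type) satisfying signal diversity, so Theorem~\ref{thm:DiverseSignals} yields an equilibrium with $\widehat{V}_{ii}^{C_1}\leq (1+\varepsilon_0/3)V_{ii}^{\text{benchmark},C_1}$ for $n$ large. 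For $\delta$ small enough the two benchmarks differ by $O(\delta)$; the inequality $\widehat{V}_{ii}^{C_1}<\widehat{V}_{ii}^{C_0}$ then holds for every agent, immediately for unchanged agents and by continuity in $\delta$ for the modified ones, using that the tolerance ratio $(1+\varepsilon_0)/(1+\varepsilon_0/3)$ is bounded strictly above $1$.

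To reduce this bulk comparison to a single-agent variance increase, I would order the modified agents $1,\ldots,n/2$ and define a chain of configurations $C_0,C_1,\ldots,C_{n/2}$, where $C_j$ has agents $1,\ldots,j$ with variance $\sigma^2+\delta$ and the rest with $\sigma^2$, so that each step $C_{j-1}\to C_j$ is a single-agent variance increase. By the symmetry of the complete graph under any $C_j$ the equilibrium covariance matrix is characterized by only a handful of scalars (a common variance within each of the two groups and pairwise covariances within and across groups). Since the endpoint of the chain strictly Pareto-dominates its start, a monotonicity or pigeonhole argument on these scalar sequences should locate an index $j^\star$ at which the single step $C_{j^\star-1}\to C_{j^\star}$ itself strictly lowers every agent's equilibrium variance; taking $G$ together with the configuration $C_{j^\star-1}$, and letting $i$ be agent $j^\star$, then exhibits the desired example.

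The main obstacle I anticipate is exactly this last step: turning the endpoint ordering into the existence of a single Pareto-improving step. In general a decreasing aggregate need not contain a single Pareto-improving step, so the argument must exploit structure specific to the symmetric setup, notably continuous dependence of the equilibrium covariance on signal variances (available via the map $\Phi$ from Proposition~\ref{prop:Existence}) and the stability of the two asymptotic regimes under small perturbations. A clean fallback, if this chain argument proves delicate, is to bypass sequencing entirely and verify by direct computation on a small, carefully chosen network (for instance a $3$- or $4$-node graph with a designated hub agent) that raising that agent's $\sigma_i^2$ Pareto-improves equilibrium variances, reading the comparative static off the explicit fixed-point system~(\ref{eq:variances-eqm}).
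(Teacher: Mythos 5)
Your first comparison (homogeneous versus half-and-half signal variances on the complete graph) is sound and is indeed the engine behind the corollary, but the reduction from that bulk comparison to a \emph{single-agent} increase is where the argument breaks, and you have correctly identified the weak point yourself. The pigeonhole step does not go through: if every agent's variance is lower at $C_{n/2}$ than at $C_0$, then for each agent there is \emph{some} step of the chain at which her variance drops, but nothing forces these steps to coincide across agents, so no single step need be a Pareto improvement. Worse, the intermediate configurations $C_j$ with $j=o(n)$ are covered by neither of the results you invoke: Theorem \ref{thm:DiverseSignals} requires each signal type to have a fixed positive population share (a share of $j/n\to 0$ falls outside its hypotheses), and Proposition \ref{thm:NondiverseSignals} requires exact homogeneity. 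So you have no control over the equilibria along the interior of the chain, and the ``monotonicity of the scalar sequences'' you would need is not supplied by anything in the paper. The fallback of a direct computation on a $3$--$4$ node network is not carried out, and it is not obvious it would succeed, since the mechanism at work (averaging away idiosyncratic signal noise across many neighbors) relies on large neighborhoods.

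The paper's proof avoids the chain entirely by making the single perturbation \emph{large} rather than marginal. Take the complete graph with common variance $\sigma^2$ chosen large enough that the limiting homogeneous equilibrium variance $V^{\infty}$ from Proposition \ref{thm:NondiverseSignals} exceeds $1$. Now send $\sigma_1^2\to\infty$: agent $1$ then plays $a_{1,t}=r_{1,t}$, i.e., her action transparently reveals the common social signal, so every other agent can subtract off that confound and recover the average of last period's private signals, converging to the benchmark variance $(1+\sigma^{-2})^{-1}$; agent $1$ herself converges to variance $1<V^{\infty}$. Continuity then gives the result for $\sigma_1^2$ finite but sufficiently large. The lesson is that the Pareto improvement in the corollary comes from one agent becoming an (almost) pure relay of the confound, not from a marginal step toward diversity; to salvage your route you would need a comparative-statics monotonicity result for the equilibrium map in $\sigma_i^2$ that the paper does not establish.
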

To prove the corollary, we consider the complete graph with homogeneous
signals and large $n$. By Proposition \ref{prop:NondiverseSignals},
all agents have non-vanishing aggregation errors. If we instead give
agent $1$ a very uninformative signal, all players can anti-imitate
agent $1$ and achieve vanishing aggregation errors. When the signals
at the initial configuration are sufficiently imprecise, this gives
a Pareto improvement. There are also examples where severing links in the observational network can yield a Pareto improvement, as reported in an earlier version of the present paper \citep*{WPversion}.

\subsubsection{Aggregation in arbitrary networks}

Section \ref{subsec:Graphs-with-symmetric} showed aggregation errors
are non-vanishing when signal endowments and neighborhoods are symmetric.
A natural question is whether asymmetry in network positions can substitute
for asymmetry in signal endowments. In Section \ref{subsec:Diverse-Signals}
the key point was that different neighbors' actions were informative
about different linear combinations of $\theta_{t}$ and older information,
and this permitted filtering. Perhaps different network positions
can achieve the same effect?

We thus move to \emph{arbitrary} networks and show a weaker but much
more general result. Consider any sequence of equilibria on any networks
with symmetric signal endowments. Our result here is that no equilibrium
achieves $C/n$-aggregation for almost all agents, no matter what
$C$ is. In particular, this implies that the rate of learning (as $n$ grows) is
slower than at the good-learning equilibrium with diversity of signal
endowments from Theorem \ref{thm:DiverseSignals}. Moreover, if degrees
are bounded above by some $\overline{d}(n)$ growing at rate slower than $n$,
we prove the stronger statement that no equilibrium achieves $C/\overline{d}(n)$-aggregation
for almost all agents.
\begin{thm}
\label{thm:NonDiverseSignalsArb}Let $C>0$. Let $(G_{n})_{n=1}^{\infty}$
be an arbitrary sequence of networks and suppose all private signals
have variance $\sigma^{2}$. If all agents' in-degrees and out-degrees
are bounded above by some $\overline{d}(n)\rightarrow\infty$, then
in any sequence of equilibria, the aggregation error $\widehat{\kappa}_{i}^{2}$ is greater than $C/\overline{d}(n)$
for a non-vanishing fraction of agents $i$.
\end{thm}
In addition to considering arbitrary networks, we allow the memory
$m$ to be an arbitrary positive integer. Because the assumptions are much
weaker, we obtain a weaker conclusion than in Proposition \ref{prop:NondiverseSignals}.
While Proposition \ref{prop:NondiverseSignals} shows that aggregation
errors are non-vanishing,  this theorem shows that aggregation errors
cannot vanish quickly, but does not rule out aggregation errors vanishing
more slowly.

The basic intuition is that to avoid putting substantial weight on
$\theta_{t-2},$ an agent at time $t$ must anti-imitate some neighbors.
If all or almost all neighbors achieve $C/n$-aggregation for some
$C$ and have identical types of private signals, there is not much diversity among neighbors. So more and more
anti-imitation is needed as $n$ grows large in the sense that the
total positive weight and total negative weight on neighbors both
grow large. But then the contribution to the agent's variance from
neighbors' private signal errors cannot vanish quickly.

We can combine Theorems \ref{thm:DiverseSignals} and \ref{thm:NonDiverseSignalsArb}
to compare the value of signal diversity and network diversity. With
diversity of signal endowments, there exists a $C>0$ such that asymptotically
almost surely there is a good-learning equilibrium achieving the $C/n$-aggregation
benchmark for all agents under the stochastic block model. With exchangeable
signals, it is not possible to find equilibria achieving the same aggregation rate in $n$ under \emph{any} sequence of networks. Thus,  Theorem \ref{thm:NonDiverseSignalsArb} shows that network heterogeneity cannot improve learning outcomes as much as signal heterogeneity.  Section \ref{sec:village_numerics_body} complements the asymptotic results with numerical results in finite networks. It shows that in our model on real-world (highly asymmetric) social networks, signal heterogeneity improves learning outcomes much more than choosing a very favorable network structure but homogeneous signals.

\subsection{The welfare loss associated with homogeneity}\label{sec:welfare}

The results derived so far in this section show that there is a qualitative difference in how well agents are able to infer recent states across the homogeneous and heterogeneous signal settings. How important is this difference for welfare? We illustrate next that the welfare loss associated with signal homogeneity can be arbitrarily severe.

To gain an intuition for this, note that with homogeneous signals, period-$t$ actions are confounded by previous states. These confounds include $\theta_{t-2}$, which all $t-1$ agents use in the same way (as illustrated in the example of the introduction). But the confounds also include $\theta_{t-3}$, which could not be filtered out by $t-1$ agents, and so forth. The more weight agents place on social information (i.e., the more informative the past is), the more severe this confounding is. If the state is highly persistent and private signals are not very precise, then the confounds from periods even very long ago are substantial. The following corollary quantifies this effect.

\begin{cor} \label{cor:welfare_loss}
Consider a complete graph with all signal variances equal to $\sigma^{2}$, and let $m=1$. Then, in any symmetric strategy profile,
$$ \text{Var}(a_{i,t}-\theta_t)  \geq \frac{(1-\widehat{w}^s)^2}{1-(1-\widehat{w}^s)^2\rho^2},$$ where $\widehat{w}^s$ is the weight agents place on their own signals. As  $\rho \to 1$ from below and $\sigma^{-2}\to 0$, agent $i$'s action error in the unique equilibrium 
tends to infinity. Moreover, this convergence is uniform in $n$.\footnote{For any $\underline{v}$, there are $\underline{\rho}<1$ and $\overline{\sigma}^{-2}>0$ such that if $\rho >\underline{\rho}$ and $\sigma^{-2}<\overline{\sigma}^{-2}$, then $\text{Var}(a_{i,t}-\theta_t)  \geq \underline{v}$ for all $n$. } 
\end{cor}

The corollary guarantees that we can choose $(\sigma^{-2},\rho)$ so that the error is arbitrarily large, uniformly in $n$. In contrast, recall that our main positive result shows that the $C/n$-aggregation benchmark would be achieved with signal heterogeneity.\footnote{For example, by making half the agents' signals strictly worse.} When this benchmark is achieved, each individual obtains a variance $\text{Var}(a_{i,t}-\theta_t)$  that is at worst $1$ if $n$ is large enough.\footnote{Note the agent can use the estimate of last period's state, which has an error of order $C/n$. If the agent simply set her action equal to this estimate, then she would achieve $\text{Var}(\theta_{t}-\theta_{t-1})= 1$, since the state innovation has variance $1$. Additionally using her private signal does strictly better than this.} This bound on variance  does not depend on $\sigma^2$ or $\rho$. Thus \emph{welfare can be arbitrarily worse in environments with signal homogeneity compared to ones with heterogeneity}.

In large complete graphs with homogeneous signals, we can explicitly characterize the limit action variance (and therefore welfare). Let $\Vtext^{\infty}$ denote the limit, as $n$ grows large,
of  $\text{Var}(a_{i,t}-\theta_{t})$. Let
$\Covar^{\infty}$ denote the limit covariance of any two
agents' errors. By direct computation using equation (\ref{eq:variances-eqm-1}), these can be seen to be related
by the following equations, which have a unique solution: 
\begin{equation}
\Vtext^{\infty}=\frac{1}{\sigma^{-2}+(\rho^{2}\Covar^{\infty}+1)^{-1}},\qquad\qquad \Covar^{\infty}=\frac{(\rho^{2} \Covar^{\infty}+1)^{-1}}{[\sigma^{-2}+(\rho^{2}\Covar^{\infty}+1)^{-1}]^{2}}.\label{eq:constants}
\end{equation}

These equations also let us extend Corollary \ref{cor:welfare_loss} beyond the complete graph. The $\Vtext^{\infty}$ and $\Covar^{\infty}$ solving (\ref{eq:constants}) describe the limits of all variances and covariances in any graph with symmetric neighbors where degrees tend
uniformly to infinity.\footnote{Indeed, it can be deduced (as in the proof of Corollary~\ref{cor:welfare_loss}) that agents' actions are equal
to an appropriately discounted sum of past $\theta_{t-\ell}$, up
to error terms (arising from $\eta_{i,t-\ell}$) that vanish asymptotically. The weights on past states are the same as in the complete-network case, which is why the characterization of (\ref{eq:constants}) applies.} As $\sigma^{-2} \rightarrow 0$ and $\rho \rightarrow 1$ from below, equations (\ref{eq:constants}) show that $\Vtext^{\infty}$  and therefore $\Covar^{\infty}$ diverge to infinity, just as in the complete-network case. This shows the welfare loss from homogeneity can also be arbitrarily severe in graphs with symmetric neighbors and large degrees.

\section{The importance of understanding correlations\label{sec:Importance-Anti-Im}}

In the positive result on achieving the $C/n$-aggregation benchmark
(Theorem \ref{thm:DiverseSignals}), a key aspect of the argument
involved agents filtering out confounding information from their neighbors'
estimates\textemdash i.e., responding in a sophisticated way to the
correlation structure of those estimates. In this section, we demonstrate
that this sort of behavior is essential for nearly perfect aggregation,
and that more naively imitative heuristics yield outcomes far from
the benchmark. Empirical studies have found evidence (depending on
the setting and the subjects) consistent with both equilibrium behavior
and naive inference in the presence of correlated observations (e.g.,
\citealp*{eyster2015experiment,dasaratha2019experiment,enke2013correlation}).

We begin with a canonical model of agents who do not account for correlations
among their neighbors' estimates conditional on the state, and show
by example that naive agents achieve much worse learning than Bayesian
agents, and thus have non-vanishing aggregation errors. We then formalize
the idea that accounting for correlations in neighbors' actions is
crucial to reaching the benchmark. This is done by demonstrating a
general lack of good aggregation by agents who use imitative strategies,
rather than filtering in a sophisticated way. Finally, we show that
even in fixed, finite networks, any positive weights chosen by optimizing
agents will be Pareto-dominated.

\subsection{Naive agents\label{subsec:Naive-agents}}

In this part we introduce agents who misunderstand the distribution
of the signals they are facing and who therefore do not update as
Bayesians with a correct understanding of their environment. We consider
a particular form of misspecification that simplifies solving for
equilibria analytically:\footnote{There are a number of possible variants of our behavioral assumption,
and it is straightforward to numerically study alternative specifications
of behavior in our model (\citealp{alatas2016network} consider one
such variant).} 
\begin{defn}
We call an agent\emph{ naive} if she believes that all neighbors choose
actions equal to their private signals and maximizes her expected
utility given these incorrect beliefs.
\end{defn}
Equivalently, a naive agent believes her neighbors all have empty
neighborhoods. This is the analogue, in our model, of ``best-response
trailing naive inference'' \citep{eyster2010naive}. So naive agents
understand that their neighbors' actions from the previous period
are estimates of $\theta_{t-1}$, but they think these are conditionally independent given the state, and that the precision of each estimate
is equal to the signal precision of the corresponding agent. They
then play their expectation of the state given this misspecified theory
of others' play.

In Figure \ref{fig:Bayesian-and-Naive}, we compare Bayesian and naive
learning outcomes. We consider a complete network with 600 agents
and $\rho=0.9$. Half of agents have signal variance $\sigma_{A}^{2}=2$,
while we vary the signal variance $\sigma_{B}^{2}$ of the remaining
agents. The figure shows the average social signal variance for the
group of agents with private signal variance $\sigma_{A}^{2}=2$.
It suggests that naive agents learn substantially worse than
rational agents, whether signals are diverse or not. We prove this
holds for general stochastic block models and provide formulas for
variances under naive learning in Appendix \ref{OA-sec:Naive-Appendix}.

\begin{figure}

\centering{}\includegraphics[scale=0.4,trim={0cm 5.5cm 0cm 5.5cm},clip]{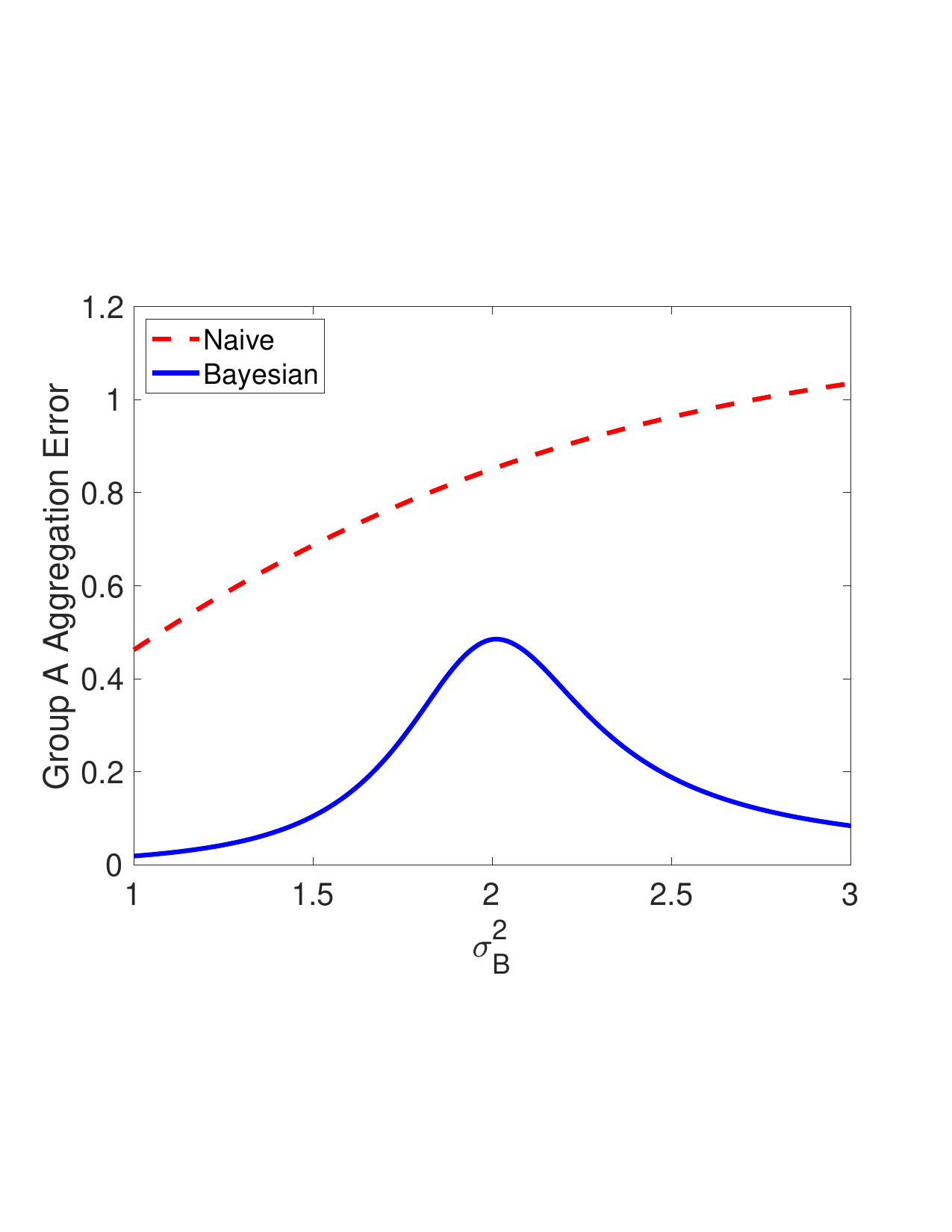}

\caption{\footnotesize Bayesian and naive learning on a complete graph and $n=600$ agents divided into two groups of equal size. The plot shows the aggregation error in group A as group B's private signal variance varies, fixing group $A$'s private signal variance at $\sigma_A^2=2$.}
\label{fig:Bayesian-and-Naive}
\end{figure}

\subsection{More general learning rules: Understanding correlation is essential
for good aggregation} \label{sec:anti-imitation-necessary}

We now show more generally that a sophisticated response to correlation
is needed to achieve vanishing aggregation errors on any sequence
of growing networks. To this end, we make the following definition:
\begin{defn}
The $\textit{steady state}$ associated with weights $\bm{W}$ and
$\bm{w}^{s}$ is the (unique) covariance matrix $\bm{V}^{*}$ such
that if actions have a variance-covariance matrix given by $\bm{V}_{t}=\bm{V}^{*}$
and next-period actions are set using weights $(\bm{W},\bm{w}^{s})$,
then $\bm{V}_{t+1}=\bm{V}^{*}$ as well.
\end{defn}
In this definition of steady state, instead of best-responding to
others' actual distributions of play, agents use exogenous weights
$\bm{W}$ in all periods.

By a straightforward application of the contraction mapping theorem,
if agents use any non-negative weights under which covariances remain
bounded at all times, there is a unique steady state.

Consider a sequence of networks $\left(G_{n}\right)_{n=1}^{\infty}$
with $n$ agents in $G_{n}$.
\begin{prop}
\label{prop:asymundir} Fix any sequence $\left(\bm{V}^{*}(n)\right)_{n=1}^{\infty}$, with each $\bm{V}^{*}(n)$ being a steady state under non-negative
weights in the network $G_{n}$. Suppose that all private signal variances are
bounded below by $\underline{\sigma}^{2}>0$ and that all agents place
weight at most $\overline{w}<1$ on their private signals. Then there
is an $\varepsilon>0$ such that, for all $n$, the $\varepsilon$-aggregation
benchmark is not achieved by any agent $i$ at the steady state $\bm{V}^{*}(n)$.
\end{prop}
The essential idea is that at time $t+1$, observed time-$t$ actions
all put weight on actions from period $t-1$, which causes $\theta_{t-1}$
to have a (positive weight) contribution to all observed actions.
Agents do not know $\theta_{t-1}$ and, with positive weights, cannot
take any linear combination that would recover it. Even with a very
large number of observations, this confound prevents agents from learning
the time-$t$ state precisely.

We now explain why we impose an assumption of all weights on private signals being bounded away from
$1$. If there were many autarkic agents who simply reported
their private signals (i.e., placed weight $1$ on these signals), some other agent could learn well without adjusting
for correlations by observing the autarkic agents. Note that in this case, all of the autarkic agents
would have non-vanishing aggregation errors. This illustrates that a weaker conclusion than that of the proposition can be established more generally.  If we did not impose a bound on
private signal weights, learning would fail in the weaker sense that \emph{some} agent must fail to achieve
the $\varepsilon$-aggregation benchmark for small enough $\varepsilon$.

On undirected networks, the proposition implies that aggregation errors
do not vanish under naive inference or under various other specifications
of non-Bayesian inference implying nonnegative weights. Moreover, the same argument shows that
in any sequence of \emph{Bayesian} equilibria on undirected networks
where all agents use positive weights, no agent can learn well.

\subsection{Without anti-imitation, outcomes are Pareto-inefficient\label{subsec:pareto}}

The previous section argued that anti-imitation is critical to achieving
vanishing aggregation errors. We now show that even in small networks,
where that benchmark is not relevant, any equilibrium without anti-imitation
is Pareto-inefficient relative to another steady state. This result
complements our asymptotic analysis by showing a different sense (relevant
for small networks) in which anti-imitation is necessary to make the
best use of information.
\begin{prop}
\label{thm:Pareto}Suppose the network $G$ is strongly connected
and some agent has more than one neighbor. Given any naive equilibrium
or any Bayesian equilibrium where all weights are positive, the action
variances at that equilibrium are Pareto-dominated by action variances
at another steady state.
\end{prop}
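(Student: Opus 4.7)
The plan is, given the equilibrium weights $(\bm{W}^0,\bm{w}^{s,0})$ with steady-state covariance $\bm{V}^0$, to exhibit alternative constant weights $(\widetilde{\bm{W}},\widetilde{\bm{w}}^s)$ whose associated steady-state covariance $\widetilde{\bm{V}}$ strictly Pareto-dominates: $\widetilde V_{ii} < V^0_{ii}$ for every agent $i$. Throughout I will use the fixed-weight one-step update, obtainable from the matrix form of (\ref{eq:variances-eqm}),
\[
T_{\bm{W},\bm{w}^s}(\bm{V}) \;=\; \bm{W}(\rho^2\bm{V}+J)\bm{W}^T + D,
\]
where $D$ is the diagonal matrix with entries $(w_i^s)^2\sigma_i^2$ and $J$ is the all-ones matrix. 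The central general fact I rely on is that $T_{\bm{W},\bm{w}^s}$ is monotone in the positive semidefinite (PSD) order, because $T(\bm{V}_1)-T(\bm{V}_2) = \rho^2\bm{W}(\bm{V}_1-\bm{V}_2)\bm{W}^T$.

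For the naive case I take $(\widetilde{\bm{W}},\widetilde{\bm{w}}^s)$ to be the Bayesian best response to $\bm{V}^0$. Because a naive agent mistakenly treats her neighbors' actions as conditionally uncorrelated, her weights differ from the Bayesian best response at every agent, so $T_{\widetilde{\bm{W}}}(\bm{V}^0)_{ii} < V^0_{ii}$ for every $i$. To convert this diagonal one-step improvement into a steady-state improvement I upgrade it to a PSD statement via the orthogonality principle: writing $\delta_i = a_{i,t}^{\mathrm{naive}} - a_{i,t}^{\mathrm{Bayes}}$ as a linear function of the common observables, each $\delta_j$ is orthogonal to the Bayesian error $a_{i,t}^{\mathrm{Bayes}}-\theta_t$, which yields
\[
\bm{V}^0 - T_{\widetilde{\bm{W}}}(\bm{V}^0) \;=\; \operatorname{Cov}\!\bigl((\delta_i)_{i\in N}\bigr) \;\succeq\; 0.
\]
PSD monotonicity then makes the iterates $T^k_{\widetilde{\bm{W}}}(\bm{V}^0)$ PSD-decreasing and convergent to the unique steady state $\widetilde{\bm{V}}$, so $\widetilde V_{ii}\leq T_{\widetilde{\bm{W}}}(\bm{V}^0)_{ii}<V^0_{ii}$.

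For the Bayesian case with all positive weights the unilateral FOC precludes a one-step improvement, so I turn to a coordinated infinitesimal perturbation. Parametrize $\bm{W}^\delta = \bm{W}^0 + \delta\bm{A}$ and $\bm{w}^{s,\delta} = \bm{w}^{s,0} + \delta\bm{b}$ subject to row unbiasedness $\bm{A}\bm{1}+\bm{b} = \bm{0}$, differentiate the fixed-point equation $\bm{V}^\delta = \bm{W}^\delta(\rho^2\bm{V}^\delta+J)(\bm{W}^\delta)^T + D^\delta$ at $\delta=0$ to obtain a Lyapunov equation for $d\bm{V}$, and observe that the Bayesian FOC kills the direct contribution to each $dV_{ii}$, leaving only the indirect term $\rho^2\bm{W}^0_{i:}\,d\bm{V}\,(\bm{W}^0_{i:})^T$. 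I would then choose $\bm{A}$ to introduce a small anti-imitating differential between two neighbors of the agent with at least two neighbors, and show that the resulting $d\bm{V}$ makes $\bm{W}^0_{i:}\,d\bm{V}\,(\bm{W}^0_{i:})^T<0$ at every $i$. Propagation across agents uses that, under strict positivity of $\bm{W}^0$ and strong connectedness of $G$, the inverse Lyapunov operator $(I-\rho^2\bm{W}^0\otimes\bm{W}^0)^{-1}$ is entrywise strictly positive.

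The hardest step I anticipate is the Bayesian case: producing a single perturbation direction that strictly decreases every diagonal entry simultaneously. If the direct Lyapunov computation does not cleanly yield coordinatewise negativity, my fallback is to work with a scalar objective such as $\operatorname{tr}(\bm{V}^\delta)$: first show that at a positive-weight Bayesian equilibrium with the two-neighbor hypothesis, the coordinated FOC for $\operatorname{tr}(\bm{V}^\delta)$ strictly differs from the uncoordinated individual FOCs (this is exactly the correlation externality between neighbors' estimates that is highlighted in Section \ref{sec:LearningOutcomes}), and then use entrywise positivity of the inverse Lyapunov operator to propagate strict scalar decrease into strict decrease at every coordinate.
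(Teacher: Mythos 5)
Your overall architecture (perturb the weights, compare the induced steady states, propagate via a monotone fixed-weight update) matches the paper's, but both of your cases have genuine gaps. In the naive case, the identity $\bm{V}^{0}-T_{\widetilde{\bm{W}}}(\bm{V}^{0})=\operatorname{Cov}\bigl((\delta_{i})_{i}\bigr)\succeq0$ does not follow from the orthogonality principle: the Bayesian error $a_{i,t}^{\mathrm{Bayes}}-\theta_{t}$ is orthogonal to agent $i$'s \emph{own} information set, while $\delta_{j}$ for $j\neq i$ lives in agent $j$'s information set, so the cross terms $\mathbb{E}\bigl[\delta_{j}\,(a_{i,t}^{\mathrm{Bayes}}-\theta_{t})\bigr]$ need not vanish and the difference matrix need not be PSD. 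Without PSD one-step dominance your PSD-monotone iteration has no starting point, and you cannot fall back on entrywise monotonicity of $T_{\widetilde{\bm{W}}}$ because the Bayesian best response to $\bm{V}^{0}$ will generically involve negative (anti-imitating) weights. The paper sidesteps exactly this by perturbing only infinitesimally \emph{toward the private signal}, which keeps all weights positive so that the fixed-weight update is entrywise monotone, and by checking directly that a naive agent (who is overconfident about her social aggregate) strictly lowers her own variance and all of her covariances by doing so.

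In the Bayesian case the decisive step is missing rather than wrong: you assert that some perturbation $\bm{A}$ yields a $d\bm{V}$ with $\bm{W}^{0}_{i:}\,d\bm{V}\,(\bm{W}^{0}_{i:})^{T}<0$ for every $i$, but you neither exhibit such a direction nor prove the sign. The direction you name --- an ``anti-imitating differential between two neighbors'' --- is not obviously the right one: the envelope theorem zeroes out the first-order effect of \emph{any} feasible perturbation on the perturbed agent's own variance, but the sign of the first-order effect on her covariances with others is what matters for everyone else, and for a generic reshuffling among neighbors that sign is ambiguous. The paper's choice makes it unambiguous: shifting agent $k$'s weight onto her private signal substitutes an independent noise source for correlated social information, so $V_{kk}$ changes only at order $\epsilon^{2}$ while every $V_{kj}$, $j\neq k$, strictly decreases at order $\epsilon$; entrywise monotonicity (positive weights) and strong connectedness then push the strict decrease to every node within diameter-many iterations. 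Your fallback via $\operatorname{tr}(\bm{V}^{\delta})$ and positivity of $(I-\rho^{2}\bm{W}^{0}\otimes\bm{W}^{0})^{-1}$ is in the right spirit --- it is essentially the paper's propagation step in Lyapunov form --- but the claim that the coordinated first-order condition strictly differs from the individual ones is again the externality you must actually compute, and the private-signal direction is the one for which that computation closes.
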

The basic argument behind Proposition \ref{thm:Pareto} is that if
agents place marginally more weight on their private signals, this
introduces more independent information that eventually benefits everyone. In a review of sequential learning experiments, \citet*{weizsacker2010we}
finds that subjects weight their private signals more heavily than
is optimal (given the empirical behavior of others they observe).
Proposition \ref{thm:Pareto} implies that in our environment with
optimizing agents, it is actually welfare-improving for individuals
to ``overweight'' their own information relative to best-response
behavior.

The condition on equilibrium weights says that no agent anti-imitates
any of her neighbors. This assumption makes the analysis tractable,
but we believe the basic force also works in finite networks with
some anti-imitation. In the proof in Appendix \ref{OA-sec:remaining-proofs}, we state and
prove a more general result where weights are non-negative but need not all arise from Bayesian or naive updating.

\medskip{}

\paragraph{Proof sketch}

The idea of the proof for the Bayesian equilibrium case is to begin at the steady
state and then marginally shift each agent's weights toward
her private signal. This means agents' actions
are less correlated but, by the envelope theorem, not significantly worse in the next period.
We show that if all agents continue using these new weights, the decreased
correlation eventually benefits everyone. To do this, we use
the absence of anti-imitation, which implies a certain monotonicity in the updating function whereby the initial decrease in correlation results in all agents' variances decreasing. 

The proof in the naive case is simpler. Here a naive agent is overconfident
about the quality of her social information, so she would benefit
from shifting some weight from her social information to her signal.
This deviation also reduces her correlation with other agents, so
it is Pareto-improving.

\section{Social influence}\label{sec:socialinfluence}

A canonical question about learning in networks is how much influence various agents have in affecting
aggregate behavior. This is a focus of studies including \citet*{DeMarzo2003Persuasion}
and \citet{Golub2010Naive} in the DeGroot model with an unchanging state. In this section, we
define a suitable analogue of social influence for our dynamic environment.
We then study how an agent's influence depends on her signal precision
and degree. We find that, relative to benchmark results from the DeGroot
model, influence is more sensitive to signal precisions, while social connectedness
plays a similar role in both models.

\subsection{Defining social influence}

We define the \emph{total influence} of node $i$ in a stationary equilibrium with weights $(\widehat{\bm{W}},\widehat{\bm{w}}^s)$ to be the total weight
that all actions place on the private signal of agent $(i,t)$. The total influence measures the total increase in actions
if $s_{i,t}$ increases by $1$ (due to an idiosyncratic
shock).\footnote{Note that in a stationary equilibrium, this depends on the node and not the time, so we speak interchangeably of the influence of a node and that of an agent at this node.} At equilibrium, the total influence of $i$ is:
\[
\TI(i)=\sum_{j\in N}\sum_{k=0}^{\infty}\left(\rho^k \bm{\widehat{W}}^k\right)_{ji}\widehat{w}_{i}^{s}.
\]
This expression for total influence is a version of Katz-Bonacich centrality
with respect to the matrix $\bm{\widehat{W}}$ of weights. The decay parameter
is the persistence $\rho$ of the of the AR(1) state process.

We define the \emph{social influence} of $i$ to be the total weight
that all actions \emph{in future periods} place on the private signal of agent $(i,t)$. At equilibrium, the social influence of $i$
is:
\[
\SInf(i)=\sum_{j\in N}\sum_{k=1}^{\infty}\left(\rho^k \bm{\widehat{W}}^k\right)_{ji}\widehat{w}_{i}^{s}=\TI(i)-\widehat{w}_{i}^{s}.
\]
The social influence measures the influence of an agent at node $i$ on other
agents. Social influence and total influence differ only by the weight $(i,t)$ places on her own current signal, because an agent's signal realization does not affect others' actions in the same period. Note that agent $i$'s social influence depends on the weight $\widehat{w}_{i}^{s}$
she places on her own signal as well as the weights agents place on
each others' actions.

The next result, which follows from Proposition \ref{prop:Existence}
on equilibrium existence, shows that the summation that defines social influence is guaranteed to
converge at equilibrium, which makes social influence (and similarly
total influence) well-defined.\footnote{Since $\widehat{\bm{W}}$ can contain both positive and negative numbers,
some of them potentially  large, it is not immediately obvious
that the summation converges.}
\begin{prop}
\label{cor:socialinfluence}The social influence $\SInf(i)$ is well-defined
at any equilibrium and is equal to $\left[\bm{1}^{\top} (\bm{I}-\rho\widehat{\bm{W}})^{-1}-\bm{1}^{\top} \right]_{i}\widehat{w}_{i}^{s}.$
\end{prop}

We show this as follows: if social influence did not converge, some
agents would have actions with very large variances (because their actions would depend sensitively on small idiosyncratic shocks). But then these
agents would have simple deviations that would improve their accuracy,
such as following their private signals. So this could not happen
in equilibrium. Once the infinite series defining social influence is shown to be convergent, the proposition follows by a standard Neumann series identity.

In general, social influence can be negative: an agent's net effect on others can be in the opposite direction of her signal. 

\subsection{Which agents are influential?}

We now ask how the social influence $\SInf(i)$ of an agent depends on
her signal precision and degree. To facilitate the most direct comparison
with standard results in models with a fixed state, such as \citet*{DeMarzo2003Persuasion}, we focus on cases where social influences are positive.

To examine the effect of signal precision on social influence, we first study
complete networks with $n\geq2$ agents and two private signal variances:
half the agents have more precise signals, and the other half have
less precise signals. We call the two groups' signal variances $\sigma_{A}^{2}$
and $\sigma_{B}^{2}$ and the corresponding agents' social influences
$\SInf(A)$ and $\SInf(B)$. We show that the ratio between the two groups'
social influences in equilibrium is larger than the ratio between their signal precisions
(whenever the imprecise group has positive social influence).
\begin{prop}
\label{prop:SIComplete}On a complete network with $m=1$ and signal
variances $\sigma_{A}^{2}<\sigma_{B}^{2}$, in the unique equilibrium it holds that
\[
\frac{\SInf(A)}{\SInf(B)}>\frac{\sigma_{A}^{-2}}{\sigma_{B}^{-2}}
\]
 whenever $\SInf(B)>0$.
\end{prop}
The proposition says that increasing a group's precision increases their influence more than proportionately. As we have seen in our main results, if the precision difference is large enough, then it is optimal to place zero or negative weight on the less precise group. The result says that even before this happens, imprecision reduces a group's influence considerably---and, as we will discuss below, more than in benchmark models of social influence. 

The proposition assumes the network is complete, but numerical evidence
suggests that on other networks, too, agents with more precise signals tend to be much more
influential. We simulate a configuration model with
$n=40$ nodes, each with degree $d=5$.\footnote{This model works by creating $n$ nodes, each with $d$ ``stubs'' sticking out of it, and then performing a random matching of the stubs to create a graph. See \citet{jackson2010social}, Section 4.5.10, for details.}
Nodes are randomly assigned
to have a precise signal with variance $\sigma_{A}^{2}$ or an imprecise
signal with variance $\sigma_{B}^{2}$ (with equal probability).

We are interested in the ratio ${\SInf(A)}/{\SInf(B)}$ in this more complicated environment. If social influence were approximately proportional to precision, then ${\SInf(A)}/{\SInf(B)}$ would be approximately ${\sigma_{A}^{-2}}/{\sigma_{B}^{-2}}$. To assess by how much the influence of the precise group exceeds the level suggested by this benchmark, we will look at the ratio 
$$R_{\sigma}=\frac{{\SInf(A)}/{\SInf(B)}}{{\sigma_{A}^{-2}}/{\sigma_{B}^{-2}}}.$$
Table
\ref{sigvartable} reports this ratio over $100$ runs of the simulation model for various pairs of
$\sigma_{A}^{2}$ and $\sigma_{B}^{2}$, each in the interval $[0.5,5]$. The entries of the table would be equal to $1$ if influence is proportional to precision. Instead, all off-diagonal entries are greater than one (or negative), meaning social influence depends more (and often
much more) on signal precision than in the proportional benchmark.

\begin{table}

{\tiny
\begin{tabular}{@{}l|l| *{10} {S[table-format=5.1]}@{}}
\toprule
 \multicolumn{10}{c}{\hspace{5cm}$\sigma_B^2$} \\
 \cline{1-12}

 & Precision & 0.5 & 1 & 1.5 & 2 & 2.5 & 3 & 3.5 & 4 & 4.5 & 5 \\ 
\cline{2-12}

 \multirow{10}{*}{$\sigma_A^2$} 
 
&0.5 & 1 & 2.14 & 4.53 & 9.06 & 168.09 & -29.69 & -15.54 & -8.22 & -7.41 & -6.19 \\ 
&1 & & 1 & 1.80 & 3.15 & 5.62 & 9.95 & 26.35 & -29.89 & -15.23 & -13.67 \\ 
&1.5 & & & 1 & 1.64 & 2.56 & 3.80 & 6.27 & 10.64 & 44.97 & -31.32 \\ 
&2 & & & & 1 & 1.51 & 2.22 & 3.13 & 4.35 & 7.05 & 11.34 \\ 
&2.5 & & & & & 1 & 1.43 & 2.04 & 2.69 & 3.88 & 5.50 \\ 
&3 &  & & & & & 1 & 1.38 & 1.86 & 2.46 & 3.35 \\ 
&3.5 & & & & & & & 1 & 1.34 & 1.74 & 2.27 \\ 
&4 & &  & & & & & & 1 & 1.30 & 1.67 \\ 
&4.5  & & & & & & & & & 1 & 1.28 \\ 
&5 & &  & & & & & & & & 1 \\ 
\bottomrule
\end{tabular}}

\caption{\footnotesize The table shows how far influence ratios are from a benchmark of being proportional to precision.  We use a configuration model with a regular network and heterogeneous
signal variances; there are $n=40$ agents and the degree is $d=5$.
Agents are randomly assigned to signal variances $\sigma_{A}^{2}$
or $\sigma_{B}^{2}$. Each entry is computed from $100$ runs with persistence $\rho=0.9$. Each table entry reports the ratio $R_{\sigma}=\frac{{\SInf(A)}/{\SInf(B)}}{{\sigma_{A}^{-2}}/{\sigma_{B}^{-2}}}$ for the precision parameters corresponding to that entry.}
\label{sigvartable}
\end{table}

Having examined how influence depends on precisions, we turn to how it depends on degrees. We again use a configuration model, which allows us to fix any desired empirical degree distribution and generate the graphs uniformly conditional on the degrees.
We will find that social influence
depends less on degree than on precision. We simulate a configuration model with $n=40$
nodes, each randomly assigned degrees $d_{A}$ or $d_{B}$ (with equal
probability of each) and with $\sigma^{2}=2$ for all agents. Table \ref{degtable}
reports the ratio $$R_{d}=\frac{{\SInf(A)}/{\SInf(B)}}{{d_A/d_B}}.$$
over $100$ runs of the simulation model for degrees between $1$
and $10$. Again, the entries would be equal to $1$ if social influence were proportional to degree. Social influence is indeed approximately proportional to degree: the entries in the table range between $0.91$ and $1.11$.

\begin{table}

{\tiny 
\begin{tabular}{@{}l|l| *{10} {S[table-format=5.1]}@{}}
\toprule
 \multicolumn{10}{c}{\hspace{5cm}$d_B$} \\ 
 \cline{1-12}

& Degree &1 & 2 & 3 & 4 & 5 & 6 & 7 & 8 & 9 & 10 \\  
\cline{2-12} 
\multirow{10}{*}{$d_A$}

& 1 & 1&  1.11 &  1.02 & 0.96 & 0.94 & 0.92 & 0.96 & 1.00 &  1.03 &  1.09 \\ 
& 2 &  &   1&  1.02 &  1.01 & 0.98 & 0.96 & 0.92 &  0.93 & 0.91 & 0.95 \\ 
& 3 & & &   1&  1.01 &  1.01 &  1.01 & 0.98 & 0.96 & 0.94 & 0.93 \\ 
& 4 & & & &   1&  1.01&  1.01 &  1.00 &  0.99 & 0.98 & 0.96 \\ 
& 5 & & & & &   1&  1.01 &  1.01 &  1.02 &  1.01 &  1.00 \\ 
& 6 & & & & & &   1&  1.01 &  1.01 &  1.01 &  1.01 \\ 
& 7 & & & & & & &   1&  1.01&  1.01&  1.02 \\ 
& 8 & & & & & & & &   1&  1.01&  1.01 \\ 
& 9 & & & & & & & & &   1&  1.01 \\ 
& 10 && & & & & & & & &   1 \\  
\bottomrule  \end{tabular}}

\caption{\footnotesize The table shows how far influence ratios are from a benchmark of being proportional to degree.  We use a configuration model with two possible degrees on $n=40$ agents with homogeneous signal variance $\sigma^2=2$.
Agents are randomly assigned to degrees $d_A$
or $d_B$. Each entry is computed from $100$ runs with persistence $\rho=0.9$.  Each table entry reports the ratio $R_{d}=\frac{{\SInf(A)}/{\SInf(B)}}{{d_A/d_B}}$  for the degree parameters corresponding to that entry. }
\label{degtable}
\end{table}

\begin{rem} A simple intuition explains why social influence depends more on private information than on network position. Increasing an agent's private signal precision and her degree both tend to make her action more accurate. Increasing private signal precision has the additional effect of increasing an agent's weight on her private information, which is recent and independent of other agents' actions. This provides more reason for others to place weight on her actions, amplifying the effect of the increased accuracy. In contrast, increasing degree tends to make an agent place more weight on her social information, which is older and more correlated with others. This countervails the effect of increased accuracy, making the agent a less appealing source for others. \label{rem:reason} \end{rem}

The exercises so far varied only one of signal precision or degree, and we now explore how social influence depends on precision and degree jointly. To do so, we compute equilibrium social influences on 5,000 networks with $n=40$ agents in each. Each agent is randomly assigned a degree chosen uniformly from $\{1,2,\hdots,7\}$ and a private signal variance chosen uniformly and independently from $\{0.5,1,\hdots,3.5\}$. Networks are then drawn via the configuration model. Figure \ref{fig:levelcurves} plots the level curves for average social influence conditional on node attributes. The steepness of the level curves shows that social influence again depends more on signal variance than degree, especially when signals are less precise.

\begin{figure}
\begin{centering}
\includegraphics[scale=0.5]{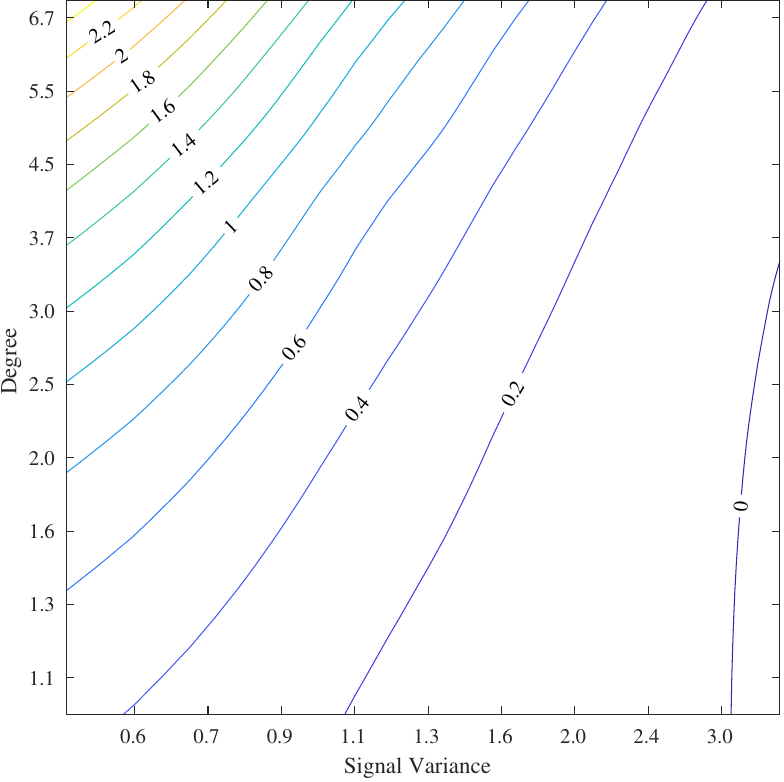}
\begin{centering}
\caption{\label{fig:levelcurves} \footnotesize Level curves for average social influence of agents in a configuration model with 5,000 networks with $n=40$ agents in each and persistence $\rho=0.9$. Degrees are chosen uniformly from $\{1,2,\hdots,7\}$ and private signal variances are chosen uniformly from $\{0.5,1,\hdots,3.5\}$. The figure shows level curves for average social influence (drawn via cubic interpolation) on a log-log plot. If proportional changes in degree and signal variances mattered equally, these level curves would have slope $1$.}
\par\end{centering}

\par\end{centering}

\end{figure}

\subsection{Comparison with a DeGroot benchmark}

The results above are interesting to compare with those of canonical network models with a fixed state. A relevant benchmark is a version of the DeGroot model studied by \citet*{DeMarzo2003Persuasion}.
Agents start with an improper prior, receive independent normal private
signals $s_{i}$ (with different precisions) about the state once, and then each takes an action
$a_{i,0}$ equal to her expectation of the state $\theta$. After this, agents
observe their neighbors' actions and take actions $a_{i,1}$, which
are Bayesian expectations of the state $\theta$ given their observations.
In all subsequent periods $t>1$, agents observe their neighbors'
actions $a_{j,t-1}$ and take actions $a_{i,t}$ as if $a_{j,t-1}$
had the same distribution as $j$'s private signal. That is,  they naively repeat their optimal strategy from the first period, which \citet{DeMarzo2003Persuasion} interpret as a quasi-Bayesian, boundedly rational procedure.

One natural measure of social influence is the influence of $s_{i}$
on the long-run consensus estimate $\lim_{t\to \infty} a_{j,t}$ held by any agent. In an undirected,
connected, and aperiodic network, this limit exists and the influence
of agent $i$ is proportional to her private signal precision $\sigma_{i}^{-2}$
and to her degree $d_{i}$. Compared to this benchmark, social influence
in our changing-state model is more sensitive to signal precision
(in the complete graph and in our simulations for configuration models).
On the other hand, the dependence of social influence on degree is very similar to the DeGroot benchmark---approximately proportional. To summarize, influence depends more on an agent's private information, while the dependence on network position is remarkably similar. The difference between the benchmark and our model  is explained in Remark \ref{rem:reason}.

\section{Related literature\label{sec:Related-literature}\label{sec:DeGroot}}

Whether decentralized communication can facilitate
efficient adaptation to a changing world is a fundamental question in economic
theory, related to questions raised by \citet{hayek1945use}\footnote{``If \ldots the economic problem of society is mainly
one of rapid adaptation to changes in the particular circumstances
of time and place \ldots there still remains the problem of communicating
to {[}each individual{]} such further information as he needs.''
Hayek's main concern was aggregation of information through markets,
but the same questions apply more generally.} and central to certain applied problems, e.g., in real business
cycle models with consumers and firms learning about evolving states.\footnote{See \citet{angeletos2010noisy} for a survey of related models that are used to study real business cycles. More recent developments include \citet{angeletos2018forward} and 
\cite{molavi2019macroeconomics}, with the latter allowing a form of misspecification.
}
Nevertheless, there is relatively little modeling of Bayesian learning
of dynamic states in the large literature on social learning and information
aggregation in networks, whose most relevant papers we now review.\footnote{For more complete surveys of different parts of this literature, see,
among others, \citet{acemoglu2011opinion}, \citet{golub-sadler},
and \citet{mossel2017opinion}. See \citet*{Moscarini1998Social}
for an early model in a binary-action environment, where it is shown
that a changing state can break information cascades.}

Play in the stationary linear equilibria of our model closely resembles
behavior in the \citet{degroot1974reaching} model, where agents update
by linearly aggregating network neighbors' past estimates, with constant
weights on neighbors over time. \citet*{DeMarzo2003Persuasion}, in
a Gaussian environment with an unchanging state, derive DeGroot learning as the Bayesian behavior in the first round of communication, and use that as a foundation for a DeGroot rule as a boundedly-rational heuristic. 
\citet*{molavi2018theory} present new bounded-rationality foundations for the DeGroot rule. Our different environment offers a different foundation for averaging rules with time-invariant weights: as a stationary equilibrium of a stationary environment.\footnote{Indeed, agents behaving according to the DeGroot heuristic in other environments might have to do with their experiences in stationary
environments where it is closer to optimal.} Though the updating rule resembles those studied in fixed-state environments, we have stressed that the learning implications are quite different. 

Several recent papers in engineering and computer science study dynamic
environments similar to ours. \citet*{Shahrampour13Online} study
an exogenous-weights version, interpreted as a set of Kalman filters
under the control of a planner. They bound
measures of welfare in terms of  the
persistence of the state process ($\rho$) and network properties, such as the spectral gap. \citet*{Frongillo2011Social}
study a $\rho=1$ model of the state. They characterize the steady-state
distribution of behavior for any weights, and calculate equilibrium weights on a complete network, which they show are inefficient. Our Proposition \ref{thm:Pareto} documents a related inefficiency;  the quality of equilibrium learning in large, incomplete networks and social influence in equilibrium are  topics not considered in these papers. In economics, \citet*{alatas2016network} perform an empirical exercise in a similar model with a quasi-Bayesian learning rule. Their estimation assumes agents ignore the correlations between social observations, similarly to our naive models.\footnote{The paper's focus is estimating parameters of social learning rules
using data from Indonesian villages, where agents are trying to estimate each other's wealth.} Our results show that the degree of rationality can be
pivotal for the outcomes of such processes, and provide foundations for structural inference to test various behavioral assumptions.

Our results about when agents learn well are related to two phenomena that have played an important role in the social learning literature. One theme in this literature is that heterogeneity---in agents' neighborhoods or preferences---can be helpful for learning. A manifestation of this is the usefulness  of \emph{sacrificial lambs} (typically studied in sequential social learning models with a fixed state): a small set of agents who observe nobody can help everyone else learn well, because their actions are then informative only about their private signals, and unconfounded by an information cascade
(\citealp*{sgroi2002optimizing}, \citealp*{arieli2019multidimensional}). Heterogeneity in preferences can serve a similar purpose: if preferences have full support, there is a positive probability that preference bias counteracts available social information, causing an agent to follow her private signal (\citealp*{goeree2006social}, \citealp*{lobel2016preferences}). A crucial difference is that our mechanism does not rely on any agents simply revealing their private signals:
heterogeneity helps by changing how neighbors use their \emph{social} information, which in turn aids an agent in inferring a common confound.\footnote{A bit farther afield, in \citet{sethi2012public}, learning outcomes when two individuals repeatedly learn from each other depend on whether their
(heterogeneous) priors are independent or correlated; the common thread is that a natural assumption about agents' attributes (independent priors in their case) leads to an identification problem. The mechanics are otherwise quite different.}

Second, a robust aspect of rational learning in sequential models
is the phenomenon of anti-imitation, as discussed, e.g., by \citet{eyster2014extensive}.
They give general conditions for fully Bayesian agents to anti-imitate
in a standard sequential model. We find that anti-imitation is also an important
feature in our dynamic model, and in our context is crucial for good
learning. Despite this similarity, there is an important contrast
between our environment and standard sequential models. In those models,
while rational agents \emph{do} prefer to anti-imitate, 
individuals and society as a whole can often obtain good outcomes using
heuristics without any anti-imitation: for instance, by combining one's
own private signal with
the information that can be inferred from a single neighbor. \citet*{AcemogluNetwork} and \citet{lobel2015information}
show that such a heuristic leads to asymptotic learning in a sequential
model. Our dynamic learning environment is different, as shown in
Proposition \ref{prop:asymundir}: to have any hope of approaching
good aggregation benchmarks, agents must respond in a sophisticated
way, with anti-imitation, to their neighbors' (correlated) estimates.

\section{Discussion and extensions\label{sec:Discussion-and-extensions}}

\subsection{Aggregation and its absence without asymptotics: Numerical results} \label{sec:village_numerics_body}

The message of Section \ref{sec:LearningOutcomes} is that signal
diversity enables good aggregation, and signal homogeneity obstructs
it. The theoretical results were asymptotic, and relied on various assumptions about network structure. It is natural to ask whether our main conclusions hold up in realistic finite networks. 
To analyze this, we numerically
study equilibria of our model on graphs reflecting social relationships measured in
Indian villages \citep*{DVN/U3BIHX_2013}.
This subsection briefly summarizes our findings; we describe the
exercise fully in Appendix \ref{OA-subsec:Villages}. %

We examine the benefits of signal heterogeneity for
equilibrium aggregation. The network data are
essentially the only empirical input to our exercise.\footnote{In particular, we have no data on signal qualities; when we introduce signal heterogeneity, we simply posit that households without electricity have worse access to external information.} Given a network, we compute equilibria using our model and parameters chosen for illustration. We compare two environments that differ in signal allocations: (i) a homogeneous
case, with all signal variances set to 2, and (ii)~a heterogeneous
case, where half of the nodes have a signal variance greater than 2 (which we vary) and
half of the nodes have a signal variance less than 2.\footnote{We choose the larger signal variance so that the average precision
in each village is $\frac{1}{2}$, which holds the total inflow of information constant in a sense made precise in the appendix.}

We first compare the value of a good network with the value of heterogeneous signals. Some networks have better learning than others even with homogeneous signals.  We define the \emph{network-driven
variation} in learning to be the standard deviation of learning quality (aggregation error) across villages in the homogeneous case. Our main finding
is that increasing the private signal variance for half of the agents
by 50\%, and reducing the signal variance of the others to keep total information constant, changes social signal error variance by 6.5 times the network-driven variation. In fact, introducing this amount of private signal heterogeneity improves learning much more than the most favorable network among the villages.

Though the asymptotic prediction changes starkly depending on whether signal precisions are identical or not, considerable heterogeneity is actually required to achieve the benefits of signal diversity in a finite network. Starting from homogeneous signals and increasing signal diversity, aggregation error changes very slightly at first. Once the variance of the less precise signal has increased by 50\% relative to the starting point, learning quality has moved about halfway to what is achievable with the most extreme signal heterogeneity.

\subsection{Multidimensional states and informational specialization\label{subsec:multidim}}

Our formal analysis assumed a one-dimensional state and one-dimensional
signals, which varied only in their precisions. Our message about
the value of diversity is, however, better interpreted in a mathematically
equivalent multidimensional model.

Consider Bayesian agents who learn and communicate about two independent
dimensions simultaneously, each one working as in our model. If all
agents have equally precise signals about both dimensions, then society
may not learn well about either of them. In contrast, if half the
agents have superior signals about one dimension and inferior signals
about the other (and the other half has the reverse), then society
can learn well about both dimensions. Thus, the designer has a strong
preference for an organization with informational specialization where
some, but not all, agents are expert in a particular dimension.\footnote{This raises important questions about what information agents would
acquire, and whom they would choose to observe, which are the focus
of a growing literature. For recent papers on this in the context
of networks, see \citet{sethi2016communication} and \citet{myatt},
among others.}

Of course, there are many familiar reasons for specialization in 
having precise information about an issue. For instance, it may be that specialization is technologically efficient, or makes it easier to provide incentives. Crucially, specialization is valuable in our
setting for a distinct reason: it helps agents with
their inference problems.

More generally, one could readily extend our model and equilibrium concept to a multi-dimensional state $\theta_t \in \mathbb{R}^d$ and arbitrary Gaussian signals about it, with flexible correlations. We would expect to find suitable generalizations of the basic message that sufficient diversity within neighborhoods (in terms of signal types) facilitates learning. The assumption that agents know neighbors' signal distributions is clearly very helpful for tractability; it would be interesting to consider models in which agents are also uncertain about these distributions.
\\ \\
\textbf{Supplementary Data}

The data and code underlying this article are available on Zenodo, at \url{https://dx.doi.org/10.5281/zenodo.6954517}.

{ {\footnotesize{}\bibliographystyle{ecta}
\bibliography{MS28075bibliogaphy}

\begin{thebibliography}{49}
\newcommand{\enquote}[1]{``#1''}
\expandafter\ifx\csname natexlab\endcsname\relax\def\natexlab#1{#1}\fi

\bibitem[\protect\citeauthoryear{Abbe}{Abbe}{2017}]{abbe2017community}
\textsc{Abbe, E.} (2017): \enquote{Community detection and stochastic block
  models: recent developments,} \emph{The Journal of Machine Learning
  Research}, 18, 6446--6531.

\bibitem[\protect\citeauthoryear{Acemoglu, Dahleh, Lobel, and
  Ozdaglar}{Acemoglu et~al.}{2011}]{AcemogluNetwork}
\textsc{Acemoglu, D., M.~Dahleh, I.~Lobel, and A.~Ozdaglar} (2011):
  \enquote{{Bayesian Learning in Social Networks},} \emph{The Review of
  Economic Studies}, 78, 1201--1236.

\bibitem[\protect\citeauthoryear{Acemoglu and Ozdaglar}{Acemoglu and
  Ozdaglar}{2011}]{acemoglu2011opinion}
\textsc{Acemoglu, D. and A.~Ozdaglar} (2011): \enquote{Opinion Dynamics and
  Learning in Social Networks,} \emph{Dynamic Games and Applications}, 1,
  3--49.

\bibitem[\protect\citeauthoryear{Alatas, Banerjee, Chandrasekhar, Hanna, and
  Olken}{Alatas et~al.}{2016}]{alatas2016network}
\textsc{Alatas, V., A.~Banerjee, A.~G. Chandrasekhar, R.~Hanna, and B.~A.
  Olken} (2016): \enquote{Network Structure and the Aggregation of Information:
  Theory and Evidence from Indonesia,} \emph{The American Economic Review},
  106, 1663--1704.

\bibitem[\protect\citeauthoryear{Angeletos and La'O}{Angeletos and
  La'O}{2010}]{angeletos2010noisy}
\textsc{Angeletos, G.-M. and J.~La'O} (2010): \enquote{Noisy Business Cycles,}
  \emph{NBER Macroeconomics Annual}, 24, 319--378.

\bibitem[\protect\citeauthoryear{Angeletos and Lian}{Angeletos and
  Lian}{2018}]{angeletos2018forward}
\textsc{Angeletos, G.-M. and C.~Lian} (2018): \enquote{Forward Guidance without
  Common Knowledge,} \emph{American Economic Review}, 108, 2477--2512.

\bibitem[\protect\citeauthoryear{Arieli and Mueller-Frank}{Arieli and
  Mueller-Frank}{2019}]{arieli2019multidimensional}
\textsc{Arieli, I. and M.~Mueller-Frank} (2019): \enquote{Multidimensional
  Social Learning,} \emph{The Review of Economic Studies}, 86, 913--940.

\bibitem[\protect\citeauthoryear{Babus and Kondor}{Babus and
  Kondor}{2018}]{babus2018trading}
\textsc{Babus, A. and P.~Kondor} (2018): \enquote{Trading and Information
  Diffusion in Over-the-Counter Markets,} \emph{Econometrica}, 86, 1727--1769.

\bibitem[\protect\citeauthoryear{Bala and Goyal}{Bala and
  Goyal}{1998}]{bala1998learning}
\textsc{Bala, V. and S.~Goyal} (1998): \enquote{Learning from Neighbours,}
  \emph{The Review of Economic Studies}, 65, 595--621.

\bibitem[\protect\citeauthoryear{Banerjee, Chandrasekhar, Duflo, and
  Jackson}{Banerjee et~al.}{2013}]{DVN/U3BIHX_2013}
\textsc{Banerjee, A., A.~G. Chandrasekhar, E.~Duflo, and M.~O. Jackson} (2013):
  \emph{{The Diffusion of Microfinance}}, Harvard Dataverse,
  \url{https://doi.org/10.7910/DVN/U3BIHX}.

\bibitem[\protect\citeauthoryear{Banerjee and Fudenberg}{Banerjee and
  Fudenberg}{2004}]{banerjee2004word}
\textsc{Banerjee, A. and D.~Fudenberg} (2004): \enquote{Word-of-mouth
  Learning,} \emph{Games and economic behavior}, 46, 1--22.

\bibitem[\protect\citeauthoryear{Dasaratha, Golub, and Hak}{Dasaratha
  et~al.}{2018}]{WPversion}
\textsc{Dasaratha, K., B.~Golub, and N.~Hak} (2018): \enquote{Bayesian Social
  Learning in a Dynamic Environment,} {arXiv:1801.02042v1}.

\bibitem[\protect\citeauthoryear{Dasaratha and He}{Dasaratha and
  He}{2021}]{dasaratha2019experiment}
\textsc{Dasaratha, K. and K.~He} (2021): \enquote{An experiment on network
  density and sequential learning,} \emph{Games and Economic Behavior}, 128,
  182--192.

\bibitem[\protect\citeauthoryear{DeGroot}{DeGroot}{1974}]{degroot1974reaching}
\textsc{DeGroot, M.~H.} (1974): \enquote{Reaching a Consensus,} \emph{Journal
  of the American Statistical Association}, 69, 118--121.

\bibitem[\protect\citeauthoryear{DeMarzo, Vayanos, and Zweibel}{DeMarzo
  et~al.}{2003}]{DeMarzo2003Persuasion}
\textsc{DeMarzo, P., D.~Vayanos, and J.~Zweibel} (2003): \enquote{Persuasion
  Bias, Social Influence, and Unidimensional Opinions,} \emph{The Quarterly
  Journal of Economics}, 118, 909--968.

\bibitem[\protect\citeauthoryear{Eaton}{Eaton}{1983}]{eaton1983multivariate}
\textsc{Eaton, M.~L.} (1983): \emph{Multivariate statistics: A vector space
  approach}, 605 THIRD AVE., NEW YORK, NY 10158, USA: John Wiley \& Sons, Inc.

\bibitem[\protect\citeauthoryear{Enke and Zimmermann}{Enke and
  Zimmermann}{2019}]{enke2013correlation}
\textsc{Enke, B. and F.~Zimmermann} (2019): \enquote{Correlation Neglect in
  Belief Formation,} \emph{The Review of Economic Studies}, 86, 313--332.

\bibitem[\protect\citeauthoryear{Eyster and Rabin}{Eyster and
  Rabin}{2010}]{eyster2010naive}
\textsc{Eyster, E. and M.~Rabin} (2010): \enquote{Naive Herding in
  Rich-information Settings,} \emph{American Economic Journal: Microeconomics},
  2, 221--243.

\bibitem[\protect\citeauthoryear{Eyster and Rabin}{Eyster and
  Rabin}{2014}]{eyster2014extensive}
---\hspace{-.1pt}---\hspace{-.1pt}--- (2014): \enquote{Extensive Imitation is
  Irrational and Harmful,} \emph{Quarterly Journal of Economics}, 129,
  1861--1898.

\bibitem[\protect\citeauthoryear{Eyster, Rabin, and Weizsacker}{Eyster
  et~al.}{2015}]{eyster2015experiment}
\textsc{Eyster, E., M.~Rabin, and G.~Weizsacker} (2015): \enquote{An experiment
  on social mislearning,} Available at SSRN 2704746.

\bibitem[\protect\citeauthoryear{Friedkin and Johnsen}{Friedkin and
  Johnsen}{1997}]{friedkin1997social}
\textsc{Friedkin, N.~E. and E.~C. Johnsen} (1997): \enquote{Social positions in
  influence networks,} \emph{Social Networks}, 19, 209--222.

\bibitem[\protect\citeauthoryear{Frongillo, Schoenebeck, and Tamuz}{Frongillo
  et~al.}{2011}]{Frongillo2011Social}
\textsc{Frongillo, R.~M., G.~Schoenebeck, and O.~Tamuz} (2011): \enquote{Social
  learning in a changing world,} in \emph{International Workshop on Internet
  and Network Economics}, Springer, 146--157.

\bibitem[\protect\citeauthoryear{Goeree, Palfrey, and Rogers}{Goeree
  et~al.}{2006}]{goeree2006social}
\textsc{Goeree, J.~K., T.~R. Palfrey, and B.~W. Rogers} (2006): \enquote{Social
  Learning with Private and Common Values,} \emph{Economic Theory}, 28,
  245--264.

\bibitem[\protect\citeauthoryear{Golub and Jackson}{Golub and
  Jackson}{2010}]{Golub2010Naive}
\textsc{Golub, B. and M.~Jackson} (2010): \enquote{Naive Learning in Social
  Networks and the Wisdom of Crowds,} \emph{American Economic Journal:
  Microeconomics}, 2, 112--149.

\bibitem[\protect\citeauthoryear{Golub and Sadler}{Golub and
  Sadler}{2016}]{golub-sadler}
\textsc{Golub, B. and E.~Sadler} (2016): \enquote{Learning in Social Networks,}
  in \emph{The Oxford Handbook of the Economics of Networks}, ed. by
  Y.~Bramoull{\'e}, A.~Galeotti, B.~Rogers, and B.~Rogers, Oxford University
  Press, chap.~19, 504--542.

\bibitem[\protect\citeauthoryear{Harel, Mossel, Strack, and Tamuz}{Harel
  et~al.}{2021}]{harel2021rational}
\textsc{Harel, M., E.~Mossel, P.~Strack, and O.~Tamuz} (2021):
  \enquote{Rational groupthink,} \emph{The Quarterly Journal of Economics},
  136, 621--668.

\bibitem[\protect\citeauthoryear{Hayek}{Hayek}{1945}]{hayek1945use}
\textsc{Hayek, F.~A.} (1945): \enquote{The Use of Knowledge in Society,}
  \emph{The American Economic Review}, 35, 519--530.

\bibitem[\protect\citeauthoryear{Holland, Laskey, and Leinhardt}{Holland
  et~al.}{1983}]{holland1983stochastic}
\textsc{Holland, P.~W., K.~B. Laskey, and S.~Leinhardt} (1983):
  \enquote{Stochastic Blockmodels: First Steps,} \emph{Social Networks}, 5,
  109--137.

\bibitem[\protect\citeauthoryear{Jackson}{Jackson}{2010}]{jackson2010social}
\textsc{Jackson, M.~O.} (2010): \emph{Social and economic networks}, Princeton
  university press.

\bibitem[\protect\citeauthoryear{Jadbabaie, Molavi, Sandroni, and
  Tahbaz-Salehi}{Jadbabaie et~al.}{2012}]{Jadbabaie2013NonBayesian}
\textsc{Jadbabaie, A., P.~Molavi, A.~Sandroni, and A.~Tahbaz-Salehi} (2012):
  \enquote{Non-Bayesian Social Learning,} \emph{Games and Economic Behavior},
  76, 210--225.

\bibitem[\protect\citeauthoryear{Janson}{Janson}{2011}]{janson}
\textsc{Janson, S.} (2011): \enquote{Probability asymptotics: notes on
  notation,} {arXiv:1108.3924}.

\bibitem[\protect\citeauthoryear{Lambert, Ostrovsky, and Panov}{Lambert
  et~al.}{2018}]{lambert2018strategic}
\textsc{Lambert, N.~S., M.~Ostrovsky, and M.~Panov} (2018): \enquote{Strategic
  Trading in Informationally Complex Environments,} \emph{Econometrica}, 86,
  1119--1157.

\bibitem[\protect\citeauthoryear{Lobel and Sadler}{Lobel and
  Sadler}{2015}]{lobel2015information}
\textsc{Lobel, I. and E.~Sadler} (2015): \enquote{Information Diffusion in
  Networks through Social Learning,} \emph{Theoretical Economics}, 10,
  807--851.

\bibitem[\protect\citeauthoryear{Lobel and Sadler}{Lobel and
  Sadler}{2016}]{lobel2016preferences}
---\hspace{-.1pt}---\hspace{-.1pt}--- (2016): \enquote{Preferences, Homophily,
  and Social Learning,} \emph{Operations Research}, 64, 564--584.

\bibitem[\protect\citeauthoryear{Molavi}{Molavi}{2019}]{molavi2019macroeconomics}
\textsc{Molavi, P.} (2019): \enquote{Macroeconomics with Learning and
  Misspecification: A General Theory and Applications,} \emph{Working Paper,
  Kellogg School of Management, Evanston, IL}.

\bibitem[\protect\citeauthoryear{Molavi, Tahbaz-Salehi, and Jadbabaie}{Molavi
  et~al.}{2018}]{molavi2018theory}
\textsc{Molavi, P., A.~Tahbaz-Salehi, and A.~Jadbabaie} (2018): \enquote{A
  Theory of Non-Bayesian Social Learning,} \emph{Econometrica}, 86, 445--490.

\bibitem[\protect\citeauthoryear{Moscarini, Ottaviani, and Smith}{Moscarini
  et~al.}{1998}]{Moscarini1998Social}
\textsc{Moscarini, G., M.~Ottaviani, and L.~Smith} (1998): \enquote{Social
  Learning in a Changing World,} \emph{Economic Theory}, 11, 657--665.

\bibitem[\protect\citeauthoryear{Mossel, Mueller-Frank, Sly, and Tamuz}{Mossel
  et~al.}{2020}]{sleq}
\textsc{Mossel, E., M.~Mueller-Frank, A.~Sly, and O.~Tamuz} (2020):
  \enquote{Social Learning Equilibria,} \emph{Econometrica}, 88, 1235--1267.

\bibitem[\protect\citeauthoryear{Mossel and Tamuz}{Mossel and
  Tamuz}{2017}]{mossel2017opinion}
\textsc{Mossel, E. and O.~Tamuz} (2017): \enquote{Opinion Exchange Dynamics,}
  \emph{Probability Surveys}, 14, 155--204.

\bibitem[\protect\citeauthoryear{Myatt and Wallace}{Myatt and
  Wallace}{2017}]{myatt}
\textsc{Myatt, D. and C.~Wallace} (2017): \enquote{Information Acquisition and
  Use by Networked Players,} University of Warwick. Department of Economics.
  CRETA Discussion Paper Series (32), available at
  http://wrap.warwick.ac.uk/90449/.

\bibitem[\protect\citeauthoryear{Pinelis}{Pinelis}{2018}]{pinelis}
\textsc{Pinelis, I.} (2018): \enquote{Inverse of Matrix with Blocks of Ones,}
  MathOverflow, {URL}:https://mathoverflow.net/q/296933 (version: 2018-04-04).

\bibitem[\protect\citeauthoryear{Sethi and Yildiz}{Sethi and
  Yildiz}{2012}]{sethi2012public}
\textsc{Sethi, R. and M.~Yildiz} (2012): \enquote{Public Disagreement,}
  \emph{American Economic Journal: Microeconomics}, 4, 57--95.

\bibitem[\protect\citeauthoryear{Sethi and Yildiz}{Sethi and
  Yildiz}{2016}]{sethi2016communication}
---\hspace{-.1pt}---\hspace{-.1pt}--- (2016): \enquote{Communication with
  Unknown Perspectives,} \emph{Econometrica}, 84, 2029--2069.

\bibitem[\protect\citeauthoryear{Sethi and Yildiz}{Sethi and
  Yildiz}{2019}]{sethi2019culture}
---\hspace{-.1pt}---\hspace{-.1pt}--- (2019): \enquote{Culture and
  communication,} \emph{Available at SSRN 3263560}.

\bibitem[\protect\citeauthoryear{Sgroi}{Sgroi}{2002}]{sgroi2002optimizing}
\textsc{Sgroi, D.} (2002): \enquote{Optimizing Information in the Herd: Guinea
  Pigs, Profits, and Welfare,} \emph{Games and Economic Behavior}, 39,
  137--166.

\bibitem[\protect\citeauthoryear{Shahrampour, Rakhlin, and
  Jadbabaie}{Shahrampour et~al.}{2013}]{Shahrampour13Online}
\textsc{Shahrampour, S., S.~Rakhlin, and A.~Jadbabaie} (2013): \enquote{Online
  Learning of Dynamic Parameters in Social Networks,} \emph{Advances in Neural
  Information Processing Systems}.

\bibitem[\protect\citeauthoryear{Vives}{Vives}{1993}]{vives1993fast}
\textsc{Vives, X.} (1993): \enquote{How Fast do Rational Agents Learn?}
  \emph{The Review of Economic Studies}, 60, 329--347.

\bibitem[\protect\citeauthoryear{Weizs{\"a}cker}{Weizs{\"a}cker}{2010}]{weizsacker2010we}
\textsc{Weizs{\"a}cker, G.} (2010): \enquote{Do We Follow Others When We
  Should? A Simple Test of Rational Expectations,} \emph{The American Economic
  Review}, 100, 2340--2360.

\bibitem[\protect\citeauthoryear{Wolitzky}{Wolitzky}{2018}]{wolitzky2018learning}
\textsc{Wolitzky, A.} (2018): \enquote{Learning from Others' Outcomes,}
  \emph{American Economic Review}, 108, 2763--2801.

\end{thebibliography}


\begin{thebibliography}{3}
\newcommand{\enquote}[1]{``#1''}
\expandafter\ifx\csname natexlab\endcsname\relax\def\natexlab#1{#1}\fi

\bibitem[\protect\citeauthoryear{Banerjee, Chandrasekhar, Duflo, and
  Jackson}{Banerjee et~al.}{2013{\natexlab{a}}}]{banerjee2013diffusion}
\textsc{Banerjee, A., A.~G. Chandrasekhar, E.~Duflo, and M.~O. Jackson}
  (2013{\natexlab{a}}): \enquote{The Diffusion of Microfinance,}
  \emph{Science}, 341, 1236498.

\bibitem[\protect\citeauthoryear{Banerjee, Chandrasekhar, Duflo, and
  Jackson}{Banerjee et~al.}{2013{\natexlab{b}}}]{DVN/U3BIHX_2013}
---\hspace{-.1pt}---\hspace{-.1pt}--- (2013{\natexlab{b}}): \emph{{The
  Diffusion of Microfinance}}, Harvard Dataverse,
  \url{https://doi.org/10.7910/DVN/U3BIHX}.

\bibitem[\protect\citeauthoryear{Manresa}{Manresa}{2013}]{manresa2013estimating}
\textsc{Manresa, E.} (2013): \enquote{Estimating the Structure of Social
  Interactions Using Panel Data,} \emph{Working Paper, CEMFI, Madrid}.

\end{thebibliography}
 }}{\footnotesize\par}

\appendix

\section{Existence of equilibrium: Proof of Proposition \ref{prop:Existence}\label{sec:Existence-of-equilibrium:appendix}}

Recall from Section \ref{subsec:Best-Response-Weights} the map $\Phi$,
which gives the next-period covariance matrix $\Phi(\bm{V}_{t})$
for any $\bm{V}_{t}$. The expression given there for this map ensures
that its entries are continuous functions of the entries of $\bm{V}_{t}$.
Our strategy is to show that this function maps a convex, compact
set, $\mathcal{K},$ to itself, which, by Brouwer's fixed-point theorem,
ensures that $\Phi$ has a fixed point $\bm{\widehat{V}}$. We will
then argue that this fixed point corresponds to a stationary linear
equilibrium.

We begin by defining the compact set $\mathcal{K}$. Recalling Section \ref{sec:BRweights}, entries of $\bm{V}_{t}$ are covariances between pairs
of neighbor errors from any periods $t-\ell$ where $1 \leq \ell \leq m$. Let $k,l$
be two indices of such actions, corresponding to actions taken at
nodes $i$ and $j$ respectively (at potentially different times),
and let $\overline{\sigma}_{i}^{2}=\max\left\{\sigma_{i}^{2},\rho^{m-1}\sigma_{i}^{2}+\frac{1-\rho^{m-1}}{1-\rho}\right\}$.
Now let $\mathcal{K}\subset\mathcal{V}$ be the subset of symmetric
positive semi-definite matrices $\bm{V}_{t}$ such that, for any such
$k,l$,
\begin{align*}
V_{kk,t} & \in\left[\min\left\{ \frac{1}{1+\sigma_{i}^{-2}},\frac{\rho^{m-1}}{1+\sigma_{i}^{-2}}+\frac{1-\rho^{m-1}}{1-\rho}\right\} ,\max\left\{ \sigma_{i}^{2},\rho^{m-1}\sigma_{i}^{2}+\frac{1-\rho^{m-1}}{1-\rho}\right\} \right]\\
V_{kl,t} & \in\left[-\overline{\sigma}_{i}\overline{\sigma}_{j},\overline{\sigma}_{i}\overline{\sigma}_{j}\right].
\end{align*}
This set is closed and convex, and we claim that $\Phi(\mathcal{K})\subset\mathcal{K}.$

To show this claim, we will first find upper and lower bounds on the
variance of any neighbor's action (at any period in memory). For the
upper bound, note that a Bayesian agent will not choose an action
with a larger variance than her signal, which has variance $\sigma_{i}^{2}$.
For a lower bound, note that if she knew the previous period's state
and her own signal, then the variance of her action would be $\frac{1}{1+\sigma_{i}^{-2}}$.
Thus an agent observing only noisy estimates of $\theta_{t}$ and
her own signal can do no better.

By the same reasoning applied to the node-$i$ agent from $m$ periods
ago, the error variance of $\rho^{m}a_{i,t-m}-\theta_{t}$ is at most
$\rho^{m}\sigma_{i}^{2}+\frac{1-\rho^{m}}{1-\rho}$ and at least $\frac{\rho^{m}}{1+\sigma_{i}^{-2}}+\frac{1-\rho^{m}}{1-\rho}$.
This establishes bounds on $V_{kk,t}$ for observations $k$ from
either the most recent or the oldest available period. The corresponding
bounds from the periods between $t-m+1$ and $t$ are always weaker
than at least one of the two bounds we have described, so we need
only take minima and maxima over two terms.

This established the claimed bound on the variances. The bounds on
covariances follow from Cauchy-Schwartz.

\medskip{}

We have now established that there is a variance-covariance matrix
$\bm{\widehat{V}}$ such that $\Phi(\bm{\widehat{V}})=\bm{\widehat{V}}$.
By definition of $\Phi$, this means there exists some weight profile
$(\text{\ensuremath{\bm{\widehat{W}}}},\bm{\widehat{w}^{s}})$ such
that, when applied to prior actions that have variance-covariance
matrix $\bm{\widehat{V}}$, produce variance-covariance matrix $\bm{\widehat{V}}$.
However, it still remains to show that this is the variance-covariance
matrix reached when agents have been using the weights $(\text{\ensuremath{\bm{\widehat{W}}}},\bm{\widehat{w}^{s}})$
forever.

To show this, first observe that if agents have been using the weights
$(\text{\ensuremath{\bm{\widehat{W}}}},\bm{\widehat{w}^{s}})$ forever,
the variance-covariance matrix $\bm{V}_{t}$ in any period is uniquely
determined and does not depend on $t$; call this $\check{\bm{V}}$.\footnote{ The variance-covariance matrices are well-defined because the $(W,w^{s})$
weights yield unambiguous strategy profiles in the sense of Appendix
\ref{OA-sec:Details-of-definitions}.} This is because actions can be expressed as linear combinations of
private signals with coefficients depending only on the weights. Second,
it follows from our construction above of the matrix $\bm{\widehat{V}}$
and the weights $(\text{\ensuremath{\bm{\widehat{W}}}},\bm{\widehat{w}^{s}})$
that there is a distribution of actions where the variance-covariance
matrix is $\widehat{\bm{V}}$ in every period and agents are using
weights $(\text{\ensuremath{\bm{\widehat{W}}}},\bm{\widehat{w}^{s}})$
in every period. Combining the two statements shows that in fact $\check{\bm{V}}=\widehat{\boldsymbol{V}}$,
and this completes the proof. Note that this argument also establishes
that the response profile we have constructed is a strategy profile:
under the responses used, we can write formally the dependence of
actions on all prior signals, and verify using the observations on
decay of dependence across time that the formula is summable and hence
defines unique actions.

\section{Proof of Theorem \ref{thm:DiverseSignals} \label{sec:Proof-of-Theorem-diverse}}

\subsection{Notation and key notions\label{subsec:Notation-and-key}}

Let $\mathbb{S}$ be the (by assumption finite) set of all possible
signal variances, and let $\overline{\sigma}^{2}$ be the largest
of them. The proof will focus on the covariances of errors in social
signals. Suppose that all agents have at least one neighbor. Take
two arbitrary agents $i$ and $j$. Recall that both $r_{i,t}$ and
$r_{j,t}$ have mean $\theta_{t-1}$, because each is an unbiased
estimate\footnote{This is because it is a linear combination, with coefficients summing
to $1$, of unbiased estimates of $\theta_{t-1}$.} of $\theta_{t-1}$; we will thus focus on the errors $r_{i,t}-\theta_{t-1}$.
Let $\bm{A}_{t}$ denote the variance-covariance matrix $\left(\Covar(r_{i,t}-\theta_{t-1},r_{j,t}-\theta_{t-1})\right)_{i,j}$
and let $\mathcal{W}$ be the set of such covariance matrices. For
all $i,j$ note that $\Covar(r_{i,t}-\theta_{t-1},r_{j,t}-\theta_{t-1})\in[-\overline{\sigma}^{2},\overline{\sigma}^{2}]$
using the Cauchy-Schwarz inequality and the fact that $\text{Var}(r_{i,t}-\theta_{t-1})\in[0,\overline{\sigma}^{2}]$
for all $i$. This fact about variances says that no social signal
is worse than putting all weight on an agent who follows only her
private signal. Thus the best-response map $\Phi$ is well-defined
and induces a map $\widetilde{\Phi}$ on $\mathcal{W}$.

Next, for any $\psi,\zeta>0$ we will define the subset $\mathcal{W}_{\psi,\zeta}\subset\mathcal{\mathcal{W}}$
to be the set of covariance matrices in $\mathcal{\mathcal{W}}$ such
that both of the following hold:
\begin{enumerate}
\item for any pair of distinct agents\footnote{ Throughout this proof, we abuse terminology by referring to agents
and nodes interchangeably when the relevant $t$ is clear or specified
nearby.} $i\in G_{n}^{k}$ and $j\in G_{n}^{k'}$, 
\[
\Covar(r_{i,t}-\theta_{t-1},r_{j,t}-\theta_{t-1})=\psi_{kk'}+\zeta_{ij}
\]
where (i) $\psi_{kk'}$ depends only on the network types of the two
agents ($k$ and $k'$, which may be the same); (ii) $|\psi_{kk'}|<\psi$;
and (iii) $|\zeta_{ij}|<\zeta$;
\item for any single agent $i\in G_{n}^{k}$, 
\[
\text{Var}(r_{i,t}-\theta_{t-1})=\psi_{k}+\zeta_{ii}
\]
where (i) $\psi_{k}$ only depends on the network type of the agent;
(ii) $|\psi_{k}|<\psi,$ and (iii) $|\zeta_{ii}|<\zeta$. 
\end{enumerate}
This is the space of covariance matrices such that each covariance
is split into two parts. Considering (1) first, $\psi_{kk'}$ is an
effect that depends only on $i$'s and $j$'s network types, while
$\zeta_{ij}$ adjusts for the individual-level heterogeneity arising
from different link realizations. The description of the decomposition
in (2) is analogous.

\subsection{Proof strategy}

\subsubsection{A set $\mathcal{W}_{\overline{\psi},\overline{\zeta}}$ of outcomes
with good learning}

Our goal is to show that as $n$ grows large, there is an equilibrium
in which $\text{Var}(r_{i,t}-\theta_{t-1})$ becomes very small, which
then implies that the agents asymptotically learn. To this end we
define a set of covariances with this property as well as some other
useful properties. We will take $\overline{\psi}$ and $\overline{\zeta}$
to be arbitrarily small numbers and show that for large enough $n$,
with high probability (which we abbreviate ``asymptotically almost
surely'' or ``a.a.s.'') there is an equilibrium with a social error
covariance matrix $\bm{A}_{t}$ in the set $\mathcal{W}_{\overline{\psi},\overline{\zeta}}$.
That will imply that, in this equilibrium, $\text{Var}(r_{i,t}-\theta_{t-1})$
becomes arbitrarily small as we take the constants $\overline{\psi}$
and $\overline{\zeta}$ to be small. In our constructions, the $\zeta_{ij}$
(resp., $\zeta_{i}$) terms will be set to much smaller values than
the $\psi_{kk'}$ (resp., $\psi_{k}$) terms, because group-level
covariances are more predictable and less sensitive to idiosyncratic
realizations than individual-level covariances.

\subsubsection{Approach to showing that $\mathcal{W}_{\overline{\psi},\overline{\zeta}}$
contains an equilibrium\label{subsec:Approach-to-proof}}

To show that there is (a.a.s.) an equilibrium outcome with a social
error covariance matrix $\bm{A}_{t}$ in the set $\mathcal{W}_{\overline{\psi},\overline{\zeta}}$,
the plan is to construct a set so that (a.a.s.) $\overline{\mathcal{W}}\subset\mathcal{W}_{\overline{\psi},\overline{\zeta}}$
and $\widetilde{\Phi}(\overline{\mathcal{W}})\subset\overline{\mathcal{W}}$.
This set will contain an equilibrium by the Brouwer fixed point theorem,
and therefore so will $\mathcal{W}_{\overline{\psi},\overline{\zeta}}$.

To construct the set $\overline{\mathcal{W}}$, we will fix a positive
constant $\beta$ (to be determined later), and define 
\[
\overline{\mathcal{W}}=\mathcal{\mathcal{W}}_{\frac{\beta}{n},\frac{1}{n}}\cup\widetilde{\Phi}\left(\mathcal{\mathcal{W}}_{\frac{\beta}{n},\frac{1}{n}}\right).
\]
We will then prove that, for large enough $n$, (i) $\widetilde{\Phi}(\overline{\mathcal{W}})\subseteq\overline{\mathcal{W}}$
and (ii) for another suitable positive constant $\lambda$, 
\[
\overline{\mathcal{W}}\subset\mathcal{W}_{\frac{\beta}{n},\frac{\lambda}{n}}.
\]
This will allow us to establish that (a.a.s.) $\overline{\mathcal{W}}\subset\mathcal{W}_{\overline{\psi},\overline{\zeta}}$
and $\widetilde{\Phi}(\overline{\mathcal{W}})\subset\overline{\mathcal{W}}$,
with $\overline{\psi}$ and $\overline{\zeta}$ being arbitrarily
small numbers.

The following two lemmas will allow us to deduce (immediately after
stating them) properties (i) and (ii) of $\overline{\mathcal{W}}$.
\begin{lem}
\label{lem:2}There is a function $\underline{\lambda}(\beta)\geq1$
such that the following holds. For all large enough $\beta$ and all
$\lambda\geq\underline{\lambda}(\beta)$, for $n$ sufficiently large
we have $\widetilde{\Phi}\left(\mathcal{\mathcal{W}}_{\frac{\beta}{n},\frac{1}{n}}\right)\subset\mathcal{W}_{\frac{\beta}{n},\frac{\lambda}{n}}$
with probability at least $1-\frac{1}{n}$.
\end{lem}
\begin{lem}
\label{lem:1}For all large enough $\beta$, for $n$ sufficiently
large, $\widetilde{\Phi}^{2}\left(\mathcal{\mathcal{W}}_{\frac{\beta}{n},\frac{1}{n}}\right)\subset\mathcal{\mathcal{W}}_{\frac{\beta}{n},\frac{1}{n}}$,
with probability at least $1-\frac{1}{n}$.\footnote{The notation $\widetilde{\Phi}^{2}$ means the operator $\widetilde{\Phi}$
applied twice.}
\end{lem}
Putting these lemmas together, a.a.s. we have, 
\[
\widetilde{\Phi}^{2}\left(\mathcal{\mathcal{W}}_{\frac{\beta}{n},\frac{1}{n}}\right)\subset\mathcal{\mathcal{W}}_{\frac{\beta}{n},\frac{1}{n}}\quad\text{and}\quad\widetilde{\Phi}\left(\mathcal{\mathcal{W}}_{\frac{\beta}{n},\frac{1}{n}}\right)\subset\mathcal{W}_{\frac{\beta}{n},\frac{\lambda}{n}}.
\]
From this it follows that $\overline{\mathcal{W}}=\mathcal{\mathcal{W}}_{\frac{\beta}{n},\frac{1}{n}}\cup\widetilde{\Phi}\left(\mathcal{\mathcal{W}}_{\frac{\beta}{n},\frac{1}{n}}\right)$
is mapped to a subset of itself by $\widetilde{\Phi}$, and contained
in $\mathcal{W}_{\frac{\beta}{n},\frac{\lambda}{n}}$, as claimed.

\subsubsection{Proving the lemmas by analyzing how $\widetilde{\Phi}$ and $\widetilde{\Phi}^{2}$
act on sets $\mathcal{W}_{\psi,\zeta}$ }

The lemmas are about how $\widetilde{\Phi}$ and $\widetilde{\Phi}^{2}$
act on the covariance matrix $\bm{A}_{t}$, assuming it is in a certain
set $\mathcal{W}_{\psi,\zeta}$, to yield new covariance matrices.
Thus, we will prove these lemmas by studying two periods of updating.
The analysis will come in five steps.

\paragraph{Step 1: No-large-deviations (NLD) networks and the high-probability
	event}

Step 1 concerns the ``with high probability'' part of the lemmas.
In the entire argument, we condition on the event of a \emph{no-large-deviations
(NLD)} network realization, which says that certain realized statistics
in the network (e.g., number of paths between two nodes) are close
to their expectations. The expectations in question depend only on
agents' types. Therefore, on the NLD realization, the realized statistics
do not vary much based on which exact agents we focus on, but rather
depend only on their types. Step 1 defines the NLD event $E$ formally
and shows that it has high probability. We use the structure of the
NLD event throughout our subsequent steps, as we mention below.

\paragraph{Step 2: Weights in one step of updating are well-behaved}

We are interested in $\widetilde{\Phi}$ and $\widetilde{\Phi}^{2}$,
which describe how the covariance matrix $\bm{A}_{t}$ of social signal
errors changes under updating. How this works is determined by the
``basic'' updating map $\Phi$, and so we begin by studying the
weights involved in it and then make deductions about the implications
for the evolution of the variance-covariance matrix $\bm{A}_{t}$.

The present step establishes that in one step of updating, the weight
$W_{ij,t+1}$ that agent $(i,t+1)$ places on the action of another
agent $j$ in period $t$, does not depend too much on the identities
of $i$ and $j$. It only depends on their (network and signal) types.
This is established by using our explicit formula for weights in terms
of covariances. We rely on (i) the fact that covariances are assumed
to start out in a suitable $\mathcal{W}_{\psi,\zeta}$, and (ii) our
conditioning on the NLD event $E$. The NLD event is designed so that
the network quantities that go into determining the weights depend
only on the types of $i$ and $j$ (because the NLD event forbids
too much variation within type). The restriction to $\bm{A}_{t}\in\mathcal{W}_{\psi,\zeta}$
ensures that covariances in the initial period $t$ do not vary too
much with type, either.

\paragraph{Step 3: Lemma \ref{lem:2}: $\widetilde{\Phi}\left(\mathcal{\mathcal{W}}_{\frac{\beta}{n},\frac{1}{n}}\right)\subset\mathcal{W}_{\frac{\beta}{n},\frac{\lambda}{n}}$}

Once we have analyzed one step of updating, it is natural to consider
the implications for the covariance matrix. Because we now have a
bound on how much weights can vary after one step of updating, we
can compute bounds on covariances. We show that if covariances $\bm{A}_{t}$
are in $\mathcal{\mathcal{W}}_{\frac{\beta}{n},\frac{1}{n}}$, then
after one step, covariances are in $\mathcal{W}_{\frac{\beta}{n},\frac{\lambda}{n}}$.
Note that the introduction of another parameter $\lambda$ on the
right-hand side implies that this step might worsen our control on
covariances somewhat, but in a bounded way. 

\paragraph{Step 4: Weights in two steps of updating are well-behaved}

Here we establish that the statement made in Step 2 remains
true when we replace $t+1$ by $t+2$. By the same sort of reasoning
as in Step 2, an additional period of updating cannot create too much
further idiosyncratic variation in weights. Proving this requires
analyzing the covariance matrices of various social signals (i.e.,
the $\bm{A}_{t+1}$ that the updating induces), which is why we needed
to do Step 3 first.

\paragraph{Step 5: Lemma \ref{lem:1}: $\widetilde{\Phi}^{2}\left(\mathcal{\mathcal{W}}_{\frac{\beta}{n},\frac{1}{n}}\right)\subset\mathcal{\mathcal{W}}_{\frac{\beta}{n},\frac{1}{n}}$}

Now we use our understanding of weights from the previous steps, along
with additional structure, to show the key remaining fact. What we
have established so far about weights allows us to control the weight
that a given agent's estimate at time $t+2$ places on the social
signal of another agent at time $t$. This is Step 5(a). In the second
part, Step 5(b), we use that to control the covariances in $\bm{A}_{t+2}$.
It is important in this part of the proof that different agents have
very similar ``second-order neighborhoods'': the paths of length
2 beginning from an agent are very similar, in terms of their counts
and what types of agents they go through. We use our control of second-order
neighborhoods, as well as the assumptions on variation across entries
of $\bm{A}_{t}$ to bound this variation well enough to conclude that $\bm{A}_{t+2}\in\mathcal{\mathcal{W}}_{\frac{\beta}{n},\frac{1}{n}}$.

\subsection{Carrying out the steps}

\subsubsection{Step 1}

Here we formally define the NLD event, which we call $E$. It is given
by $E=\cap_{i=1}^{5}E_{i}$, where the events $E_{i}$ will be defined
next.

$(E_{1})$ Let $X_{i,\tau k}^{(1)}$ be the number of agents having
signal type $\tau$ and network type $k$ who are observed by $i$.
The event $E_{1}$ is that this quantity is close to its expected
value in the following sense, simultaneously for all possible values
of the subscript: 
\[
(1-\zeta^{2})\mathbb{E}[X_{i,\tau k}^{(1)}]\leq X_{i,\tau k}^{(1)}\leq(1+\zeta^{2})\mathbb{E}[X_{i,\tau k}^{(1)}].
\]

$(E_{2})$ Let $X_{ii',\tau k}^{(2)}$ be the number of agents having
signal type $\tau$ and network type $k$ who are observed by \emph{both}
$i$ and $i'$. The event $E_{2}$ is that this quantity is close
to its expected value in the following sense, simultaneously for all
possible values of the subscript:
\[
(1-\zeta^{2})\mathbb{E}[X_{ii',\tau k}^{(2)}]\leq X_{ii',\tau k}^{(2)}\leq(1+\zeta^{2})\mathbb{E}[X_{ii',\tau k}^{(2)}].
\]

$(E_{3})$ Let $X_{i,\tau k,j}^{(3)}$ be the number of agents having
signal type $\tau$ and network type $k$ who are observed by agent
$i$ and who observe agent $j$. The event $E_{3}$ is that this quantity
is close to its expected value in the following sense, simultaneously
for all possible values of the subscript:
\[
(1-\zeta^{2})\mathbb{E}[X_{i,\tau k,j}^{(3)}]\leq X_{i,\tau k,j}^{(3)}\leq(1+\zeta^{2})\mathbb{E}[X_{i,\tau k,j}^{(3)}].
\]

$(E_{4})$ Let $X_{ii',\tau k,j}^{(4)}$ be the number of agents having
signal type $\tau$ and network type $k$ who are observed by both
agent $i$ and $i'$ and who observe $j$. The event $E_{4}$ is that
this quantity is close to its expected value in the following sense,
simultaneously for all possible values of the subscript:
\[
(1-\zeta^{2})\mathbb{E}[X_{ii',\tau k',j}^{(4)}]\leq X_{ii',\tau k',j}^{(4)}\leq(1+\zeta^{2})\mathbb{E}[X_{ii',\tau k',j}^{(4)}].
\]

$(E_{5})$ Let $X_{i,\tau k,jj'}^{(5)}$ be the number of agents of
signal type $\tau$ and network type $k$ who are observed by agent
$i$ and who observe both $j$ and $j'$. The event $E_{5}$ is that
this quantity is close to its expected value in the following sense,
simultaneously for all possible values of the subscript:
\[
(1-\zeta^{2})\mathbb{E}[X_{i,\tau k,jj'}^{(5)}]\leq X_{i,\tau k,jj'}^{(5)}\leq(1+\zeta^{2})\mathbb{E}[X_{i,\tau k,jj'}^{(5)}].
\]

We claim that the probability of the complement of the event $E$
vanishes exponentially. We can check this by showing that the probability
of each of the $E_{i}$ vanishes exponentially. For $E_{1}$, for
example, the bounds will hold unless at least one agent has degree
outside the specified range. The probability of this is bounded above
by the sum of the probabilities of each individual agent having degree
outside the specified range. By Chebyshev's inequality, the probability
a given agent has degree outside this range vanishes exponentially.
Because there are $n$ agents in $G_{n}$, this sum vanishes exponentially
as well. The other cases are similar.

For the rest of the proof, we condition on the event $E$.

\subsubsection{Step 2}

As a shorthand, let $\psi=\beta/n$ for a sufficiently large constant
$\beta$, and let $\zeta=1/n$.
\begin{lem}
\label{lem:WeightVariation1}Suppose that in period $t$ the matrix
$\bm{A}=\bm{A}_{t}$ of covariances of social signals satisfies $\bm{A}\in\mathcal{W}_{\psi,\zeta}$
and all agents are optimizing in period $t+1$. Then there is a $\gamma$
so that for all $n$ sufficiently large, 
\[
\frac{W_{ij,t+1}}{W_{i'j',t+1}}\in\left[1-\frac{\gamma}{n},1+\frac{\gamma}{n}\right].
\]
whenever $i$ and $i'$ have the same network and signal types and
$j$ and $j'$ have the same network and signal types. 
\end{lem}
To prove this lemma, we will use the formula given by (\ref{eq:action_formula})  for row $i$ of the matrix $\bm{W}_{t+1}$: 
\[
W_{i\cdot,t+1}=\frac{\bm{1}^{\top}\bm{C}_{i,t}^{-1}}{\bm{1}^{\top}\bm{C}_{i,t}^{-1}\bm{1}}.
\]
This says that in period $t+1$, agent $i$'s weight on agent $j$
is proportional to the sum of the entries of column $j$ of $\bm{C}_{i,t}^{-1}$.
We want to show that the change in weights is small as the covariances
of observed social signals vary slightly. To do so we will use the
Taylor expansion of $f(\bm{A})=\bm{C}_{i,t}^{-1}$ around the covariance
matrix $\bm{A}(0)$ at which all $\psi_{kk'}=0$, $\psi_{k}=0$ and $\zeta_{ij}=0$.

We begin with the first partial derivative of $f$ at $\bm{A}(0)$ in an
arbitrary direction. Let $\bm{A}(x)$ be any perturbation of $\bm{A}(0)$ in
one parameter, i.e., $\bm{A}(x)=\bm{A}(0)+x\bm{M}$ for some constant matrix $\bm{M}$
with entries in $[-1,1]$. Let $\bm{C}_{i}(x)$ be the matrix of covariances
of the actions observed by $i$ given that the covariances of agents'
social signals were $\bm{A}(x)$. There exists a constant $\gamma_{1}$
depending only on the possible signal types such that each entry of
$\bm{C}_{i}(x)-\bm{C}_{i}(x')$ has absolute value at most $\gamma_{1}(x-x')$
whenever both $x$ and $x'$ are small.

We will now show that the column sums of $\bm{C}_{i}(x)^{-1}$ are
close to the column sums of $\bm{C}(0)_{i}^{-1}.$ To do so, we will
evaluate the formula

\begin{equation}
\frac{\partial f(\bm{A}(x))}{\partial x}=\frac{\partial\bm{C}_{i}(x)^{-1}}{\partial x}=\bm{C}_{i}(x)^{-1}\frac{\partial\bm{C}_{i}(x)}{\partial x}\bm{C}_{i}(x)^{-1}\label{eq:invder}
\end{equation}
at zero. If we can bound each column sum of this expression (evaluated
at zero) by a constant (depending only on the signal types and the
number of network types $K$), then the first derivative of $f$ will
also be bounded by a constant.

Recall that $\mathbb{S}$ is the set of signal types and let $S=|\mathbb{S}|$;
index the signal types by numbers ranging from $1$ to $S$. To bound
the column sums of $\bm{C}_{i}(0)^{-1}$, suppose that the agent observes
$r_{i}$ agents from each signal type $1\le i\leq S$. Reordering
so that all agents of each signal type are grouped together, we can
write, for a suitable matrix $\bm{Y}$ and vector $\bm{z}$:
\[
\bm{C}_{i}(0)=\begin{pmatrix}Y_{11}\bm{1}_{r_{1}\times r_{1}}+z_{1}I_{r_{1}} &  Y_{12}\bm{1}_{r_{1}\times r_{2}} &  &  Y_{S1}\bm{1}_{r_{1}\times r_{S}}\\
 Y_{12}\bm{1}_{r_{2}\times r_{1}} &  Y_{22}\bm{1}_{r_{2}\times r_{2}}+z_{2}I_{r_{2}} &  & \vdots\\
 &  & \ddots\\
 Y_{1S}\bm{1}_{r_{S}\times r_{1}} & \cdots &  &  Y_{SS}\bm{1}_{r_{S}\times r_{S}}+z_{S}I_{r_{S}}
\end{pmatrix}
\]
Therefore, the covariance matrix $\bm{C}_{i}(0)$ can be written as a block matrix with
blocks $Y_{\tau \tau'}1_{r_{\tau}\times r_{\tau'}}+z_{\tau}\delta_{\tau \tau'}I_{r_{\tau}}$ where
$1\leq \tau,\tau'\leq S$ and $\delta_{\tau \tau'}=1$ for $\tau=\tau'$ and 0 otherwise.

We now have the following important approximation of the inverse of
this matrix.\footnote{We are very grateful to Iosif Pinelis for  this argument.}
\begin{lem}[\citet{pinelis}]
\label{lem:Pinelis} Let $\bm{C}$ be a block matrix with blocks given by
\[
Y_{\tau \tau'}\bm{1}_{r_{\tau}\times r_{\tau'}}+z_{\tau}\delta_{\tau \tau'}\bm{I}_{r_{\tau}}
\] for all $\tau, \tau' \in \mathbb{S}$. As $n\rightarrow\infty$, the $(\tau,\tau)$ block of $\bm{C}^{-1}$ satisfies
\[
\frac{1}{z_{\tau}}\bm{I}_{r_{\tau}}-\frac{1}{z_{\tau}r_{\tau}}\bm{1}_{r_{\tau}\times r_{\tau}}+O(1/n^{2})
\]
while the off-diagonal blocks are $O(1/n^{2})$.
\end{lem}
\begin{proof}Note that the block $(\tau,\tau')$ of $\bm{C}^{-1}$ has the form 
\[
{E}_{\tau\tau'}\bm{1}_{r_{\tau}\times r_{\tau'}}+d_{\tau}\delta_{\tau \tau'}\bm{I}_{r_{\tau}}
\]
for some matrix ${\bm{E}}$ and vector $\bm{d}$. Here $\delta$ denotes the Kronecker delta.

Therefore, the $(\tau,\tau')$ block of $\bm{C}\bm{C}^{-1}$ can be written (using that $\bm{1}_{r\times r'} \bm{1}_{r' \times r''}= r'\bm{1}_{r \times r''}$) as 
\begin{eqnarray}
 & \sum_{ \widehat{\tau}}(Y_{\tau \widehat{\tau}}\bm{1}_{r_{\tau}\times r_{ \widehat{\tau}}}+z_{\tau}\delta_{\tau \widehat{\tau}}\bm{I}_{r_{\tau}})(   {E}_{ \widehat{\tau} \tau'}\bm{1}_{r_{ \widehat{\tau}}\times r_{ \tau'}}+d_{ \widehat{\tau}}\delta_{ \widehat{\tau} \tau'}\bm{I}_{r_{ \widehat{\tau}}})=\nonumber \\
 & \left(Y_{\tau \tau'}d_{ \tau'}+\sum_{ \widehat{\tau}}(Y_{\tau \widehat{\tau}}r_{ \widehat{\tau}}+\delta_{\tau \widehat{\tau}}z_{ \widehat{\tau}})   {E}_{ \widehat{\tau} \tau'}\right)\bm{1}_{r_{\tau}\times r_{ \tau'}}+z_{\tau}d_{\tau}\delta_{\tau \tau'}\bm{I}_{r_{\tau}}.\label{eq:CCinv-display}
\end{eqnarray}

For any vector $\bm{v} \in \mathbb{R}^{\mathbb{S}}$, let $\bm{D}_{\bm{v}}$ denote the diagonal matrix with $v_{\tau}$ in the $(\tau,\tau)$
diagonal entry and $\bm{v} \circ \bm{v}'$ denote the pointwise product of two vectors. Breaking up the fact that
(\ref{eq:CCinv-display}) equals $\bm{I}$ into its off-diagonal and diagonal parts, we have
\[
 \bm{Y}\bm{D}_{\bm{d}}+( \bm{Y}\bm{D}_{\bm{r}}+\bm{D}_{\bm{z}}){\bm{E}}= \bm{0}\text{ and }\bm{D}_{\bm{d}}=\bm{D}_{\bm{z}}^{-1}.
\]
Hence, 
\begin{eqnarray*}
{\bm{E}} & = & -( \bm{Y}\bm{D}_{\bm{r}}+\bm{D}_{\bm{z}})^{-1} \bm{Y}\bm{D}_{\bm{d}}\\
 & = & -(\bm{I}_{S}+\bm{D}_{\bm{r}}^{-1} \bm{Y}^{-1}\bm{D}_{\bm{z}})^{-1}( \bm{Y}\bm{D}_{\bm{r}})^{-1} \bm{Y}\bm{D}^{-1}_{\bm{z}}\\
 & = & -(\bm{I}_{S}+\bm{D}_{\bm{r}}^{-1} \bm{Y}^{-1}\bm{D}_{\bm{z}})^{-1}\bm{D}^{-1}_{\bm{z} \circ \bm{r}}\\
 & = & -\bm{D}^{-1}_{\bm{z} \circ \bm{r}}+O(1/n^{2}).
\end{eqnarray*}
 Therefore as $n\rightarrow\infty$
the off-diagonal blocks of $\bm{C}^{-1}$ will be $O(1/n^{2})$ while the $(\tau,\tau)$ diagonal block is 
\[
\frac{1}{z_{\tau}}\bm{I}_{r_{\tau}}-\frac{1}{z_{\tau}r_{\tau}}\bm{1}_{r_{\tau}\times r_{\tau}}+O(1/n^{2})
\]
as desired.
\end{proof}
Using Lemma \ref{lem:Pinelis} we can analyze the column sums of\footnote{ Recall we wrote $\bm{A}(x)=\bm{A}(0)+x\bm{M}$, and in (\ref{eq:invder}) we expressed
the derivative of $f$ in $x$ in terms of the matrix we exhibit here.} $\bm{C}_{i}(0)^{-1}\bm{M}\bm{C}_{i}(0)^{-1}.$
In more detail, we use the formula of the lemma to estimate both copies
of $\bm{C}_{i}(0)^{-1}$, and then expand this to write an expression
for any column sum of $\bm{C}_{i}(0)^{-1}\bm{M}\bm{C}_{i}(0)^{-1}$. It
follows straightforwardly from this calculation that all these column
sums are $O(1/n)$ whenever all entries of $\bm{M}$ are in $[-1,1]$.

We can bound the higher-order terms in the Taylor expansion by the
same technique: by differentiating equation (\ref{eq:invder}) repeatedly
in $x$, we obtain an expression for the $k^{\text{th}}$ derivative in terms
of $\bm{C}_{i}(0)^{-1}$ and $\bm{M}$: 
\[
f^{(k)}(0)=k!\bm{C}_{i}(0)^{-1}\bm{M}\bm{C}_{i}(0)^{-1}\bm{M}\bm{C}_{i}(0)^{-1}\cdot\ldots\cdot \bm{M}\bm{C}_{i}(0)^{-1},
\]
where $\bm{M}$ appears $k$ times in the product. By the same argument
as above, we can show that the column sums of $\frac{f^{(k)}(0)}{k!}$
are bounded by a constant independent of $n$. The Taylor expansion
is
\[
f(\bm{A})=\sum_{k}\frac{f^{(k)}(0)}{k!}x^{k}.
\]
Since we take $\bm{A}\in\mathcal{W}_{\psi,\zeta},$ we can assume that
$x$ is $O(1/n)$. Because the column sums of each summand are bounded
by a constant times $x^{k}$, the column sums of $f(\bm{A})$ are bounded
by a constant.

Finally, because the variation in the column sums is $O(1/n)$ and
the weights are proportional to the column sums, each weight varies
by at most a multiplicative factor of $\gamma_{1}/n$ for some $\gamma_{1}$.
We find that the first part of the lemma, which bounded the ratios
between weights $W_{ij,t+1}/W_{i'j',t+1}$, holds.

\subsubsection{Step 3}

We complete the proof of Lemma \ref{lem:2}, which states that the
covariance matrix of $r_{i,t+1}$ is in $\mathcal{W}_{\psi,\zeta'}$.
Recall that $\zeta'=\lambda/n$ for some constant $n$, so we are
showing that if the covariance matrix of the $r_{i,t}$ is in a neighborhood
$\mathcal{W}_{\psi,\zeta},$ then the covariance matrix in the next
period is in a somewhat larger neighborhood $\mathcal{W}_{\psi,\zeta'}$.
The remainder of the argument then follows by the same arguments as
in the proof of the first part of the lemma: we now bound the change
in time-$(t+2)$ weights as we vary the covariances of time-$(t+1)$
social signals within this neighborhood.

Recall that we decomposed each covariance $\Covar(r_{i,t}-\theta_{t-1},r_{j,t}-\theta_{t-1})=\psi_{kk'}+\zeta_{ij}$
into a term $\psi_{kk'}$ depending only on the types of the two agents
and a term $\zeta_{ij}$, and similarly for variances. To show the
covariance matrix is contained in $\mathcal{W}_{\psi,\zeta'}$, we
bound each of these terms suitably.

We begin with $\zeta_{ij}$ (and $\zeta_{i}$). We can write 
\[
r_{i,t+1}=\sum_{j}\frac{W_{ij,t+1}}{1-w_{i,t+1}^{s}}\rho a_{i,t}=\sum_{j}\frac{W_{ij,t+1}}{1-w_{i,t+1}^{s}}\rho \left(w_{j,t}^{s}s_{j,t}+(1-w_{j,t}^{s})r_{j,t}\right).
\]
By the first part of the lemma, the ratio between any two weights
(both of the form $W_{ij,t+1}$, $w_{i,t+1}^{s},$ or $w_{j,t}^{s}$)
corresponding to pairs of agents of the same types is in $[1-\gamma_{1}/n,1+\gamma_{1}/n]$
for a constant $\gamma_{1}$. We can use this to bound the variation
in covariances of $r_{i,t+1}$ within types by $\zeta'$: we take
the covariance of $r_{i,t+1}$ and $r_{j,t+1}$ using the expansion
above and then bound the resulting summation by bounding all coefficients.

Next we bound $\psi_{kk'}$ (and $\psi_{k}$). It is sufficient to
show that $\text{Var}(r_{i,t+1}-\theta_{t})$ is at most $\psi$.
To do so, we will give an estimator of $\theta_{t}$ with variance
less than $\beta/n$, and this will imply $\text{Var}(r_{i,t+1}-\theta_{t})<\beta/n=\psi$
(recall $r_{i,t+1}$ is the estimate of $\theta_{t}$ given agent
$i$'s social observations in period $t+1$). Since this bounds all
the variance terms by $\psi$, the covariance terms will also be bounded
by $\psi$ in absolute value.

Fix an agent $i$ of network type $k$ and consider some network type
$k'$ such that $p_{kk'}>0$. Then there exists two signal types,
which we call $A$ and $B$, such that $i$ observes $\Omega(n)$
agents of each of these signal types in $G_{n}^{k}$.\footnote{ We use the notation $\Omega(n)$ to mean greater than $Cn$ for some
constant $C>0$ when $n$ is large.} The basic idea will be that we can approximate $\theta_{t}$ well
by taking a linear combination of the average of observed agents of
network type $k$ and signal type A and the average of observed agents
of network type $k$ and signal type B.

In more detail: Let $N_{i,A}$ be the set of agents of type A in network
type $k$ observed by $i$ and $N_{i,B}$ be the set of agents of
type B in network type $k$ observed by $i$. Then fixing some agent
$j_{0}$ of network type $k,$
\[
\frac{1}{|N_{i,A}|}\sum_{j\in N_{i,A}}a_{j,t-1}=\frac{\sigma_{A}^{-2}}{1+\sigma_{A}^{-2}}\theta_{t}+\frac{1}{1+\sigma_{A}^{-2}}r_{j_{0},t-1}+ \text{noise}
\]
where the noise term has variance of order $1/n$ and depends on signal
noise, variation in $r_{j,t}$, and variation in weights. These bounds
on the noise term follow from the assumption that the covariance matrix
of the $r_{i,t}$ is in a neighborhood $\mathcal{W}_{\psi,\zeta}$
and our analysis of variation in weights. Similarly
\[
\frac{1}{|N_{i,B}|}\sum_{j\in N_{i,B}}a_{j,t-1}=\frac{\sigma_{B}^{-2}}{1+\sigma_{B}^{-2}}\theta_{t}+\frac{1}{1+\sigma_{B}^{-2}}r_{j_{0},t-1}+ \text{noise}
\]
where the noise term has the same properties. Because $\sigma_{A}^{2}\neq\sigma_{B}^{2},$
we can write $\theta_{t}$ as a linear combination of these two averages
with coefficients independent of $n$ up to a noise term of order
$1/n$. We can choose $\beta$ large enough such that this noise term
has variance most $\beta/n$ for all $n$ sufficiently large. This
completes the proof of Lemma \ref{lem:2}.

\subsubsection{Step 4: }

We now give the two-step version of Lemma \ref{lem:WeightVariation1}.
\begin{lem}
\label{lem:WeightVariation2}Suppose that in period $t$ the matrix
$\bm{A}=\bm{A}_{t}$ of covariances of social signals satisfies $\bm{A}\in\mathcal{W}_{\psi,\zeta}$
and all agents are optimizing in periods $t+1$ and $t+2$. Then there
is a $\gamma$ so that for all $n$ sufficiently large, 
\[
\frac{W_{ij,t+2}}{W_{i'j',t+2}}\in\left[1-\frac{\gamma}{n},1+\frac{\gamma}{n}\right].
\]
whenever $i$ and $i'$ have the same network and signal types and
$j$ and $j'$ have the same network and signal types. 
\end{lem}
Given what we established about covariances in Step 3, the lemma follows
by the same argument as the proof of Lemma \ref{lem:WeightVariation1}.

\bigskip{}

\textbf{Step 5: }Now that Lemma \ref{lem:WeightVariation2} is proved,
we can apply it to show that 
$ \widetilde{\Phi}^{2}(\mathcal{W}_{\psi,\zeta})\subset\mathcal{W}_{\psi,\zeta}.$

We will do this by first writing the time-$(t+2)$ behavior in terms
of agents' time-$t$ observations (Step 5(a)), which comes from applying
$\widetilde{\Phi}$ twice. This gives a formula that can be used for
bounding the covariances\footnote{ We take this term to refer to variances, as well.}
of time-$(t+2)$ actions in terms of covariances of time-$t$ actions.
Step 5(b) then applies this formula to show we can take $\zeta_{ij}$
and $\zeta_{i}$ to be sufficiently small. (Recall the notation introduced
in Section \ref{subsec:Notation-and-key} above.) We split our expression
for $r_{i,t+2}$ into several groups of terms and show that the contribution
of each group of terms depends only on agents' types up to a small
noise term. Step 5(c) notes that we can also take $\psi_{kk'}$ and
$\psi_{k}$ to be sufficiently small.

\textbf{Step 5(a):} We calculate: 
\begin{eqnarray*}
r_{i,t+2} & = & \sum_{j}\frac{W_{ij,t+2}}{1-w_{i,t+2}^{s}}\rho a_{j,t+1}\\
 & = & \rho\left(\sum_{j}\frac{W_{ij,t+2}}{1-w_{i,t+2}^{s}}w_{j,t+1}^{s}s_{j,t+1}+\sum_{j,j'}\frac{W_{ij,t+2}}{1-w_{i,t+2}^{s}}W_{jj',t+1}\rho a_{j',t}\right)\\
 & = & \rho \Bigg( \sum_{j}\frac{W_{ij,t+2}}{1-w_{i,t+2}^{s}}w_{j,t+1}^{s}s_{j,t+1}+\rho\Big(\sum_{j,j'}\frac{W_{ij,t+2}}{1-w_{i,t+2}^{s}}W_{jj',t+1}w_{j',t}^{s}s_{j',t}\\
 &  & +\sum_{j,j'}\frac{W_{ij,t+2}}{1-w_{i,t+2}^{s}}W_{jj',t+1}(1-w_{j',t}^{s})r_{j',t}\Big) \Bigg).
\end{eqnarray*}
Let $h_{ij',t}$ be the coefficient on $r_{j',t}$ in this expansion
of $r_{i,t+2}$. Explicitly, 
\begin{equation}
h_{ij',t}=\sum_{j}\frac{W_{ij,t+2}}{1-w_{i,t+2}^{s}}W_{jj',t+1}(1-w_{j',t}^{s}).\label{eq:cijprimet}
\end{equation}
The coefficient $h_{ij',t}$ adds up the influence of $r_{j',t}$
on $r_{i,t+2}$ over all paths of length two.

First, we establish a lemma about how much these weights vary.
\begin{lem}
\label{lem:SSCoeffVariation}There exists $\gamma$ such that for
$n$ sufficiently large, when $i$ and $i'$ have the same network
types and $j'$ and $j''$ have the same network and signal types,
the ratio $h_{ij',t}/h_{i'j'',t}$ is in $[1-\gamma/n,1+\gamma/n]$.
\end{lem}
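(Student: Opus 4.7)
The plan is to decompose the defining sum of $c_{ij',t}$ by the (network, signal) type of the intermediate agent $j$, then exploit two separate sources of uniformity: the within-type near-constancy of weights supplied by Lemmas \ref{lem:WeightVariation1} and \ref{lem:WeightVariation2}, and the near-determinism of the two-step path counts guaranteed by the NLD event $E_3$ on which we are conditioning. The only terms contributing to $c_{ij',t}$ are those where $j\in N_i$ and $j'\in N_j$, so after grouping by the type of $j$,
\begin{equation*}
c_{ij',t}=\sum_{(k,\tau)\in\mathcal{K}\times\mathbb{S}}\;\sum_{\substack{j\in N_i,\,j'\in N_j\\ \text{type}(j)=(k,\tau)}}\frac{W_{ij,t+2}}{1-w_{i,t+2}^{s}}\,W_{jj',t+1}\,(1-w_{j',t}^{s}),
\end{equation*}
and the cardinality of the inner index set is exactly $X_{i,\tau k,j'}^{(3)}$.

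Next, fix any $(k,\tau)$. Lemma \ref{lem:WeightVariation2} tells us that, as $j$ ranges over agents of type $(k,\tau)$ in $N_i$, the ratio $W_{ij,t+2}/(1-w_{i,t+2}^s)$ varies by at most a factor $1\pm\gamma/n$ around a reference value determined by the types of $i$ and $j$. Similarly, Lemma \ref{lem:WeightVariation1} bounds $W_{jj',t+1}$ within a $1\pm\gamma/n$ window determined by the types of $j$ and $j'$, and $w_{j',t}^{s}$ depends only on the signal type of $j'$. Consequently the inner summand equals $\kappa(k_i,k,\tau,k_{j'},\tau_{j'})\cdot(1+O(1/n))$ uniformly in $j$, where $\kappa$ is a type-level constant. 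Conditioning on the NLD event $E_3$ gives $X_{i,\tau k,j'}^{(3)}=(1\pm 1/n^2)\,\mathbb{E}[X_{i,\tau k,j'}^{(3)}]$, and the expectation depends only on the network types of $i$ and $j'$. Summing over the bounded set $\mathcal{K}\times\mathbb{S}$ therefore yields
\begin{equation*}
c_{ij',t}=\bigl(1+O(1/n)\bigr)\sum_{(k,\tau)}\kappa(k_i,k,\tau,k_{j'},\tau_{j'})\,\mathbb{E}[X_{i,\tau k,j'}^{(3)}],
\end{equation*}
in which the right-hand sum depends only on the network type of $i$ and the (network, signal) type of $j'$.

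Finally, taking the ratio $c_{ij',t}/c_{i'j'',t}$ with $i,i'$ of the same network type and $j',j''$ of the same (network, signal) type, the two type-level sums on the right agree, and the two $1+O(1/n)$ multiplicative corrections compose to give a ratio in $[1-2\gamma/n,\,1+2\gamma/n]$ for $n$ large, after absorbing absolute constants into $\gamma$. The main obstacle is the bookkeeping in this last step: one must check that all the $O(1/n)$ errors are genuinely \emph{multiplicative} and \emph{uniform} in the agents involved, so that they do not accumulate when we sum over the $|\mathcal{K}||\mathbb{S}|$ type pairs or across the many intermediate agents $j$. Uniformity of the weight bounds in Lemmas \ref{lem:WeightVariation1}--\ref{lem:WeightVariation2}, together with the simultaneous NLD control of all $X^{(3)}$ counts, is precisely what makes this composition safe; the extra factor of $2$ in the exponent appears simply because we incur the error twice, once from the weights and once from the counts.
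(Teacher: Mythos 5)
Your proposal is correct and follows essentially the same route as the paper's own proof: decompose $c_{ij',t}$ over intermediate agents $j$ grouped by type, control the path counts via the NLD event ($E_3$), and control the per-path weights via Lemmas \ref{lem:WeightVariation1} and \ref{lem:WeightVariation2}, then combine the two multiplicative $1+O(1/n)$ factors. If anything, your version is slightly more explicit than the paper's in citing Lemma \ref{lem:WeightVariation2} for the time-$(t+2)$ weights and in tracking the signal type of the intermediate agent, but the argument is the same.
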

\begin{proof}
Fix $i$ and $j'$. For each network type \textbf{$k''$} and signal
type $\tau$, consider the number of agents $j$ of network type $k''$
and signal type $\tau$ who are observed by $i$ and who observe $j'$.
This number varies by at most a factor $\zeta^{2}$ as we vary $i$
and $j'$, preserving signal and network types. For each such $j$,
the contribution to $h_{ij',t}$ due to weight on that agent's action is (recalling
(\ref{eq:cijprimet}))
\[
\frac{W_{ij,t+2}}{1-w_{i,t+2}^{s}}W_{jj',t+1}(1-w_{j',t}^{s}).
\]

By applying Lemma \ref{lem:WeightVariation1} repeatedly, we can choose
$\gamma_{1}$ such that each of these contributions varies by at most
a factor of $\gamma_{1}/n$ as we change $i$ in $G_{k}$ and $j'$
in $G_{k'}$. Thus, $h_{ij',t}$ is a sum of terms which vary by at
most a multiplicative factor of $\gamma_{1}/n$ as we vary $i$
and $j'$ preserving signal and network types. If we can show that
the sum of the absolute values of these terms is bounded, then it
will follow that $h_{ij',t}$ varies by at most a multiplicative factor
of $\gamma/n$ for some $n$. This bound on the sum of absolute values
follows from the calculation of weights in the proof of Lemma \ref{lem:WeightVariation1}.
\end{proof}
\textbf{Step 5(b): }We first show that fixing the values of $\psi_{kk'}$
and $\psi_{k}$ in period $t$, the variation in the covariances $\Covar(r_{i,t+2}-\theta_{t+1,}r_{i',t+2}-\theta_{t+1})$
of these terms as we vary $i$ and $i'$ over network types is not
larger than $\zeta$. From the formula above, we observe that we can
decompose $r_{i,t+2}-\theta_{t+1}$ as a linear combination of three
mutually independent groups of terms:

(i) signal error terms $\eta_{j,t+1}$ and $\eta_{j',t}$;

(ii) the errors $r_{j',t}-\theta_{t}$ in the social signals from
period $t$; and

(iii) changes in state $\nu_{t}$and $\nu_{t+1}$ between periods
$t$ and $t+2$.

Note that the terms $r_{j',t}-\theta_{t}$ are linear combinations
of older signal errors and changes in the state. We bound each of
the three groups in turn:

\textbf{(i) Signal errors: }We first consider the contribution of
signal errors. When $i$ and $i'$ are distinct, the number of such
terms is close to its expected value because we are conditioning on
the events $E_{2}$ and $E_{4}$ defined in Section \ref{subsec:Notation-and-key}.
Moreover the weights are close to their expected values by Step 2,
so the variation is bounded suitably. When $i$ and $i'$ are equal,
we use the facts that the weights are close to their expected values
and the variance of an average of $\Omega(n)$ signals is small.

\textbf{(ii) Social signals: }We now consider terms $r_{j',t}-\theta_{t}$,
which correspond to the third summand in our expression for $r_{i,t+2}$.
Since we will analyze the weight on $\nu_{t}$ below, it is sufficient
to study the terms $r_{j',t}-\theta_{t-1}.$

By Lemma \ref{lem:SSCoeffVariation}, the coefficients placed on $r_{j',t}$
by $i$ and on $r_{j'',t}$ by $i'$ vary by a factor of at most $2\gamma/n$.
Moreover, the absolute value of each of these covariances is bounded
above by $\psi$ and the variation in these terms is bounded above
by $\zeta$. We conclude that the variation from these terms has order
$1/n^{2}$.

\textbf{(iii) Innovations: }Finally, we consider the contribution
of the innovations $\nu_{t}$ and $\nu_{t+1}$. We treat $\nu_{t+1}$
first. We must show that any two agents of the same types place the
same weight on the innovation $\nu_{t+1}$ (up to an error of order
$\frac{1}{n^{2}}$). This will imply that the contributions of timing
to the covariances $\Covar(r_{i,t+2}-\theta_{t+1,}r_{i',t+2}-\theta_{t+1})$
can be expressed as a term that can be included in the relevant $\psi_{kk'}$
and a lower-order term which can be included in $\zeta_{ii'}$.

The weight an agent places on $\nu_{t+1}$ is equal to the weight
she places on signals from period $t+1$. So this is equivalent to
showing that the total weight 
\[
\rho\sum_{j}\frac{W_{ij,t+2}}{1-w_{i,t+2}^{s}}w_{j,t+1}^{s}
\]
agent $i$ places on period $t+1$ depends only on the network type
$k$ of agent $i$ and $O_p(1/n^{2})$ terms. We will first show the
average weight placed on time-$(t+1)$ signals by agents of each signal
type depends only on $k$. We will then show that the total weights
on agents of each signal type do not depend on $n$.

Suppose for simplicity here that there are two signal types $A$ and
$B$; the general case is the same. We can split the sum from the
previous paragraph into the subgroups of agents with signal types
$A$ and $B$: 
\[
\rho\sum_{j:\sigma_{j}^{2}=\sigma_{A}^{2}}\frac{W_{ij,t+2}}{1-w_{i,t+2}^{s}}w_{j,t+1}^{s}+\rho\sum_{j:\sigma_{j}^{2}=\sigma_{B}^{2}}\frac{W_{ij,t+2}}{1-w_{i,t+2}^{s}}w_{j,t+1}^{s}.
\]
Letting $W_{i}^{A}=\sum_{\sigma_{j}^{2}=\sigma_{A}^{2}}\frac{W_{ij,t+2}}{1-w_{i,t+2}^{s}}$
be the total weight placed on agents with signal type $A$ and similarly
for signal type $B$, we can rewrite this as:

\[
W_{i}^{A}\rho\sum_{j:\sigma_{j}^{2}=\sigma_{A}^{2}}\frac{W_{ij,t+2}}{W_{i}^{A}(1-w_{i,t+2}^{s})}w_{j,t+1}^{s}+W_{i}^{B}\rho\sum_{j:\sigma_{j}^{2}=\sigma_{B}^{2}}\frac{W_{ij,t+2}}{W_{i}^{B}(1-w_{i,t+2}^{s})}w_{j,t+1}^{s}.
\]
The coefficients $\frac{W_{ij,t+2}}{W_{i}^{A}(1-w_{i,t+2}^{s})}$
in the first sum now sum to one, and similarly for the second. We
want to check that the first sum $\sum_{j:\sigma_{j}^{2}=\sigma_{A}^{2}}\frac{W_{ij,t+2}}{W_{i}^{A}(1-w_{i,t+2}^{s})}w_{j,t+1}^{s}$
does not depend on $k$, and the second sum is similar.

For each $j$ in group $A$, 
\[
w_{j,t+1}^{s}=\frac{\sigma_{A}^{-2}}{\sigma_{A}^{-2}+(\rho^{2}\kappa_{j,t+1}+1)^{-1}},
\]
where we define $\kappa_{j,t+1}^{2}=\text{Var}(r_{j,t+1}-\theta_{t})$
to be the error variance of the social signal. Because $\kappa_{j,t+1}$
is close to zero, we can approximate $w_{j,t+1}^{s}$ locally as a
linear function $\mu_{1}\kappa_{j,t+1}+\mu_{2}$ where $\mu_{1}<1$
(up to order $\frac{1}{n^{2}}$ terms).

So we can write the sum of interest as 
\[
\sum_{j:\sigma_{j}^{2}=\sigma_{A}^{2}}\frac{W_{ij,t+2}}{W_{i}^{A}(1-w_{i,t+2}^{s})}\left(\mu_{1}\sum_{j',j''}W_{jj',t+1}W_{jj'',t+1}(\rho^{2}\bm{V}_{j'j'',t}+1)+\mu_{2}\right).
\]
By Lemma \ref{lem:WeightVariation1}, the weights vary by at most
a multiplicative factor contained in $[1-\gamma/n,1+\gamma/n]$. The
number of paths from $i$ to $j'$ passing through agents of any network
type $k''$ and any signal type is close to its expected value (which
depends only on $i$'s network type), and the weight on each path
depends only on the types involved up to a factor in $[1-\gamma/n,1+\gamma/n]$.
The variation in $\bm{V}_{j'j'',t}$ consists of terms of the form
$\psi_{k'k''}$, $\psi_{k'}$, and $\zeta_{j'j''}$, all of which
are $O_p(1/n)$, and terms from signal errors $\eta_{j',t}$. The signal
errors only contribute when $j=j'$, and so only contribute to a fraction
of the summands of order $1/n$. So we can conclude the total variation
in this sum as we change $i$ within the network type $k$ has order
$1/n^{2}.$

Now that we know each the average weight on private signals of the
observed agents of each signal type depends only on $k$, it remains
to check that $W_{i}^{A}$ and $W_{i}^{B}$ only depend on $k$. The
coefficients $W_{i}^{A}$ and $W_{i}^{B}$ are the optimal weights
on the group averages 
\[
\sum_{j:\sigma_{j}^{2}=\sigma_{A}^{2}}\frac{W_{ij,t+2}}{W_{i}^{A}(1-w_{i,t+2}^{s})}\rho a_{j,t+1} \; \text{ and }  \; \sum_{j:\sigma_{j}^{2}=\sigma_{B}^{2}}\frac{W_{ij,t+2}}{W_{i}^{B}(1-w_{i,t+2}^{s})}\rho a_{j,t+1},
\]
so we need to show that the variances and covariance of these two
terms depend only on $k$. We check the variance of the first sum:
we can expand
\[
\sum_{\sigma_{j}^{2}=\sigma_{A}^{2}}\frac{W_{ij,t+2}}{W_{i}^{A}(1-w_{i,t+2}^{s})}\rho a_{j,t+1}=\sum_{\sigma_{j}^{2}=\sigma_{A}^{2}}\frac{W_{ij,t+2}}{W_{i}^{A}(1-w_{i,t+2}^{s})}\rho\left(w_{j,t+1}^{s}s_{j,t+1}+(1-w_{j,t+1}^{s})r_{j,t+1}\right).
\]
We can again bound the signal errors and social signals as in the
previous parts of this proof, and show that the variance of this term
depends only on $k$ up to error terms that are $O_p(1/n^{2})$. The second variance
and covariance are similar, so $W_{i}^{A}$ and $W_{i}^{B}$ depend
only on $k$ up to error terms that are $O_p(1/n^{2})$.

This takes care of the innovation $\nu_{t+1}$. Because we have included
any innovations prior to $\nu_{t}$ in the social signals $r_{j',t}$,
to complete Step 5(b) we need only show the weight on $\nu_{t}$ depends
only on the network type $k$ of an agent.

The analysis is a simpler version of the analysis of the weight on
$\nu_{t+1}$. It is sufficient to show the total weight placed on
period $t$ social signals depends only on the network type of $k$
of an agent $i$. This weight is equal to
\[
\rho^{2}\sum_{j,j'}\frac{W_{ij,t+2}}{1-w_{i,t+2}^{s}}\cdot W_{jj',t+1}\cdot(1-w_{j',t}^{s}).
\]
As in the $\nu_{t+1}$ case, we can approximate $(1-w_{j',t}^{s})$
as a linear function of $\kappa_{j',t}$ up to $O_p(1/n^{2})$ terms.
Because the number of paths to each agent $j'$ though a given type
and the weights on each such path cannot vary too much within types,
the same argument shows that this sum depends only on $k$, up to error terms that are $O_p(1/n^{2})$. Thus Step 5(b) is complete.

\textbf{Step 5(c): }The final step is to verify that we can take $\psi_{kk'}$
and $\psi_{k}$ to be smaller than $\psi$. It is sufficient to show
that the variance $\text{Var}(r_{i,t+2}-\theta_{t+1})$ of each social
signal about $\theta_{t+1}$ is at most $\psi$. The proof is the
same as in Step 2(b).

\includepdf[pages=-]{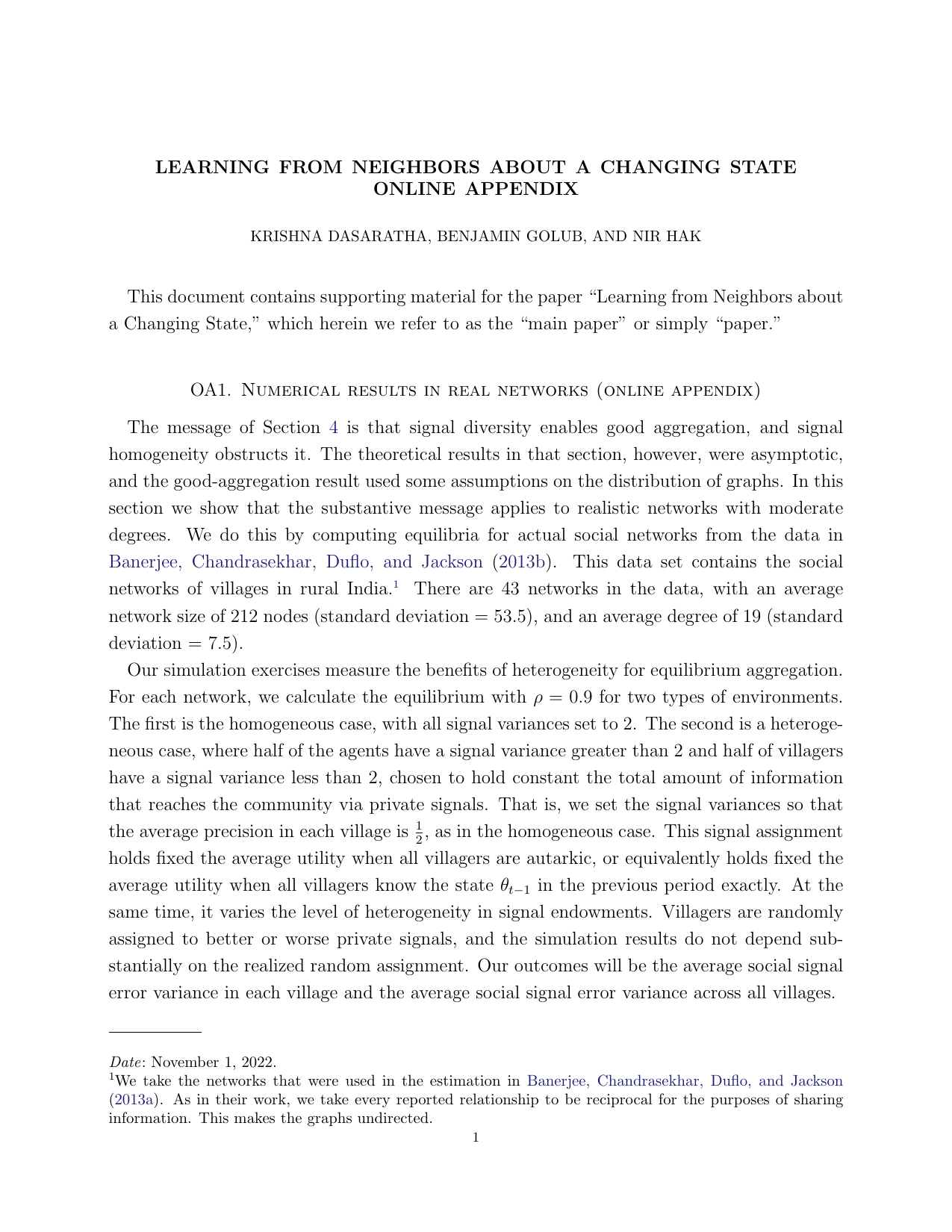}

\end{document}